\newif\iflongversion
\crefname{observation}{observation}{observations}
\Crefname{observation}{Observation}{Observations}
\title{On the Complexity of Recoverable Robust Optimization in the Polynomial Hierarchy} %TODO Please ad
\titlerunning{Complexity of Recoverable Robust Optimization} %TODO optional, please use if title is longer than one line
\author{Christoph Grüne}{Department of Computer Science, RWTH Aachen University, Germany}{gruene@algo.rwth-aachen.de}{https://orcid.org/0000-0002-7789-8870}{Funded by the German Research Foundation (DFG) – GRK 2236/2.}
\author{Lasse Wulf}{Section of Algorithms, Logic and Graphs, Technical University of Denmark, Kongens Lyngby, Denmark}{lawu@dtu.dk}{https://orcid.org/0000-0001-7139-4092}{Funded by the Carlsberg Foundation CF21-0302 ``Graph Algorithms with Geometric Applications''.}
\authorrunning{C. Grüne and L. Wulf} %TODO mandatory. First: Use abbreviated first/middle names. Second (only in severe cases): Use first author plus 'et al.'
\keywords{Complexity, Robust Optimization, Recoverable Robust Optimization, Two-Stage Problems, Polynomial Hierarchy, Sigma 2, Sigma 3} %TODO mandatory; please add comma-separated list of keywords
\definecolor{darkgreen}{RGB}{0,128,0}
\definecolor{darkred}{RGB}{128,0,0}
\newcommand{\R}{\mathbb{R}}
\newcommand{\N}{\mathbb{N}}
\newcommand{\Z}{\mathbb{Z}}
\newcommand{\I}{\mathcal{I}}
\newcommand{\U}{\mathcal{U}}
\newcommand{\F}{\mathcal{F}}
\newcommand{\sol}{\mathcal{S}}
\newcommand{\powerset}[1]{2^{#1}}
\newcommand{\set}[1]{\{ #1 \}}
\newcommand{\fromto}[2]{\set{#1, \ldots, #2}}
\DeclareMathOperator{\poly}{poly}
\DeclareMathOperator{\dist}{dist}
\newcommand{\leqSSP}{\leq_\text{SSP}}
\newcommand{\bin}{\set{0,1}}
\newcommand{\NP}{\textsl{NP}}
\begin{document}

\maketitle
\begin{abstract}
    Recoverable robust optimization is a popular multi-stage approach, in which it is possible to adjust a first-stage solution after the uncertain cost scenario is revealed. 
We consider recoverable robust optimization in combination with discrete budgeted uncertainty. 
In this setting, it seems plausible that many problems become $\Sigma^p_3$-complete and therefore it is impossible to find compact IP formulations of them (unless the unlikely conjecture {\it NP} $= \Sigma^p_3$ holds).
Even though this seems plausible, few concrete results of this kind are known.
In this paper, we fill that gap of knowledge.
We consider recoverable robust optimization for the nominal problems of Sat, 3Sat, vertex cover, dominating set, set cover, hitting set, feedback vertex set, feedback arc set, uncapacitated facility location, $p$-center, $p$-median, independent set, clique, subset sum, knapsack, partition, scheduling, Hamiltonian path/cycle (directed/undirected), TSP, $k$-disjoint path ($k \geq 2$), and Steiner tree. 
We show that for each of these problems, and for each of three widely used distance measures, the recoverable robust problem becomes $\Sigma^p_3$-complete.
Concretely, we show that all these problems share a certain abstract property and prove that this property implies that their robust recoverable counterpart is $\Sigma^p_3$-complete. 
This reveals the insight that all the above problems are $\Sigma^p_3$-complete \enquote{for the same reason}. Our result extends a recent framework by Grüne and Wulf.    
\end{abstract}

\newpage

\section{Introduction}
Real-world decision makers are faced with a large degree of uncertainty when making important decisions in economics, planning, and operations research. 
The successful area of \emph{Robust optimization} \cite{DBLP:books/degruyter/Ben-TalGN09,robook,kouvelis2013robust} has been developed as one possible way to deal with these uncertainties. 
However, sometimes the classical robust optimization approach has turned out to be too conservative. For this reason, \emph{Recoverable robust optimization} was introduced by Liebchen, Lübbecke, Möhring and Stiller \cite{DBLP:series/lncs/LiebchenLMS09}. 
Initially motivated by train scheduling problems, it has since then found wide-spread application in practice and in the analysis of standard optimization problems.
For instance, it has been successfully applied to combinatorial problems such as
s-t-path \cite{busing2011recoverable,DBLP:journals/corr/abs-2403-20000,DBLP:journals/corr/abs-2401-05715},
matching \cite{DBLP:journals/networks/DouradoMPRPA15},
scheduling \cite{DBLP:journals/dam/BoldG22},
spanning tree \cite{DBLP:journals/jco/HradovichKZ17,DBLP:journals/ol/HradovichKZ17},
knapsack \cite{DBLP:journals/ejco/BusingGKK19,DBLP:conf/inoc/BusingKK11,DBLP:journals/ol/BusingKK11,DBLP:journals/dam/LachmannLW21}, and
TSP \cite{DBLP:journals/ol/ChasseinG16,DBLP:journals/eor/GoerigkLW22a}.
Recoverable robust optimization follows a two-step robust optimization approach.
In contrast to classic robust optimization, where a first-stage solution cannot be changed, 
in recoverable robust optimization the decision maker is allowed to incorporate a limited recovery action after the underlying uncertainty is revealed. 
Mathematically, this is described by the expression
\[
\min_{S_1} \quad \max_{c_2 \in C} \quad \min_{S_2, \dist(S_1, S_2) \leq \kappa} c_1(S_1) + c_2(S_2).
\]
Here, $S_1$ denotes the first-stage solution, $S_2$ denotes the second-stage solution, $C$ denotes the set of uncertain scenarios, and {\it dist} is some distance function. 
A more formal definition will be given later.

In this paper, we make the standard assumption that the underlying uncertainty for the robust problem is given as a discrete budgeted uncertainty set, also denoted as discrete $\Gamma$-uncertainty \cite{DBLP:journals/mp/BertsimasS03}.

\textbf{The study of $\Sigma^p_3$-complete problems.}
Recoverable robust problems are described by min-max-min formulations. 
From a complexity-theoretic point of view, such min-max-min problems are often even harder than NP-complete, 
namely they are often complete for the third stage of the so-called \emph{polynomial hierarchy} \cite{DBLP:journals/tcs/Stockmeyer76}. 
A problem complete for the $k$-th stage of the hierarchy is called $\Sigma^p_k$-complete. (In this paper, we are concerned mostly with the case of $k=3$). 
The theoretical study of $\Sigma^p_k$-complete problems is important: 
If a problem is found to be $\Sigma^p_3$-complete, it means that, under some basic complexity-theoretic assumptions\footnote{More specifically, we assume here that $\Sigma^p_3 \neq \text{NP}$. 
This is a similar assumption to the famous $\text{P} \neq \text{NP}$ assumption, and it is believed to be similarly unlikely. 
However, the true status of the conjecture is not known. 
More details can be found in \cite{DBLP:journals/4or/Woeginger21}.}, 
it is not possible to find a integer programming formulation of the problem of polynomial size \cite{DBLP:journals/4or/Woeginger21} (also called a \emph{compact} model). 
This means that no matter how cleverly a decision maker tries to design their integer programming model, 
it must inherently have a huge number of constraints and/or variables, and may be much harder to solve than even NP-complete problems.

Even though this fact makes the study of $\Sigma^p_3$-complete problems compelling, surprisingly few results relevant to the area of min-max-min combinatorial optimization were known until recently. 
While the usual approach to prove $\Sigma^p_k$-completeness (or NP-completeness) is to formulate a new proof for each single problem,
a recent paper by Grüne \& Wulf \cite{DBLP:journals/corr/abs-2311-10540}, extending earlier ideas by Johannes \cite{johannes2011new} breaks with this approach. 
Instead, it is shown that there exists a large 'base list' of problems (called \emph{SSP-NP-complete} problems in \cite{DBLP:journals/corr/abs-2311-10540}), including many classic problems like independent set, vertex cover, knapsack, TSP, etc.
Grüne and Wulf show that for each problem from the base list, some corresponding min-max version is $\Sigma^p_2$-complete, and some corresponding min-max-min version is $\Sigma^p_3$-complete. 
This approach has three main advantages: 
A1.) It uncovers a large number of previously unknown $\Sigma^p_k$-complete problems. 
A2.) It reveals the theoretically interesting insight, that for all these problems the $\Sigma^p_k$-completeness follows from essentially the same argument. 
A3.) It can simplify future proofs, since heuristically it seems to be true that for a new problem it is often easier to show that the nominal problem belongs to the list of SSP-NP-complete problems, than to find a $\Sigma^p_k$-completeness proof from scratch.

The main goal of the current paper is to extend the framework of Grüne \& Wulf \cite{DBLP:journals/corr/abs-2311-10540} also to the setting of recoverable optimization, 
since this setting was not considered in the original paper.
It turns out that compared to \cite{DBLP:journals/corr/abs-2311-10540} more complicated assumptions are needed. We introduce a set of sufficient conditions for some nominal problem, which imply that the recoverable robust version of that problem is $\Sigma^p_3$-complete.

\textbf{Our results.}
We consider recoverable robust optimization for the following nominal problems: Sat, 3Sat, vertex cover, dominating set, set cover, hitting set, feedback vertex set, feedback arc set, uncapacitated facility location, $p$-center, $p$-median, independent set, clique, subset sum, knapsack, partition, scheduling, Hamiltonian path/cycle (directed/undirected), TSP, $k$-disjoint path ($k \geq 2$), and Steiner tree.
In addition we consider the three most popular  distance measures \emph{dist} used in recoverable robust optimization: The $\kappa$-addition distance, the $\kappa$-deletion distance, and the Hamming distance. (A formal definition of these distance measures is given in \cref{subsec:distance-mesures})

We show that for every combination of the above problems with any of the three distance measures, the recoverable robust problem (with discrete budgeted uncertainty) is $\Sigma^p_3$-complete.
More generally, we identify an abstract property of all our studied problems, and prove that this abstract property already implies that the recoverable robust problem becomes $\Sigma^p_3$-complete. This answers a question asked by Goerigk, Lendl and Wulf \cite{DBLP:journals/dam/GoerigkLW24}. As a consequence, the advantages A1--A3 as explained above also apply to recoverable robust problems.
We remark that $\Sigma^p_3$-completeness was already known in the case of clique/independent set, TSP or shortest path combined with the Hamming distance \cite{DBLP:journals/dam/GoerigkLW24}. It was also already known for shortest path in combination with all three distance measures \cite{DBLP:journals/corr/abs-2403-20000} and for several optimization problems \cite{DBLP:conf/latin/Grune24} for so-called xor-dependencies and $\Gamma$-set scenarios.
Hence our work is an extension of these results.

\subsection{Related Work}
Recoverable robustness concepts were analyzed for a variety of different standard optimization problems.
Büsing \cite{busing2011recoverable} analyzed the recoverable shortest path problem with discrete budgeted uncertainty, in which adding at most $\kappa$ elements to the second stage solution are allowed.
This analysis was lately continued by Jackiewicz, Kasperski and Zieliński \cite{DBLP:journals/corr/abs-2401-05715} for several different graph classes and for interval budgeted uncertainty.
Furthermore, recoverable robust knapsack was analyzed by Büsing, Koster and Kutschka \cite{DBLP:journals/ol/BusingKK11} for discrete scenarios and in which adding at most $\kappa$ elements and deleting at most $\ell$ elements are allowed.
Büsing, Koster and Kutschka \cite{DBLP:conf/inoc/BusingKK11} also analyzed the $\Gamma$-scenario case (discrete budgeted uncertainty) while allowing at most $\ell$ elements to be deleted.
This work was further continued by Büsing, Goderbauer, Koster and Kutschka \cite{DBLP:journals/ejco/BusingGKK19}.
Further classical optimization problems that were studied in the recoverable robustness context are matching by Dourado, Meierling, Penso, Rautenbach, Protti and de Almeida \cite{DBLP:journals/networks/DouradoMPRPA15} and spanning tree under interval cost uncertainty by Hradovich, Kasperski, Zieliński \cite{DBLP:journals/jco/HradovichKZ17,DBLP:journals/ol/HradovichKZ17}.
Additionally, Lendl, Peis, and Timmermans \cite{DBLP:journals/mp/LendlPT22} examined matroidal problems.
One variant of independent set for recoverable robustness with a commitment property was analyzed by Hommelsheim, Megow, Muluk and Peis \cite{DBLP:journals/corr/abs-2306-08546}.
Beside these problems, recoverable robust selection was studied by Kasperski and Zieliński \cite{DBLP:journals/dam/KasperskiZ17}, Chassein, Goerigk, Kasperski, and Zieliński \cite{DBLP:journals/eor/ChasseinGKZ18}, and Goerigk, Lendl and Wulf \cite{DBLP:journals/eor/GoerigkLW22a}.
Moreover, Lachmann, Lendl and Woeginger \cite{DBLP:journals/dam/LachmannLW21} developed a linear time algorithm for the recoverable $\Gamma$-robust knapsack problem.
The recoverable robust assignment problem was also investigated by Fischer, Hartmann, Lendl and Woeginger \cite{DBLP:conf/iwpec/0001HLW21}.

Approximation results for recoverable robust problems were also achieved.
For example, the recoverable traveling salesman problem was studied by Chassein and Goerigk \cite{DBLP:journals/ol/ChasseinG16}, as well as Goerigk, Lendl, Wulf \cite{DBLP:journals/eor/GoerigkLW22a}, where the latter showed a 4-approximation algorithm.
Beyond that Bold and Goerigk \cite{DBLP:journals/dam/BoldG22} presented a 2-approximation for the recoverable robust single machine scheduling problem under interval uncertainty.

Closely related to our work are the complexity studies by Goerigk, Lendl and Wulf \cite{DBLP:journals/dam/GoerigkLW24} who analyze the problems independent set, traveling salesman and vertex cover and obtain $\Sigma^p_3$-completeness for the three problems.
At the same time, Grüne \cite{DBLP:conf/latin/Grune24} introduced a gadget reduction framework to derive $\Sigma^p_3$-completeness for various different optimization problems by reusing already known reductions for recoverable robust problems with so-called xor-dependency scenarios and $\Gamma$-set scenarios.
Additionally, Jackiewicz, Kasperski and Zieliński \cite{DBLP:journals/corr/abs-2403-20000} show that the shortest path problem with discrete budgeted interval uncertainty is $\Sigma^p_3$-complete.

% !TEX root = ./main.tex
\section{Preliminaries}
\label{sec:prelim}
A \emph{language} is a set $L\subseteq \bin^*$.
A language $L$ is contained in $\Sigma^p_k$ iff there exists some polynomial-time computable function $V$ (verifier), and $m_1,m_2,\ldots, m_k = \poly(|w|)$ such that for all $w \in \set{0,1}^*$
\[
    w \in L \ \Leftrightarrow \ \exists y_1 \in \set{0,1}^{m_1} \ \forall y_2 \in \set{0,1}^{m_2} \ldots \ Q y_k \in \set{0,1}^{m_k}: V(w,y_1,y_2,\ldots,y_k) = 1,
\]
where $Q = \exists$, if $k$ is odd, and $Q = \forall$, if $k$ even.
An introduction to the polynomial hierarchy and the classes $\Sigma^p_k$ can be found in the book by Papadimitriou \cite{DBLP:books/daglib/0072413} or in the article by Jeroslow \cite{DBLP:journals/mp/Jeroslow85}.
An introduction specifically in the context of bilevel optimization can be found in the article of Woeginger \cite{DBLP:journals/4or/Woeginger21}.
A \emph{many-one-reduction} or \emph{Karp-reduction} from a language $L$ to a language $L'$ is a map $f : \bin^* \to \bin^*$ such that $w \in L$ iff $f(w) \in L'$ for all $w \in \bin^*$. 
A language $L$ is $\Sigma^p_k$-hard, if every $L' \in \Sigma^p_k$ can be reduced to $L$ with a polynomial-time many-one reduction. If $L$ is both $\Sigma^p_k$-hard and contained in $\Sigma^p_k$, it is $\Sigma^p_k$-complete.

For some cost function $c : U \to \R$, and some subset $U' \subseteq U$, we define the cost of the subset $U'$ as $c(U') := \sum_{u \in U'} c(u)$. For a map $f : A \to B$ and some subset $A' \subseteq A$, we define the image of the subset $A'$ as $f(A') = \set{f(a) : a \in A'}$. 

% !TEX root = ./main.tex
\section{Framework}
\label{sec:framework}

Because this works builds upon the framework developed by Grüne and Wulf, it is essential to reintroduce its key concepts.
A more comprehensive explanation of these concepts and their underlying motivation can be found in the original paper \cite{DBLP:journals/corr/abs-2311-10540}.
Grüne and Wulf begin by providing a precise definition of their primary focus: linear optimization problems, or in short LOP problems.
An example of an LOP problem is vertex cover.

\begin{definition}[Linear Optimization Problem, from \cite{DBLP:journals/corr/abs-2311-10540}]
\label{def:LOSPP}
    A linear optimization problem (or in short LOP problem)  $\Pi$ is a tuple $(\I, \U, \F, d, t)$, such that
    \begin{itemize}
        \item $\I \subseteq \{0,1\}^*$ is a language. We call $\I$ the set of instances of $\Pi$.
        \item For each instance $I \in \I$, there is some
        \begin{itemize}
            \item set $\U(I)$ which we call the universe associated to the instance $I$.
            \item set $\F(I) \subseteq \powerset{\U(I)}$ that we call the feasible solution set associated to the instance $I$. 
            \item function $d^{(I)}: \U(I) \rightarrow \Z$ mapping each universe element $e$ to its costs $d^{(I)}(e)$.
            \item threshold $t^{(I)} \in \Z$. 
        \end{itemize}
    \end{itemize}
    For $I \in \I$, we define the solution set $\sol(I) := \set{S \in \F(I) : d^{(I)}(S) \leq t^{(I)}}$ as the set of feasible solutions below the cost threshold. 
    The instance $I$ is a Yes-instance, if and only if $\sol(I) \neq \emptyset$.
    We assume (for LOP problems in NP) that it can be checked in polynomial time in $|I|$ whether some proposed set $F \subseteq \U(I)$ is feasible.
\end{definition}

\begin{description}
    \item[]\textsc{Vertex Cover}\hfill\\
    \textbf{Instances:} Graph $G = (V, E)$, number $k \in \N$.\\
    \textbf{Universe:} Vertex set $V =: \U$.\\
    \textbf{Feasible solution set:} The set of all vertex covers of $G$.\\
    \textbf{Solution set:} The set of all vertex covers of $G$ of size at most $k$.
\end{description}

It turns out that often times the mathematical discussion is a lot clearer, when one omits the concepts $\F, d^{(I)}$, and $t^{(I)}$, since for the abstract proof of the theorems only $\I, \U, \sol$ are important. This leads to the following abstraction from the concept of an LOP problem:

\begin{definition}[Subset Search Problem (SSP), from \cite{DBLP:journals/corr/abs-2311-10540}]
\label{def:SSP}
A subset search problem (or short SSP problem) $\Pi$ is a tuple $(\I, \U, \sol)$, such that
\begin{itemize}
    \item $\I \subseteq \set{0,1}^*$ is a language. We call $\I$ the set of instances of $\Pi$. 
    \item For each instance $I \in \I$, there is some set $\U(I)$ which we call the universe associated to the instance $I$. 
    \item For each instance $I \in \I$, there is some (potentially empty) set $\sol(I)\subseteq \powerset{\U(I)}$ which we call the solution set associated to the instance $I$.
\end{itemize}
\end{definition}

An instance of an SSP problem is referred to as a yes-instance if $\sol(I) \neq \emptyset$.
Any LOP problem is transformable into an SSP problem by defining $\sol(I) := \set{S \in \F(I) : d^{(I)}(S) \leq t^{(I)}}$.
We refer to this as the \emph{SSP problem derived from an LOP problem}.
There are problems that are more naturally represented as SSP problems, rather than as LOP problems.
For instance, the problem \textsc{Satisfiability} can be derived as an SSP problem as follows.

\begin{description}
    \item[]\textsc{Satisfiability}\hfill\\
    \textbf{Instances:} Literal set $L = \fromto{\ell_1}{\ell_n} \cup \fromto{\overline \ell_1}{\overline \ell_n}$, clause set $C = \fromto{C_1}{C_m}$ such that $C_j \subseteq L$ for all $j \in \fromto{1}{m}.$\\
    \textbf{Universe:} $L =: \U$.\\
    \textbf{Solution set:} The set of all subsets $L' \subseteq \U$ of the literals such that for all $i \in \fromto{1}{n}$ we have $|L' \cap \set{\ell_i, \overline \ell_i}| = 1$, and such that $|L' \cap C_j| \geq 1$ for all clauses $C_j \in C$.
\end{description}

Grüne and Wulf introduce a novel type of reduction, termed SSP reduction.
In essence, a typical polynomial-time reduction from a problem $\Pi$ to another problem $\Pi'$ possesses the SSP property if it includes an additional injective mapping $f$ that embeds the universe $\U$ elements of $\Pi$ into the universe $\U'$ of $\Pi'$.
This embedding allows $\Pi$ to be viewed as a 'subinstance' of $\Pi'$ while preserving the topology of solutions within the subset induced by the image of $f$.
More precisely, we interpret $W$ as the subinstance of $\Pi$ within the instance of $\Pi'$ and we require the following two conditions to be met:
\begin{enumerate}
    \item For every solution $S'$ of $\Pi'$, the set $f^{-1}(S' \cap W)$ is a solution of $\Pi$.
    \item For every solution $S$ of $\Pi$, the set $f(S)$ is a partial solution of $\Pi'$ and can be extended to a full solution using elements that are not in $W$.
\end{enumerate}
These two conditions are encapsulated in the single equation (\ref{eq:SSP}).
We indicate that such a reduction exists by $\Pi \leq_\text{SSP} \Pi'$.
For a more intuitive explanation and an example that shows \textsc{3Sat} $\leq_\text{SSP}$ \textsc{Vertex Cover}, we direct readers to \cite{DBLP:journals/corr/abs-2311-10540}.

\begin{definition}[SSP Reduction, from \cite{DBLP:journals/corr/abs-2311-10540}]
\label{def:ssp-reduction}
    Let $\Pi = (\I,\U,\sol)$ and $\Pi' = (\I',\U',\sol')$ be two SSP problems. We say that there is an SSP reduction from $\Pi$ to $\Pi'$, and write $\Pi \leqSSP \Pi'$, if
    \begin{itemize}
        \item There exists a function $g : \{0,1\}^* \to \{0,1\}^*$ computable in polynomial time in the input size $|I|$, such that $I$ is a Yes-instance iff $g(I)$ is a Yes-instance (i.e. $\sol(I) \neq \emptyset$ iff $\sol'(g(I)) \neq \emptyset$).
        \item There exist functions $(f_I)_{I \in \I}$ computable in polynomial time in $|I|$ such that for all instances $I \in \I$, we have that $f_I : \U(I) \to \U'(g(I))$ is an injective function mapping from the universe of the instance $I$ to the universe of the instance $g(I)$ such that 
        \begin{equation}
            \set{f_I(S) : S \in \sol(I) } = \set{S' \cap f_I(\U(I)) : S' \in  \sol'(g(I))}. \label{eq:SSP}
        \end{equation}

    \end{itemize}
\end{definition}

In \cite{DBLP:journals/corr/abs-2311-10540}, it is demonstrated that SSP reductions are transitive, that is $\Pi_1 \leqSSP \Pi_2$ and $\Pi_2 \leqSSP \Pi_3$ imply $\Pi_1 \leqSSP \Pi_3$.
We denote the class of SSP-NP-complete problems by SSP-NPc.
It includes all SSP problems $\Pi$ that are polynomial-time verifiable and there is an SSP reduction from \textsc{Satisfiability}, i.e. $\textsc{Satisfiability} \leq_\text{SSP} \Pi$.
The key insight in \cite{DBLP:journals/corr/abs-2311-10540} is that many classical problems fall within the class SSP-NPc.
Additionally, this insight can be leveraged to show that their corresponding min-max version are $\Sigma^p_2$-complete.

% !TEX root = ./main.tex
\section{Recoverable Robust Problems}
In this section, we consider recoverable robust optimization problems. We show that the recoverable robust optimization problem is $\Sigma^p_3$-complete for the following nominal problems:
satisfiability,
3-satisfiability,
vertex cover,
dominating set,
set cover,
hitting set,
feedback vertex set,
feedback arc set,
uncapacitated facility location,
p-center,
p-median,
independent set,
clique,
subset sum,
knapsack,
partition,
scheduling,
(un)directed Hamiltonian path,
(un)directed Hamiltonian cycle,
traveling salesman,
two directed disjoint path,
k directed disjoint path,
and Steiner tree.

Recoverable robust optimization problems are defined as follows: 
We are given some instance $I$ of a linear optimization problem (like the shortest path problem, the traveling salesman problem, etc.), and are faced with an uncertain future. 
The goal is to find a feasible solution $S_1$ for the first stage (called here-and-now decision) such that after the reveal of the uncertainty we can find a feasible solution $S_2$ in the second stage (called wait-and-see decision) 
such that $S_1$ and $S_2$ are not far away from each other according to some distance measure. 
Formally we require $\dist(S_1, S_2) \leq \kappa$, where $\dist(S_1, S_2)$ is some abstract distance function (for example the Hamming distance $|S_1 \vartriangle S_2|$).
The cost of the solution is given by $c_1(S_1) + c_2(S_2)$.
Here, $c_1$ is a fixed cost function not affected by uncertainty, called the setup costs, and $c_2$ is affected by uncertainty. 
More specifically, we assume that $c_2 \in C_\Gamma$, that is, $c_2$ is affected by discrete budgeted uncertainty\footnote{Usually in literature, this set is denoted by $\U_\Gamma$, however, we have already used the letter $\U$ in this paper.} $C_\Gamma$. 
Precisely, given $\Gamma \in \N$ and upper and lower bounds $\underline c(u) \leq \overline c(u)$ for all elements in the universe, the set $C_\Gamma$ contains all cost functions such that at most $\Gamma$ elements $u \in \U(I)$ have costs of $\overline c(u)$, while all other have $\underline c(u)$:
    \[
        C_\Gamma := \{c_2 \mid \forall u \in \U(I) : c_2(u) = \underline c(u) + \delta_u (\overline c(u) - \underline c(u)), \ \delta_u \in \{0,1\}, \sum_{u \in \U(I)} \delta_u \leq \Gamma\}
    \]

This leads to the following abstract definition of a recoverable robust problem.
We remark that the formal definition in some sense \enquote{disregards} the cost functions $d$ and the cost threshold $t$ of the original LOP problem.
This is intentional because in the new instance of the recoverable robust problem, it becomes necessary to substitute the old function $d$ and the old threshold $t$ by new ones.
However, these concepts are necessary to correctly understand the proofs.

\begin{definition}[Recoverable Robust Problem]
\label{def:RecoverableProblem}
    Let an LOP problem $\Pi = (\I, \U, \F, d, t)$ and a distance measure $\text{dist}: 2^\U \times 2^\U \rightarrow \mathbb{R}_{\geq 0}$ be given.
    The recoverable robust problem associated to $\Pi$ is denoted by $\textsc{RR-}\Pi$ and defined as follows:
    The input is an
    instance $I \in \I$ together with
    three cost functions
    $c_1 : \U(I) \rightarrow \Z, \ \underline c: \U(I) \rightarrow \Z$ and
    $\overline c: \U(I) \rightarrow \Z$ and
    a cost threshold $t_{\textit{RR}} \in \Z$,
    an uncertainty parameter $\Gamma \in \N_0$ and
    a recoverability parameter $\kappa \in \N_0$.
    The question is whether
    \begin{align}
        \min_{S_1 \in \F(I)} \max_{c_2 \in C_\Gamma} \min_{\substack{S_2 \in \F(I) \\ \dist(S_1, S_2)  \leq \kappa}} \ c_1(S_1) + c_2(S_2) \leq t_{\textit{RR}}. \label{eq:recoverable-definition}
    \end{align}
\end{definition}

\textbf{Example.}
Let $\Pi =$ TSP.
According to \Cref{apx:ssp-reduction:tsp} the TSP is encoded as LOP problem the following way: An instance is given by $I = (G, d, t)$, where $G = (V, E)$ is a complete undirected graph, $d : E \to \N_0$ are the edge costs and $t$ is the cost threshold.
The decision problem of TSP asks if there is a tour $T \subseteq E$ with cost $d(T) \leq t$.
The universe is $\U(I) = E$. The set $\F(I) \subseteq 2^E$ is the set of all feasible tours (including those of cost greater than $t$). The set $\sol(I)$ is the set of all tours of cost at most $t$.
To turn the TSP into a recoverable robust problem, we \enquote{forget} about the cost function $d$ and the threshold $t$.
Given $c_1, \underline c, \overline c, \Gamma, \kappa, t_{\textit{RR}}$, the decision problem associated to the recoverable robust TSP is to decide whether \Cref{eq:recoverable-definition} holds.

We remark that this definition does not include all SSP problems but only LOP problems. 
This is for the reason that recoverable robust optimization is usually considered only for linear optimization problems, which distinguish between feasible solutions $\F(I)$ and optimal solutions $\sol(I)$.
In contrast, recoverable robust optimization is usually not considered for pure feasibility problems, like \textsc{Sat}, which are modelled as SSP problems in our framework. 
However, it is still possible to define a variant of recoverable robust optimization that is applicable to \emph{all} SSP problems. 
Indeed, it turns out that such a definition becomes helpful for our proof. 
Hence, in \cref{subsec:comb-recov-definition}, we introduce a corresponding definition (and also show $\Sigma^p_3$-completeness of several problems with this new definition).

\subsection{Distance Measures}
\label{subsec:distance-mesures}
Recoverable robust problems require a distance measure to model the constraint that the solutions remain close to each other.
However, there are numerous problems that also have a different structures such that it is not possible to define distance measures for all possible types of recoverable robust problems.
Because we restrict ourselves to a certain kind of problems, namely subset search problems, we consider distance measures defined over sets.
This allows us to determine a specific enough definition of distance measure to be meaningful.
Furthermore, our definition needs to be general enough to include the distance measures used in the literature.
Among those are the following:
\begin{itemize}
    \item the \textit{$\kappa$-addition} or simply \textit{$\kappa$-distance measure} is used in \cite{DBLP:journals/networks/Busing12, DBLP:journals/jco/HradovichKZ17}: $\dist(A_1, A_2) = |A_2 \setminus A_1|$
    \item the \textit{$\kappa$-deletion distance} is used in \cite{DBLP:conf/inoc/BusingKK11}: $\dist(A_1, A_2) = |A_1 \setminus A_2|$
    \item the \textit{Hamming distance} measure is used in \cite{DBLP:journals/networks/DouradoMPRPA15,DBLP:journals/dam/GoerigkLW24,DBLP:conf/latin/Grune24}: $\dist(A_1, A_2) = |A_1 \vartriangle A_2|$
\end{itemize}

\begin{definition}[Distance Measure]
    Let $U$ be a set and $A_1, A_2 \subseteq U$ subsets.
    A distance measure on set $U$ is a map $\text{dist}_U: 2^U \times 2^U \rightarrow \R_{\geq 0}$ that adheres to the following properties
    \begin{itemize}
        \item computable in polynomial time %\lasse{need this for containment in $\Sigma^p_3$}
        \item invariant on injective mappings $f: U \rightarrow U'$, i.e. $\text{dist}_U(A_1, A_2) = \text{dist}_{U'}(f(A_1), f(A_2))$,
        \item invariant on union, i.e. $\text{dist}_U(A_1, A_2) = \text{dist}_U(A_1 \cup \{x\}, A_2 \cup \{x\})$ for $x \in U \setminus (A_1 \cup A_2)$.
        \item $\text{dist}_U(A_1, A_1) = 0$
    \end{itemize}
    If $U$ is clear from the context, we omit the subscript.
\end{definition}

One can easily verify that all of the distance measures from above fulfill these criteria.

\subsection{Containment in $\Sigma^p_3$}
Recoverable robust problems can also be understood as a three-stage two-player game of an $\exists$-player playing against an adversary (the $\forall$-player).
The $\exists$-player controls both $\min$ operators and is able to choose the solutions $S_1$ and $S_2$.
On the other hand, the adversary controls the $\max$ operator and is able to choose the uncertainty scenario, i.e. the cost function $c_2$ from the set $C_\Gamma$.
Thus, it is possible to reformulate the question to
\[
    \exists S_1 \subseteq \U(I) : \forall c_2 \in C_\Gamma : \exists S_2 \subseteq \U(I) : S_1, S_2 \in \F(I), c_1(S_1) + c_2(S_2) \leq t \ \text{and} \ \dist(S_1, S_2)  \leq \kappa.
\]

With the game-theoretical perspective, it is intuitive to see the containment in $\Sigma^p_3$ for all problems that have a polynomial-time verifier.

\begin{theorem}\label{thm:hddr-containment}
    If $\Pi = (\I, \U, \sol)$ is an LOP problem in NP, then \textsc{RR-$\Pi$} is in $\Sigma^P_3$.
\end{theorem}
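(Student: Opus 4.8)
The plan is to exploit the logical reformulation of \cref{eq:recoverable-definition} already stated above, namely that $I$ is a Yes-instance of \textsc{RR-$\Pi$} if and only if
\[
\exists S_1 \subseteq \U(I) : \forall c_2 \in C_\Gamma : \exists S_2 \subseteq \U(I) : S_1, S_2 \in \F(I), \ c_1(S_1) + c_2(S_2) \leq t_{\textit{RR}}, \ \dist(S_1, S_2) \leq \kappa.
\]
This is syntactically a $\exists\forall\exists$ statement, which matches the structure of the definition of $\Sigma^p_3$. To place \textsc{RR-$\Pi$} in $\Sigma^p_3$, I therefore only need to (i) encode each of the three quantified objects $S_1$, $c_2$, $S_2$ by a bit string of length polynomial in $|I|$, and (ii) exhibit a verifier $V$ that, given $I$ and these three strings, decides the innermost predicate in polynomial time.

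For (i), both $S_1$ and $S_2$ are subsets of the universe $\U(I)$, so each is encoded by its characteristic vector in $\bin^{|\U(I)|}$; since $|\U(I)| = \poly(|I|)$ for an LOP problem in NP, these strings have polynomial length. By the definition of $C_\Gamma$, every scenario $c_2 \in C_\Gamma$ is determined by the vector $(\delta_u)_{u \in \U(I)} \in \bin^{|\U(I)|}$ of indicator bits, so the adversary's move is likewise encoded by a polynomial-length string $y_2$.

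For (ii), the verifier first reads $y_2$ as a vector $(\delta_u)$ and checks the budget constraint $\sum_{u \in \U(I)} \delta_u \leq \Gamma$, a single comparison of polynomial-size numbers. Given a valid $y_2$, it reconstructs $c_2(u) = \underline c(u) + \delta_u(\overline c(u) - \underline c(u))$ and then checks the conditions of the innermost predicate: feasibility $S_1 \in \F(I)$ and $S_2 \in \F(I)$, which runs in polynomial time by the standing assumption that $\Pi$ is in NP; the distance bound $\dist(S_1, S_2) \leq \kappa$, which is polynomial because every admissible distance measure is required to be polynomial-time computable; and the cost bound $c_1(S_1) + c_2(S_2) \leq t_{\textit{RR}}$, which amounts to summing polynomially many integers and one comparison. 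Each check is polynomial, so $V$ runs in polynomial time.

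The one point requiring care is the interface between the \emph{restricted} universal quantifier $\forall c_2 \in C_\Gamma$ and the \emph{unrestricted} quantifier $\forall y_2 \in \bin^{m_2}$ of the $\Sigma^p_3$ definition: a string $y_2$ whose $\delta$-vector violates $\sum_{u \in \U(I)} \delta_u \leq \Gamma$ encodes no scenario in $C_\Gamma$ and must not sabotage the formula. I handle this by declaring $V = 1$ whenever $y_2$ fails the budget test, so that for such $y_2$ the inner existential is vacuously satisfied; since $C_\Gamma$ is nonempty (the all-$\underline c$ scenario with $\delta \equiv 0$ always lies in $C_\Gamma$ because $\Gamma \geq 0$), the universal quantifier still has force over exactly the intended set. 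With this convention the displayed $\exists\forall\exists$ statement becomes literally an instance of the $\Sigma^p_3$ predicate, establishing $\textsc{RR-}\Pi \in \Sigma^p_3$. I expect this bookkeeping around invalid adversary encodings to be the only genuine subtlety; everything else is a routine size-and-time audit.
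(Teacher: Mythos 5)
Your proposal is correct and follows essentially the same route as the paper's proof: encode $S_1$, the scenario $c_2$, and $S_2$ as polynomial-length certificates for the three quantifier blocks and verify feasibility, the distance bound, and the cost bound in polynomial time. Your explicit handling of $y_2$-strings that violate the budget constraint (declaring $V=1$ for them) is a small bookkeeping refinement the paper leaves implicit, but it does not change the argument.
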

\begin{proof}
    We provide a polynomial-time algorithm $V$ that verifies a specific solution for the three quantifiers $y_1, y_2, y_3$ of polynomial size for instance $I$ such that
    \[
        I \in L \Leftrightarrow \exists y_1 \in \{0,1\}^{m_1} \forall y_2 \in \{0,1\}^{m_2} y_3 \in \{0,1\}^{m_3} : V(I, y_1, y_2, y_3) = 1.
    \]
    With the first $\exists$-quantified $y_1$, we encode the first solution $S_1 \subseteq \U(I)$.
    Because the universe is part of the input of the problem, this is at most linear in the input.
    Next, we encode all cost functions $c_2 \in C_\Gamma$ with the $\forall$-quantified $y_2$.
    For this, we encode which of the at most $\Gamma$ elements are chosen to have costs $\overline c$.
    This is at most linear in the input size.
    At last, we encode the second solution $S_2 \subseteq \U(I)$ with the help of the second $\exists$-quantified $y_3$.
    Again, this is at most linear in the input size as before.
    Now, we have to check whether $\dist(S_1, S_2) \leq \kappa$, $S_1, S_2 \in \F(I)$ and $c_1(S_1) + c_2(S_2) \leq t$.
    All of these checks can be done in polynomial time, where $S_1, S_2 \in \F(I)$ and $c_1(S_1) + c_2(S_2) \leq t$ can be checked in polynomial time because $\Pi \in \NP$.
    Furthermore, $\dist(S_1, S_2) \leq \kappa$ can be checked by our assumption that the abstract distance function is computable in polynomial time.
    It follows that $\textsc{RR-}\Pi$ is in $\Sigma^p_3$.
\end{proof}

\subsection{Combinatorial Recoverable Robust Problems}
\label{subsec:comb-recov-definition}

To show the hardness of recoverable robust problems, we introduce a new problem which we call the \emph{combinatorial version} of recoverable robust problems.
There are two differences to \Cref{def:RecoverableProblem}: In this new problem, the cost function is substituted by a set $B$ of so-called \emph{blockable elements} (this can be interpreted as the case where cost coefficients $\underline c, \overline c$ are restricted to come from $\set{0, \infty}$). Furthermore, in this new definition we substitute $\F(I)$ by $\sol(I)$.
As we show later, the hardness of the new combinatorial version also implies the hardness of the cost version.
Finally, as explained above, \Cref{def:CombinatorialRecoverableRobustProblem} applies to all SSP problems. 

\begin{definition}[Combinatorial Recoverable Robust Problem]\label{def:CombinatorialRecoverableRobustProblem}
    Let an SSP problem $\Pi = (\I, \U, \sol)$ be given.
    The combinatorial recoverable robust problem associated to $\Pi$ is denoted by \textsc{Comb. RR-$\Pi$} and defined as follows:
    The input is an instance $I \in \I$ together with a set $B \subseteq \U(I)$, an uncertainty parameter $\Gamma \in \N_0$, and a recoverability parameter $\kappa \in \N_0$.
    The question is whether
    \begin{align*}
        \exists S_1 \subseteq \U(I) : \forall B' \subseteq B \ \textit{with} \ |B'| \leq \Gamma : \exists S_2 \subseteq \U(I) : \qquad\qquad\qquad\qquad\qquad\qquad\qquad\\ 
        S_1, S_2 \in \sol(I), S_2 \cap B' = \emptyset \ \text{and} \ \dist(S_1, S_2) \leq \kappa.
    \end{align*}
\end{definition}

\section{An SSP Framework for Recoverable Robust Problems}

In this section, we prove our main result, i.e. we introduce and aprove a sufficient condition for nominal problems such that the corresponding recoverable robust problem is $\Sigma_3^p$-hard.
We first explain the rough idea.

We want to show that recoverable problems are $\Sigma^p_3$-hard, so we have to choose some $\Sigma^p_3$-hard problem from which we start our reduction.
The canonical $\Sigma^p_3$-complete problem is $\exists\forall\exists$-\textsc{Satisfiability} \cite{DBLP:journals/tcs/Stockmeyer76}.
Intuitively, this satisfiability problem can be understood as a game between Alice and Bob.
Alice first chooses a variable assignment on the variables of $X$, then Bob chooses an assignment of the variables $Y$, and then again Alice selects an assignment on the variables $Z$, where Alice wishes to satisfy formula $\varphi(X,Y,Z)$, and Bob wishes the opposite.
On the other hand, in our target problem RR-$\Pi$, any solution consists of two solutions of the nominal problem $\Pi$, the first stage solution $S_1$ and the second stage solution $S_2$.
The main challenge of our reduction is, that we have to model three variable sets $X, Y, Z$ of the formula $\exists X \forall Y \exists Z \varphi(X,Y,Z)$ in the order of quantification into the problem RR-$\Pi$.
Accordingly, both solutions $S_1, S_2$ need to include an assignment to all variables from $X, Y, Z$ that satisfy $\varphi(X,Y,Z)$ in agreement with the nominal \textsc{Sat} problem.
However, note that there is a difference in the solution structure of both problems, which we need to address.
If we model Alice's decision on the $X$-variables in $\exists\forall\exists$-\textsc{Sat} in first stage solution $S_1$ of RR-$\Pi$, we need to make sure that Alice is not able to reassign the the $X$-variables in the second stage solution $S_2$ of RR-$\Pi$.
Otherwise, Alice does not adhere to the order of quantification.

This problem can be circumvented by adding an additional property to the SSP reduction.
We demand that for a given set of literals $L_b$ the distance of two solutions $S_1$ and $S_2$ is small if and only if the partial solution of $S_1$ and $S_2$ restricted to $L_b$ is the same.
One can interpret this construction as a gadget that \emph{blows up} the literals of $L_b$ in comparison to all the other literals.
Accordingly, we call the literals of $L_b$ \emph{blow-up literals} and the corresponding reduction \emph{blow-up SSP reduction}.
Then, we can set $L_b = L_X$, where $L_X$ is the set of literals corresponding to variables of $X$.
Therefore, Alice is not able to change the assignment of the variable set $X$ from the first stage solution $S_1$ to the second stage solution $S_2$ and thus has to adhere to the order of quantification.
Then, it is also possible to reuse the injective correspondence function $f$ of the SSP reduction to set the blockable elements $B$ and to show the correctness of the reduction by using the elementary correspondence between the \textsc{Sat} solution and the solution of $\Pi$.
Formally, we define \emph{blow-up SSP reductions} as follows.

\begin{definition}[Blow-Up SSP Reduction]
\label{def:blow-up}
    Let $\dist$ be a distance measure.
    Let $\textsc{3Sat} = (\I,\U,\sol)$ with $\U = L = \{\ell_1, \ldots, \ell_n\} \cup \{\overline \ell_1, \ldots, \overline \ell_n\}$ and $\Pi' = (\I',\U',\sol')$ be two SSP problems.
    Then, \textsc{3Sat} is SSP blow-up reducible to $\Pi'$ with respect to $\dist(\cdot, \cdot)$ if for all sets $L_b \subseteq L$ fulfilling $\ell_i \in L_b \leftrightarrow \overline \ell_i \in L_b$ there exists an SSP reduction $(g, f_I)$ with the following property:
    There is a polynomial time computable blow-up factor $\beta_I \in \N$ corresponding to each instance $I$ such that for all solutions $S_1, S_2 \in \sol'(g(I))$ of $g(I)$:
    \[
        f(L_b) \cap S_1 = f(L_b) \cap S_2 \Leftrightarrow \dist_{\U'(g(I))}(S_1, S_2) \leq \beta_I.
    \]
\end{definition}

A concrete example of a blow-up reduction is given in \cref{sec:blow-up-ssp-reduction:reductions}.

\textit{Remark 1.}
Note that we define blow-up reductions to start at \textsc{3Sat}.
Because \textsc{3Sat} is one of the 'first' NP-complete problems, many reductions start at \textsc{3Sat} or have a short reduction chain from \textsc{3Sat}.
Thus, it is convenient to reuse these reductions or respectively reduction chains to construct blow-up reductions.
As we will show later in \Cref{sec:blow-up-ssp-reduction:reductions,sec:blow-up-preserving-ssp-reduction} this is not really a restriction.

\textit{Remark 2.}
Blow-up SSP reductions are not transitive.
(This is because there is no restriction on newly introduced elements in $\Pi'$ and how they behave corresponding to solutions in $\Pi'$.
Therefore, the distance between solutions in $\Pi'$ cannot be related to the distance between the corresponding solutions in $\Pi$.)
We will tackle this problem in \Cref{sec:blow-up-preserving-ssp-reduction}, in which we will show that we only need to find a blow-up reduction once at the beginning of a reduction chain and we can use so-called blow-up preserving reductions to append further problems to the reduction chain starting at \textsc{3Sat}.
This blow-up preserving reduction is typically much easier to find than a new blow-up reduction.

With everything in place, we show that blow-up SSP reductions enable us to show the $\Sigma^p_3$-hardness of recoverable robust problems as long as the nominal problem $\Pi$ is SSP-NP-hard.
In particular, our main theorem now states, that our newly introduced blow-up reductions indeed are a sufficient criterion for $\Sigma^p_3$-hardness.

\begin{theorem}\label{thm:rr-hardness}
    For all SSP-NP-hard problems $\Pi$ that are blow-up SSP reducible from \textsc{3Sat}, the combinatorial recoverable robust variant \textsc{Comb. RR-$\Pi$} is $\Sigma^p_3$-hard.
\end{theorem}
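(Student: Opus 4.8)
The plan is to reduce from the canonical $\Sigma^p_3$-complete problem $\exists\forall\exists$-\textsc{3Sat} \cite{DBLP:journals/tcs/Stockmeyer76}: given $\psi = \exists X\,\forall Y\,\exists Z\;\varphi(X,Y,Z)$ with $\varphi$ in 3CNF, decide whether the $\exists$-player wins. Since the theorem only asserts $\Sigma^p_3$-hardness, it suffices to build a polynomial-time many-one reduction to \textsc{Comb. RR-$\Pi$}; containment need not be reproven here. The whole construction is routed through the hypothesized blow-up reduction. First I would encode $\psi$ as a plain \textsc{3Sat} instance $\Phi$ over a literal set $L$ that decomposes into the blocks $L_X, L_Y, L_Z$ (plus auxiliary literals), whose SSP-solutions are exactly the satisfying assignments of $\varphi$. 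Then I apply the given blow-up SSP reduction to $\Phi$ with $L_b := L_X$ (a legal choice, since $L_X$ is closed under complementation), obtaining a $\Pi$-instance $I := g(\Phi)$, the injection $f$, and the blow-up factor $\beta_\Phi$. The resulting \textsc{Comb. RR-$\Pi$} instance is $(I, B, \Gamma, \kappa)$ with $B := f(L_Y)$, $\Gamma := |Y|$, and $\kappa := \beta_\Phi$.

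The correctness argument matches the three quantifiers of $\psi$ to the quantifiers $\exists S_1\,\forall B'\,\exists S_2$ of \Cref{def:CombinatorialRecoverableRobustProblem}. By the defining SSP equation \Cref{eq:SSP}, any $S_1, S_2 \in \sol(I)$ restrict on $f(L)$ to images of satisfying assignments $\alpha_1, \alpha_2$ of $\Phi$, so every feasible pair encodes two satisfying assignments of $\varphi$. The blow-up property of \Cref{def:blow-up}, instantiated with $L_b = L_X$ and $\kappa = \beta_\Phi$, gives $\dist(S_1,S_2)\le\kappa$ iff $f(L_X)\cap S_1 = f(L_X)\cap S_2$, i.e.\ iff $\alpha_1$ and $\alpha_2$ agree on $X$; this is exactly the gadget that forces the $\exists$-player to commit to a single $X$-assignment $X^*$ in the first stage and forbids revising it in the second stage, thereby modelling the outer $\exists X$. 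Finally, for a blocking set $B'\subseteq f(L_Y)$ the constraint $S_2\cap B' = \emptyset$ removes one polarity of the corresponding $Y$-literals from $\alpha_2$, and the residual freedom in $L_Z$ realises the inner $\exists Z$. Chaining these three correspondences should yield that $\psi$ is true iff the constructed instance is a yes-instance.

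The heart of the proof, and the step I expect to be the main obstacle, is to make the adversary's moves $B'$ correspond \emph{exactly} to the $\forall Y$ quantifier. Two things must be arranged. First, every $Y$-assignment must be forceable: setting $y_i$ false should be achievable by blocking $f(y_i)$ and setting $y_i$ true by blocking $f(\overline y_i)$, each spending one unit of the budget $\Gamma = |Y|$. Second, and more delicately, the adversary must gain nothing from \emph{degenerate} moves. Undershooting the budget or leaving a variable unblocked only hands the $\exists$-player more freedom and is harmless in the forward direction; but \emph{over-blocking} a variable, i.e.\ putting both $f(y_i)$ and $f(\overline y_i)$ into $B'$, naively leaves no satisfying assignment disjoint from $B'$, since every assignment picks exactly one literal of each pair. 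Then $S_2$ would be infeasible and the adversary would win a branch corresponding to no legitimate $Y$-assignment, which breaks the forward direction even when $\psi$ is a yes-instance. The fix has to live in the constructed formula $\Phi$: I would augment each $Y$-variable with a small gadget so that an over-blocked variable still admits a feasible second-stage completion corresponding to a genuine truth value, while single blocking remains a hard constraint, and then prove a normalisation lemma showing that the adversary's best responses are precisely the proper one-literal-per-variable blockings. This is the one place where the abstract SSP and blow-up machinery does not do the work for us and an explicit combinatorial construction is required; the remaining verifications (polynomial size and running time, and the three correspondences above) are routine.
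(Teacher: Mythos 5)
Your reduction has the right overall shape (route everything through the blow-up reduction with $L_b = L_X$, use the blow-up factor as $\kappa$, use blocked elements to simulate the $\forall$-quantifier), and you have correctly located the one genuinely hard step. But that step is left open, and the fix you sketch does not work as described. You set $B := f(L_Y)$ with both polarities blockable, so the adversary can place both $f(y_i)$ and $f(\overline y_i)$ into $B'$. You propose to repair this by a gadget inside $\Phi$ so that an over-blocked variable ``still admits a feasible second-stage completion corresponding to a genuine truth value.'' This cannot be done within the SSP encoding of \textsc{3Sat}: every solution $L'$ satisfies $|L' \cap \{y_i, \overline y_i\}| = 1$ by definition, so if both images are blocked there is no feasible $S_2$ at all, and no additional clauses can change that without destroying the solution-set correspondence that the SSP property of the blow-up reduction relies on. The known repair is different: split each $y_i$ into two variables $y_i^t, y_i^f$ so that the blocker only ever forces variables to $0$ (a legitimate, if ``dishonest,'' assignment), and add a pigeonhole-based cheat-detection gadget that lets the $\exists$-player win whenever the budget-$\Gamma$ blocker is not exactly one-per-pair. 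Constructing and verifying that gadget is a nontrivial piece of work that your proposal defers to a ``normalisation lemma'' it never proves.

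The paper avoids this entirely by not reducing from $\exists\forall\exists$-\textsc{3Sat} directly. It instead reduces from \textsc{R-Adj-Sat} (Goerigk--Lendl--Wulf), a $\Sigma^p_3$-complete problem in which the adversary's move is \emph{already} a blocker $Y' \subseteq Y$ of size at most $\Gamma$ acting only on positive literals and forcing those variables to $0$. Consequently the paper takes $B := f_\varphi(Y)$ (one polarity per variable only), $\Gamma' := \Gamma$, and the blocker semantics of \textsc{Comb.\ RR-$\Pi$} matches the adversary's move in \textsc{R-Adj-Sat} verbatim --- no over-blocking is possible and no normalisation is needed. The variable-splitting and cheat-detection machinery you would have to build lives in the cited hardness proof of \textsc{R-Adj-Sat}, not in this theorem. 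So either complete your proof by actually carrying out that construction (essentially re-deriving the hardness of \textsc{R-Adj-Sat} inline), or switch the source problem as the paper does; as written, the argument has a gap precisely where you predicted it would.
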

\begin{proof}

For the proof of the main theorem, we require some definitions. 
Let $X = \fromto{x_1}{x_n}$ be a set of binary variables and $L := \fromto{x_1}{x_n} \cup \fromto{\overline x_1}{\overline x_n} = X \cup \overline X$ be the set of corresponding literals.
An \emph{assignment} is a subset $A \subseteq X \cup \overline X$ such that $|A \cap \set{x_i, \overline x_i} | = 1$ for all $i=1,\dots,n$.
We say that the assignment $A$ assigns the value $1$ to variable $x_i$, if $x_i \in A$, and $A$ assigns $0$ to $x_i$, if $\overline x_i \in A$. We remark that this notation for an assignment is non-standard, but it turns out to be convenient in the context of our framework. 
%For a subset $Y \subseteq X$ of the variables, the set $A \cap (Y \cup \overline Y)$ is the restriction of assignment $A$ to the set $Y$. 
A \textsc{Sat}-formula $\varphi$ is in 3CNF, if it is a conjunction of clauses, and every clause has exactly three literals. We denote by $\varphi(A) \in \set{0,1}$ the evaluation of the formula $\varphi$ under the assignment $A$.
In the following, we often consider the case, where the set of variables is partitioned into three disjoint sets $X, Y, Z$, and denote this case by writing $\varphi(X, Y, Z)$.

For the $\Sigma^p_3$-hardness proof, we require a known $\Sigma^p_3$-complete problem to reduce from. 
It turns out that instead of reducing from the classic problem  $\exists\forall\exists$-\textsc{Satisfiability}, it is more convenient to base our proof on \textsc{Robust Adjustable Sat} with budgeted uncertainty (in short, \textsc{R-Adj-Sat}), introduced by Goerigk, Lendl and Wulf \cite{DBLP:journals/dam/GoerigkLW24}, which we reformulate to adhere to the notation of this paper:

\begin{quote}
Problem $\textsc{R-Adj-Sat}$
\\
\textbf{Instance:}  A \textsc{Sat}-formula $\varphi(X,Y,Z)$ in 3CNF. 
A partition of the set of variables into three disjoint parts $X \cup Y \cup Z$ with $|X| = |Y| = |Z|$. An integer $\Gamma \geq 0$.  
\\
\textbf{Question:} Is there an assignment $A_X \subseteq X \cup \overline X$ such that for all subsets $Y' \subseteq Y$ of size $|Y'| \leq \Gamma$,
there exist assignments $A_Y \subseteq Y \cup \overline Y$ and $A_Z \subseteq Z \cup \overline Z$ which sets all variables in $Y'$ to 0, 
i.e. $A_Y \cap Y' = \emptyset$ and $\varphi(A_X, A_Y, A_Z) = 1$?
\end{quote}

The problem \textsc{R-Adj-Sat} is best understood as a game between Alice and Bob: 
Alice wants to satisfy the formula, Bob wants to achieve the opposite. 
Alice initially selects an assignment $A_X$ of the variables in $X$. 
Afterwards, Bob is allowed to select a blocker $Y' \subseteq Y$ of size at most $\Gamma$ and force all these variables to be \enquote{0}. 
Finally, Alice is allowed to assign values to all variables in $Y \cup Z$ that have not yet been assigned. In \cite{DBLP:journals/dam/GoerigkLW24} it is shown that \textsc{R-Adj-Sat} is $\Sigma^p_3$-complete. 
In order to provide additional insight for the reader, we quickly sketch the main argument in \cite{DBLP:journals/dam/GoerigkLW24}: 
The idea is to reduce from the classic problem $\exists\forall\exists$-\textsc{Satisfiability}. Let $\exists A_X \forall A_Y \exists A_Z \varphi(A_X, A_Y, A_Z)$ be an instance of this problem.
How do we transform it into a \textsc{R-Adj-Sat} problem?
It is intuitive, that in such a reduction, Alice's rule of choosing assignments $A_X$ and $A_Z$  should stay roughly the same. 
However, how do we express the choice of $A_Y$ in terms of a blocker $Y'$ that is forced to 0?
The idea is to split the variable $y_i \in Y$ into two new variables $y_i^t, y_i^f$ for every $i = 1,\dots,n$. 
We say that Bob plays honestly if for all $i = 1, \dots, n$ we have $|\set{y^t_i, y^f_i} \cap Y'| = 1$.
Such an honest choice of $Y'$ naturally encodes an assignment $A_Y$.
It turns out that, using the pigeonhole principle, and the budget constraint $|Y'| \leq \Gamma$, one can enrich the formula $\varphi$ with a \enquote{cheat-detection-gadget}.
This gadget can be used by Alice to win the game trivially if and only if Bob cheats.
Hence the modified game of $\textsc{R-Adj-Sat}$ is equivalent to the original instance of $\exists\forall\exists$-\textsc{Satisfiability}.

We prove $\Sigma^p_3$-hardness by providing a reduction from $\textsc{R-Adj-Sat}$ to \textsc{Comb. RR-$\Pi$}.
This reduction will crucially rely on the blow-up SSP reduction from \textsc{3Sat} to $\Pi$, which we assumed to exist.
More formally, let $\Pi = (\I_\Pi, \U_\Pi, \sol_\Pi)$ be an SSP problem, such that there is a blow-up reduction $(g, (f_I)_{I \in \I_\textsc{Sat}}, (\beta_I)_{I \in \I_\textsc{Sat}})$ from \textsc{3Sat} to $\Pi$. 
Recall that by the definition of a blow-up reduction this means the following:
Let the term $\I_\textsc{Sat}$ denote the set of all possible \textsc{3Sat} instances, i.e.\ the set of all 3CNF formulas. For each fixed \textsc{3Sat} instance $I_\textsc{Sat} \in \I_\textsc{Sat}$, its universe $\U(I_\textsc{Sat})$ is the set of positive and negative literals of variables appearing in that formula. 
By assumption, for every fixed \textsc{3Sat} instance $I_\textsc{Sat} \in \I_\textsc{Sat}$ and every set $L_b \subseteq \U(I_\textsc{Sat})$ (with $\ell \in L_b$ iff $\overline \ell \in L_b$) there exists an equivalent instance $I_\Pi = g(I_\textsc{Sat})$ of $\Pi$. 
Furthermore, associated to each fixed \textsc{3Sat} instance $I_\textsc{Sat} \in \I_\textsc{Sat}$ we have a function $f_{I_\textsc{Sat}}$, 
which describes in which way the universe of $I_\textsc{Sat}$ can be injectively embedded into the universe of the equivalent instance $I_\Pi$ such that the SSP property is true. 
Finally, there is a tight correspondence between solutions of $I_\Pi$ having distance at most $\beta_{I_\textsc{Sat}}$, and (the pre-image of) the solutions agreeing on the set $L_b$.
For brevity of notation, in the following we often write $\U(I_\Pi)$ instead of $\U_\Pi(I_\Pi)$ and $\sol(I_\Pi)$ instead of $\sol_\Pi(I_\Pi)$ to denote the solution set/universe associated to instance $I_\Pi$, if the correct subscript is clear from the context.

    We now turn our attention to the $\Sigma^p_3$-hardness proof.
    As explained above, the problem $\textsc{R-Adj-Sat}$ is $\Sigma^p_3$-complete.
    Let an $\textsc{R-Adj-Sat}$ instance $I_\text{RAS}$ be given.
    Our goal is to transform this instance into a new instance $I_\text{RR$\Pi$}$ of \textsc{Comb. RR-$\Pi$} such that
    \[
            I_\text{RAS} \ \text{is a Yes-instance of} \ \textsc{R-Adj-Sat} \Leftrightarrow I_\text{RR$\Pi$} \ \text{is a Yes-instance of Comb. RR-}\Pi.
    \]

    Clearly if we can achieve this goal we are done. By definition of the problem \textsc{R-Adj-Sat}, the instance $I_\text{RAS}$ consists out of the following parts:
    \[
        I_\text{RAS} = (\varphi(X, Y, Z), \Gamma),
    \]
    where $\varphi \in \I_\textsc{Sat}$ is a 3CNF-formula, whose variables are partitioned into three parts $X, Y, Z$ of equal size $n := |X| = |Y| = |Z|$, and $\Gamma \in \N_0$. 
    Let the corresponding literal sets $X,Y,Z$ be denoted by $L_X := X \cup \overline X$, $L_Y := Y \cup \overline Y,$ and $L_Z := Z \cup \overline Z$.
    Note that the universe associated to the \textsc{3Sat}-instance $\varphi$ is $L_X \cup L_Y \cup L_Z$. 
    Let us denote this fact by writing $\U(\varphi) = L_X \cup L_Y \cup L_Z$.

    Given the instance $I_\text{RAS}$ as input, we describe in the following how to construct the instance $I_\text{RR$\Pi$}$  of \textsc{Comb. RR-$\Pi$}, which consists out of the following parts:
    \[
        I_\text{RR$\Pi$} = (I_\Pi, B, \Gamma', \kappa).
    \]
    Here, $I_\Pi$ is an instance of the nominal problem $\Pi$, 
    $B \subseteq \U(I_\Pi)$ is the set of blockable elements, $\Gamma' \in \N_0$ is the uncertainty budget, and $\kappa \in \N_0$ is the recoverability parameter.
    
    Before we give a formal description of each of $I_\Pi, B, \Gamma', \kappa$, we explain the rough idea: 
    In particular, what is the right choice for the instance $I_\Pi$?
    The answer is provided by the blow-up reduction. 
    Note that this reduction can take in any 3CNF formula and a subset $L_b$ of the literals (with the property that $\ell \in L_b$ iff $\overline \ell \in L_b$ for all literals $\ell$) 
    and produce an instance $I$ of $\Pi$. Note that the set $L_b$ gets \enquote{blown up} by the reduction. We claim that $L_b = L_X$ is a good choice. 
    Indeed, consider the following formal definition of $I_\text{RR$\Pi$}$.

    \begin{description}
        \item[Description of the Instance $I_\text{RR$\Pi$}$.]
            Let $(g, (f_\varphi)_{\varphi \in \I_\textsc{Sat}}, (\beta_\varphi)_{\varphi \in \I_\textsc{Sat}})$ be the blow-up reduction from \textsc{3Sat} to $\Pi$.
            We let $I_\Pi = g(\varphi)$ be the instance produced by the blow-up reduction applied to the formula $\varphi$ and the literal set $L_b := L_X$.
            We remark that the following holds by the definition of the blow-up reduction:
            A blow up-reduction in particular is also an SSP reduction. 
            Hence the function
            $f_{\varphi} : \U(\varphi) \to \U(I_\Pi)$ maps the literals of $\varphi$ injectively to elements of the new universe $\U(I_\Pi)$. We can hence define the set of blockable elements
            \[ B: = f_\varphi(Y),\]
            where $Y \subseteq \U(\varphi)$ describes the positive literals in $L_Y$ (recall that $L_Y = Y \cup \overline Y$).

            Furthermore, note that by definition of a blow-up reduction, there exists some $\beta_\varphi \in \N$, computable in poly-time, with the property that for all $S_1, S_2 \in \sol(I_\Pi)$:
            \[
                \dist(S_1, S_2) \leq \beta_\varphi \Leftrightarrow S_1 \cap f_\varphi(L_X) = S_2 \cap f_\varphi(L_X).
            \]
            In other words, the new instance $I_\Pi$ has the property that two solutions of it have small distance if and only if they agree on $f_\varphi(L_X)$, 
            i.e.\ they agree on those universe elements of $\U(I_\Pi)$ that correspond to $L_X$.
            We finally define
            \[ \kappa := \beta_\varphi \text{ and } \Gamma' := \Gamma. \]
            This completes the description of the instance $I_\text{RR$\Pi$} = (I, B, \Gamma', \kappa)$.
        \item[Correctness]
        We start the correctness proof by showing
        \begin{align*}
            &I_\text{RAS} \ \text{is a Yes-instance} \Rightarrow I_\text{RR$\Pi$} = (I_\Pi, B, \Gamma, \kappa) \ \text{is a Yes-instance}.
        \end{align*}

        In this case, let $A_X \subseteq X \cup \overline X$ be an assignment of the variables $X$ with the property that 
        \[ \forall Y' \subseteq Y, |Y'| \leq \Gamma: \ \exists A_Y, A_Z : \ A_Y \cap Y' = \emptyset \text{ and } \varphi(A_X, A_Y, A_Z) = 1. \] 
        Such an $A_X$ exists by assumption that $I_\text{RAS}$ is a Yes-instance.
        Now, let $Y' \subseteq Y$ be an arbitrary subset of $Y$ with $|Y'| \leq \Gamma$.
        Then, it is possible to choose assignments $A_Y, A_Z$ of $Y, Z$ such that $A_Y \cap Y = \emptyset$ and $\varphi(A_X, A_Y, A_Z) = 1$.
        We consider such an assignment $A_1 := A_X \cup A_Y \cup A_Z$.
        Note that if we interpret $\varphi$ as \textsc{3Sat} instance, $A_1 \subseteq \U(\varphi)$ and even $A_1 \in \sol(\varphi)$, due to $\varphi(A_1) = 1$.
        Since the blow-up reduction is in particular an SSP reduction, we can make use of the central property of SSP reductions. 
        The property implies that the \textsc{3Sat} solution $A_1$ can be \enquote{lifted} to a solution of $\Pi$. 
        More precisely, there exists a solution $S_1 \in \sol(I_\Pi)$, such that 
        \[ S_1 \cap f_\varphi(\U(\varphi)) = f_\varphi(A_1). \]
        (Intuitively, the solution $S_1 \in \sol(I_\Pi)$ of the nominal problem instance $I_\Pi$ corresponds to the solution $A_1 \in \sol(\varphi)$ of the \textsc{3Sat} instance $\varphi$ when restricted to the injective embedded \enquote{sub-instance} of \textsc{3Sat} inside $\Pi$.)
        
        \quad We claim that $S_1$ is a solution for $I_\text{RR$\Pi$}$.
        To prove this claim we have to show that for all blockers $B' \subseteq B$ with $|B'| \leq \Gamma' = \Gamma$, there exists a solution $S_2 \in \sol(I_\Pi)$ with $S_2 \cap B' = \emptyset$ and $\dist(S_1, S_2) \leq \kappa$.
        Indeed, such a solution $S_2$ exists for all choices of blockers $B'$. 
        This can be seen by applying the following construction: Repeat exactly the same construction as for $S_1$, except that the set $Y'$ is not chosen arbitrarily.
        Instead, choose $Y'$ by letting $Y' := f^{-1}_\varphi(B')$.
        Note that by the definition of $B$, the set $Y'$ is well-defined, i.e. $Y' \subseteq Y$, $|Y'| \leq \Gamma$, and $f_\varphi(Y') = B'$.
        Repeating the same construction, assignment $A_X$ stays the same, but we obtain different assignments $A'_Y, A'_Z$ with $A'_Y \cap Y' = \emptyset$ and $\varphi(A_X, A'_Y, A'_Z) = 1$.
        We let $A_2 := A_X \cup A'_Y \cup A'_Z$.
        Again, by the SSP property there exists a solution $S_2 \in \sol(I_\Pi)$ such that $S_2 \cap f_\varphi(\U(\varphi)) = f_\varphi(A_2)$.
        Therefore we have
        \[ 
            \emptyset = A_2 \cap Y' = f_\varphi(A_2) \cap f_\varphi(Y') = (S_2 \cap f_\varphi(\U(\varphi))) \cap B' = S_2 \cap B'. 
        \]
        Note that $A_X$ is the same in both constructions.
        Therefore $S_1 \cap f_\varphi(L_X) = S_2 \cap f_\varphi(L_X)$, and hence by the definition of a blow-up reduction, we have $\dist(S_1, S_2) \leq \beta_\varphi = \kappa$.
        In total, we have shown that $S_1$ is a solution of $\Pi$ such that for every blocker $B' \subseteq B$, with $|B'| \leq \Gamma$, there exists a good solution $S_2$. 
        This shows that $I_\text{RR$\Pi$}$ is  yes-instance. 
        
        It remains to consider the reverse direction:
        \begin{align*}
             &I_\text{RR$\Pi$} = (I, B, \Gamma', \kappa) \text{ is a Yes-instance} \Rightarrow I_\text{RAS} \text{ is a Yes-instance}
        \end{align*}
        The strategy is very similar, with the difference that we use the SSP property and the blow-up property in the reverse direction. Consider some solution $S_1 \in \sol(I_\Pi)$ which satisfies
        \[
            \forall B' \subseteq B, |B'| \leq \Gamma': \exists S_2 \in \sol(I_\Pi), S_2 \cap B' = \emptyset: \dist(S_1, S_2) \leq \kappa.
        \]
        We have to show that there exists an assignment $A_X \subseteq X \cup \overline X$ such that for all subsets 
        $Y' \subseteq Y$ there are assignments $A_Y, A_Z$ with $A_Y \cap Y = \emptyset$ and $\varphi(A_X, A_Y, A_Z) = 1$.
        Indeed, to define $A_X$, consider solution $S_1$.
        By the SSP property, since $S_1 \in \sol(I_\Pi)$, 
        the set $f^{-1}_\varphi(S_1) \subseteq \U(\varphi)$ is a set of literals which satisfies $\varphi$. 
        We let $A_X$ be the restriction of that satisfying assignment to the variables in $X$. 
        More formally, we let $A_1 := f^{-1}_\varphi(S_1)$. 
        
        \quad We claim that this assignment $A_X$ proves that $I_\text{RAS}$ is a yes-instance.
        Indeed, let some set $Y' \subseteq Y$, with $|Y'| \leq \Gamma$, be given.
        We define the blocker to be $B' :=  f_\varphi(Y')$.
        Observe that $B' \subseteq f_\varphi(Y) = B$, and $|B'| \leq \Gamma$. 
        Therefore there exists $S_2 \in \sol(I)$ such that $S_2 \cap B' = \emptyset$ and $\dist(S_1, S_2) \leq \kappa$.
        We can again interpret the solution $S_2$ using the SSP property as an assignment $A_2 := f^{-1}_\varphi(S_2)$. By the SSP property, this assignment satisfies $\varphi$, that is $A_2 \in \sol(\varphi)$.
        Since $\dist(S_1, S_2) \leq \kappa = \beta_\varphi$, we have $S_1 \cap f_\varphi(Y) = S_2 \cap f_\varphi(Y)$ by the property of a blow-up reduction. This in turn implies $A_2 \cap (X \cup \overline X) = A_X$.
        Finally, we have $\emptyset = S_2 \cap B' = f^{-1}_\varphi(S_2) \cap f^{-1}_\varphi(B') = A_2 \cap Y'$. 
        This shows that $I_\text{RAS}$ is a yes-instance, and concludes the proof.

        \item[Polynomial Time.]
            The instance $I_\text{RR$\Pi$} = (I_\Pi, B, \Gamma', \kappa)$ can indeed be constructed in polynomial time. 
            The nominal instance $I_\Pi$ can be computed efficiently, since the map $g$ in the SSP reduction can be computed efficiently. 
            The set $B$ can be computed, since the map $f_\varphi$ in the SSP reduction can be computed efficiently. The number $\kappa = \beta_{\varphi}$ can be computed efficiently by definition of a blow-up reduction. The number $\Gamma$ remains the same.
    \end{description}
\end{proof}

We have shown that the combinatorial version is indeed $\Sigma^p_3$-hard as long as the nominal SSP problem $\Pi$ is SSP-NP-hard with a blow-up SSP reduction from \textsc{Satisifiability}.
This however does not directly imply that the linear optimization version \textsc{RR}-$\Pi$ is also $\Sigma^p_3$-hard.
We remedy this problem with the following short reduction that simulates the set of blockable elements in \textsc{Comb. RR}-$\Pi$ with the cost functions in \textsc{RR}-$\Pi$.

\begin{theorem}
	For all LOP problems $\Pi \in$ NP with the property that the SSP problem derived from it is blow-up SSP reducible from \textsc{3Sat}, the recoverable robust version RR-$\Pi$ is $\Sigma^p_3$-complete.
\end{theorem}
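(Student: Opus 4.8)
The plan is to prove the two halves of $\Sigma^p_3$-completeness separately. Containment in $\Sigma^p_3$ is immediate: since $\Pi \in \NP$, \Cref{thm:hddr-containment} already yields $\textsc{RR-}\Pi \in \Sigma^p_3$. Hence all the work lies in $\Sigma^p_3$-hardness, and for this I would route through the combinatorial version. First I would observe that the hypothesis already supplies everything needed to invoke \Cref{thm:rr-hardness}: a blow-up SSP reduction from \textsc{3Sat} is in particular an ordinary SSP reduction, so $\textsc{3Sat} \leqSSP \Pi$; combining this with the standard fact $\textsc{Satisfiability} \leqSSP \textsc{3Sat}$ and transitivity of SSP reductions gives $\textsc{Satisfiability} \leqSSP \Pi$, i.e.\ $\Pi$ is SSP-NP-hard. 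Thus \Cref{thm:rr-hardness} applies and $\textsc{Comb. RR-}\Pi$ is $\Sigma^p_3$-hard, and it remains only to give a polynomial-time many-one reduction $\textsc{Comb. RR-}\Pi \leq \textsc{RR-}\Pi$.

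For this reduction I would translate blockable elements into cost coefficients, exactly along the ``$\set{0,\infty}$'' interpretation indicated in \Cref{def:CombinatorialRecoverableRobustProblem}. Given a $\textsc{Comb. RR-}\Pi$ instance $(I, B, \Gamma, \kappa)$, where $I$ carries the underlying cost function $d$ and threshold $t$ of the LOP problem, I keep the same nominal instance $I$, the same $\Gamma$ and $\kappa$, and define fresh cost data by $c_1 := d$, $\underline c := d$, and $\overline c(u) := d(u) + M$ for $u \in B$ while $\overline c(u) := d(u)$ for $u \notin B$, where $M$ is a large integer exceeding every attainable value of $d$ on the universe. Finally I set $t_{\textit{RR}} := 2t$. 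The idea is that an adversarial choice $c_2 \in C_\Gamma$ amounts to selecting at most $\Gamma$ elements of $B$ to raise to cost $\overline c$ (elements outside $B$ are wasted, so the relevant part is $B' := \set{u \in B : c_2(u) = \overline c(u)}$ with $|B'| \leq \Gamma$); any second-stage solution meeting $B'$ then incurs cost at least $M$ and is ruled out by the threshold, which simulates the constraint $S_2 \cap B' = \emptyset$.

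The correctness proof splits as usual. In the forward direction I take a witness $S_1 \in \sol(I)$ of the combinatorial instance; for each adversarial $c_2$ (i.e.\ each $B'$) the combinatorial guarantee supplies $S_2 \in \sol(I)$ with $S_2 \cap B' = \emptyset$ and $\dist(S_1, S_2) \leq \kappa$, whence $c_1(S_1) + c_2(S_2) = d(S_1) + d(S_2) \leq t + t = t_{\textit{RR}}$, so the same $S_1$ witnesses a Yes-instance of $\textsc{RR-}\Pi$. The reverse direction reads the same objects back off an $\textsc{RR-}\Pi$ witness: the penalty $M$ forces $S_2 \cap B' = \emptyset$ for the $B'$ induced by each $c_2$, and the threshold $d(S_1) + d(S_2) \leq 2t$ should force both $S_1, S_2 \in \sol(I)$.

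I expect this last implication to be the main obstacle. The reason is that $\textsc{RR-}\Pi$ measures quality through the single summed budget $d(S_1) + d(S_2) \leq 2t$, whereas the combinatorial problem demands the two memberships $d(S_1) \leq t$ and $d(S_2) \leq t$ \emph{separately} — and a sum bounded by $2t$ does not by itself force each summand below $t$; worse, since the blow-up property of \Cref{def:blow-up} only governs pairs of genuine solutions, a spurious $S_1 \in \F(I) \setminus \sol(I)$ lying close to many solutions could otherwise make $\textsc{RR-}\Pi$ a false Yes. The clean way around this is to exploit that the nominal instances produced by the blow-up reductions have optimal solutions of a single uniform cost, so that every feasible solution satisfies $d(S) \geq t$ with equality exactly on $\sol(I)$. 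Under this property, $d(S_1) + d(S_2) \leq 2t$ together with $d(S_1), d(S_2) \geq t$ forces $d(S_1) = d(S_2) = t$ and hence $S_1, S_2 \in \sol(I)$, closing the argument. I would therefore make this uniform-cost property explicit (it holds for all the concrete reductions of \cref{sec:blow-up-ssp-reduction:reductions}) and check it is maintained along the reduction chain. The polynomial-time bound is then routine, as $d, t, B, \Gamma, \kappa$ are copied or computed directly and $M$ is a single large integer.
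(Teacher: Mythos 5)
Your overall architecture is exactly the paper's: containment via \Cref{thm:hddr-containment}, hardness by first invoking \Cref{thm:rr-hardness} to get $\Sigma^p_3$-hardness of \textsc{Comb.\ RR-}$\Pi$, and then a reduction \textsc{Comb.\ RR-}$\Pi \leq \textsc{RR-}\Pi$ that sets $c_1 = \underline c = d^{(I)}$, inflates $\overline c$ on the blockable elements past the threshold, and takes $t_{\textit{RR}} = 2t^{(I)}$. The paper's proof is, step for step, the reduction you describe.

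The one place you diverge is instructive: the obstacle you flag in the reverse direction is genuine, and the paper does not actually resolve it. In the paper's proof of the backward implication, the text asserts that since $\sol(I) = \{F \in \F(I) : d^{(I)}(F) \leq t^{(I)}\}$, the bound $c_1(S_1) + c_2(S_2) \leq 2t^{(I)}$ yields $S_1, S_2 \in \sol(I)$ --- but, as you observe, a sum bounded by $2t^{(I)}$ does not force each summand below $t^{(I)}$ (take $d^{(I)}(S_1) = 2t^{(I)}$ and $d^{(I)}(S_2) = 0$), and the blow-up property of \Cref{def:blow-up} only constrains pairs of genuine solutions, so a cheap-but-infeasible-in-the-$\sol$-sense first stage is not excluded a priori. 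Your proposed repair --- requiring that every $F \in \F(I)$ of the constructed instances satisfies $d^{(I)}(F) \geq t^{(I)}$ with equality exactly on $\sol(I)$ --- does close the gap, and it is easy to check for the concrete constructions in \cref{sec:blow-up-ssp-reduction:reductions} (e.g.\ every vertex cover of the Garey--Johnson graph with blow-up gadgets has size at least $k'$). However, this uniform-cost property is an assumption beyond the theorem's stated hypotheses, so as written your proof establishes the theorem only for LOP problems additionally satisfying it; to match the theorem as stated one must either add this property to the definition of a blow-up SSP reduction (and verify it is preserved by blow-up preserving reductions) or find a cost encoding that enforces $d^{(I)}(S_1) \leq t^{(I)}$ and $d^{(I)}(S_2) \leq t^{(I)}$ separately, which a single scalar budget cannot do without such lower-bound structure. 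In short: your route is the paper's route, but you have correctly identified a hole in the published argument and your fix, while sound, quietly strengthens the hypotheses.
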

\begin{proof}
	Let $\Pi = (\I, \U, \F, d, t)$ be the LOP problem and \textsc{RR-$\Pi$} be the corresponding recoverable robust version.
	The containment of RR-$\Pi$ in $\Sigma^p_3$ follows from \Cref{thm:hddr-containment} by an analogous argument.

	For the hardness, let $\Pi' = (\I, \U, \sol)$ be the derived SSP problem from the LOP problem $\Pi$.
	Remember, by definition of derived SSP problems, the set of solutions is defined by $\sol(I) = \{F \in \F(I) : d^{(I)}(F) \leq t^{(I)}\}$.
    Now, let \textsc{Comb. RR}-$\Pi'$ the corresponding combinatorial recoverable robust version.
	By assumption that the SSP problem $\Pi'$ is SSP blow-up reducible from \textsc{Satisfiability} and \Cref{thm:rr-hardness}, \textsc{Comb. RR}-$\Pi'$ is $\Sigma^p_3$-hard.
    
    We want to reduce \textsc{Comb. RR}-$\Pi'$ to \textsc{RR}-$\Pi$.
    For this consider a \textsc{Comb. RR}-$\Pi'$-instance $(I', U', t', B', \Gamma', \kappa')$, which we transform to the \textsc{RR-$\Pi$}-instance $(I, U, c_1, \underline c, \overline c, t, \Gamma, \kappa)$.
    The instance and the universe stay the same, i.e. $I=I'$ and $U=U'$.
    We define the first stage cost function to be $c_1 = d^{(I)}$.
	We further have to model the blocker with the cost functions $\underline c$ and $\overline c$.
    For this, we define the cost functions such that the blockable elements can be correctly blocked by choosing the cost function $\underline c$ and $\overline c$ correspondingly:
	\[
		\underline c(u) = d^{(I)}(u), \ u \in \U(I)
	\]
	\[
		\overline c(u) = \begin{cases}
			\ d^{(I)}(u),& u \in \U(I) \setminus B \\
			\ 2t^{(I)}+1,& u \in B.
		\end{cases}			
	\]
	At last, we set $\Gamma' = \Gamma$, $\kappa' = \kappa$ and $t' = 2 t^{(I)}$.
	This completes the description of the instance of \textsc{RR}-$\Pi$.

	To prove the correctness of this reduction, we have to show that the following inequality holds if and only if $(I', U', t', B', \Gamma', \kappa')$ a Yes-instance:
	\begin{equation}\label{eq:comb-rr-reduction}
		\min_{\substack{S_1 \in \F(I)}} \max_{c_2 \in C_\Gamma} \min_{\substack{S_2 \in \F(I) \\ dist(S_1,S_2) \leq \kappa}} \ c_1(S_1) + c_2(S_2) \leq t'.
	\end{equation}

    For the first direction, assume that $(I', U', B', \Gamma', \kappa')$ is a Yes-instance.
    First, it holds that $|B'| \leq \Gamma'$.
    Furthermore, there are solutions $S_1, S_2 \subseteq U$ with $S_1, S_2 \in \sol(I)$ such that $\dist(S_1, S_2) \leq \kappa$ and $S_2 \cap B = \emptyset$.
    Since $\sol(I) = \{F \in \F(I) : d^{(I)}(F) \leq t^{(I)}\}$, we have that $S_1, S_2 \in \F(I)$ and $d^{(I)}(S_1) + d^{(I)}(S_2) \leq 2 t^{(I)}$.    
    At last, observe that the blocked elements $u \in B$ are assigned a cost of $2t^{(I)}+1$.
    Thus, these cannot be part of a second stage solution $S_2$ in \textsc{RR-$\Pi$} due to $t' = 2t^{(I)}$.
    Overall, all conditions on \Cref{eq:comb-rr-reduction} are fulfilled and $c_1(S_1) + c_2(S_2) \leq t'$ holds.
    
	For the second direction, assume that \Cref{eq:comb-rr-reduction} holds.
    That is, there is an $S_1 \subseteq \U(I)$ and an $S_2 \subseteq \U(I)$ with $S_1, S_2 \in \F(I)$ such that $\dist(S_1, S_2) \leq \kappa$.
    Furthermore, the set of cost functions $C_\Gamma$, defined by all cost functions $c_2$ such that $c_2(u) = \underline c(u) + \delta_u (\overline c(u) - \underline c(u)), \ \delta_u \in \{0,1\},$ and $\sum_{u \in \U(I)} \delta_u \leq \Gamma$, guarantees that at most $\Gamma$ elements have costs of $2t^{(I)} + 1$ while all others have $d^{(I)}(u)$.
    Since $t' = 2t^{(I)}$, $C_\Gamma$ simulates all possible blockers $B$ in the combinatorial version and it follows $S_2 \cap B = \emptyset$.
    Moreover, we have that $\sol(I) = \{F \in \F(I) : d^{(I)}(F) \leq t^{(I)}\}$, thus $S_1$ and $S_2$ are not only feasible solutions but it also holds that $S_1, S_2 \in \sol(I)$.
    All conditions of the solution pair $(S_1, S_2)$ are fulfilled to be a valid solution and we have a Yes-instance for the combinatorial version.
    Consequently, the reduction is correct.

    As last step, we have to prove that the reduction is polynomial time computable.
    This is indeed the case because the instance remains the same as well as the parameters $\Gamma$ and $\kappa$.
    All additional numbers to define the cost function $\overline c$ are $2t^{(I)} + 1$, which is linear in the input because $\Pi \in NP$.
    This also holds for the new threshold $t = 2t^{(I)}$.
    
    This concludes the proof and we have shown that \textsc{RR-$\Pi$} is indeed $\Sigma^p_3$-complete.
\end{proof}

\subsection{Blow-Up SSP Reductions for Various Problems}
\label{sec:blow-up-ssp-reduction:reductions}
With the theorems proven, we want to apply this framework to several well-known problems, in particular to 3-satisfiability, vertex cover, independent set, subset sum, directed Hamiltonian path, two directed disjoint path and Steiner tree.
In order to just convey the intuition how a blow-up SSP reduction works, we only present the reduction from satisfibility to vertex cover and defer the other reductions to the appendix.
Furthermore, we do not prove the correctness of the actual reduction as this is already presented in the original work, but we prove the correctness of the SSP and blow-up property. 
In \Cref{fig:blow-up-ssp-reduction-tree}, the tree of all presented reductions can be found beginning at \textsc{Satisfiability}.

For a blow-up SSP reduction, we need the following mappings:
The polynomial-time many-one reduction $g: \{0,1\}^* \rightarrow \{0,1\}^*$,
the injective SSP mapping $(f_I)_{I \in \I}: \U(I) \rightarrow \U'(g(I))$
and the polynomial-time computable blow-up factor $\beta_I \in \N$
such that $f(L_b) \cap S_1 = f(L_b) \cap S_2 \Leftrightarrow |S_1 \vartriangle S_2| \leq \beta_I$ for all solutions $S_1, S_2 \in \sol'(g(I))$ of $g(I)$.

\begin{figure}[!ht]
	\centering
	\resizebox{0.8\textwidth}{!}{
	\begin{tikzpicture}[scale=1]
		\node[text width=1cm,align=center](s) at (0, 1) {\textsc{Sat}};
		\node[text width=1cm,align=center](0) at (0, 0) {\textsc{3Sat}};
		\node[](1) at (-8, -1) {\textsc{VC}};
		\node[](2) at (-2.75, -1) {\textsc{IS}};
		\node[](3) at (0, -1) {\textsc{Subset Sum}};
		\node[](4) at (6.5, -0.85) {}; \node[text width=1cm,align=center](4-1) at (7, -1.19) {\textsc{Steiner Tree}};
		\node[](5) at (3.25, -1) {\textsc{DHamPath}};
		\node[](6) at (5.5, -1) {\textsc{2DDP}};
		\path[->] (s) edge (0);
		\path[->] (0) edge (1);
		\path[->] (0) edge (2);
		\path[->] (0) edge (3);
		\path[->] (0) edge (4);
		\path[->] (0) edge (5);
		\path[->] (0) edge (6);
	\end{tikzpicture}
	}
	\caption{The tree of SSP blow-up reductions for all considered problems.}
	\label{fig:blow-up-ssp-reduction-tree}
\end{figure}
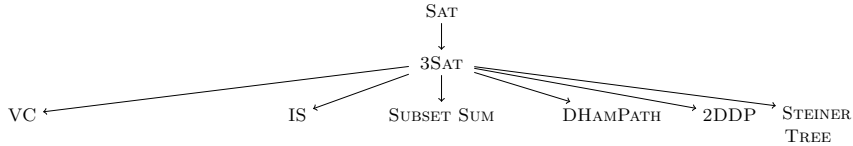

We start by defining the 3satisfiability problem in the SSP framework as presented in \cite{DBLP:journals/corr/abs-2311-10540}. Then, we define the vertex cover problem in the SSP framework.

\begin{samepage}
   	\begin{description}
        \item[]\textsc{3Satisfiability}\hfill\\
        \textbf{Instances:} Literal Set $L = \fromto{\ell_1}{\ell_n} \cup \fromto{\overline \ell_1}{\overline \ell_n}$, Clauses $C \subseteq L^3$.\\
        \textbf{Universe:} $L =: \U$.\\
        \textbf{Solution set:} The set of all sets $L' \subseteq \U$ such that for all $i \in \fromto{1}{n}$ we have $|L' \cap \set{\ell_i, \overline \ell_i}| = 1$, and such that $|L' \cap c| \geq 1$ for all $c \in C$.
   	\end{description}
\end{samepage}

\begin{samepage}
   	\begin{description}
       \item[]\textsc{Vertex Cover}\hfill\\
       \textbf{Instances:} Graph $G = (V, E)$, number $k \in \N$.\\
       \textbf{Universe:} Vertex set $V =: \U$.\\
       \textbf{Solution set:} The set of all vertex covers of size at most $k$.
   	\end{description}
\end{samepage}

Now, we can use a modification of the reduction by Garey and Johnson \cite{DBLP:books/fm/GareyJ79} from \textsc{Sat} to \textsc{Vertex Cover} as blow-up SSP reduction.
We first describe the original reduction and then argue why it is SSP.
At last, we show how to modify the reduction to a blow-up SSP reduction with a blow-up factor $\beta_I$.

Let $(L, C)$ be the \textsc{3Sat} instance of literals $L$ and clauses $C$.
The corresponding \textsc{Vertex Cover} instance $(G, k)$ with graph $G=(V,E)$ and integer $k$ is then constructed as follows.
The reduction maps each literal $\ell$ to a vertex $v_\ell$ and we denote the set of all these vertices $v_\ell$ by $W$.
Then the vertices $v_\ell$ and $v_{\overline \ell}$ of a literal $\ell$ and its negation $\overline \ell$ are connected by the edge $\{v_\ell, v_{\overline \ell}\}$.
Next, for all clauses $c \in C$ there is a 3-clique.
That is, for every literal $\ell$ in the clause $c$, there is a vertex $v^c_\ell$, which is connected to all other vertices $v^c_{\ell'}$ with $\ell' \in c$ and $\ell' \neq \ell$.
At last, for all literal vertices $v_\ell$ we add an edge $\{v_\ell, v^c_\ell\}$ to all clause vertices related to $\ell$.
The threshold of the vertex cover instance is then set to $k=|L|/2 + 2|C|$.
An example of the reduction can be found in \Cref{fig:reduction:3sat-vertex-cover}.

% !TEX root = ./main.tex
\tikzstyle{vertex}=[draw,circle,fill=black, minimum size=4pt,inner sep=0pt]
\tikzstyle{edge} = [draw,-]
\begin{figure}[thpb]
\centering
\resizebox{0.67\textwidth}{!}{
\begin{tikzpicture}[scale=1,auto]

\node[vertex] (x1) at (0,0) {}; \node[above] at (x1) {$v_{\ell_1}$};
\node[vertex] (notx1) at (2,0) {}; \node[above] at (notx1) {$v_{\overline \ell_1}$};
\draw[edge] (x1) to (notx1);

\node[vertex] (x2) at (4,0) {}; \node[above] at (x2) {$v_{\ell_2}$};
\node[vertex] (notx2) at (6,0) {}; \node[above] at (notx2) {$v_{\overline \ell_2}$};
\draw[edge] (x2) to (notx2);

\node[vertex] (x3) at (8,0) {}; \node[above] at (x3) {$v_{\ell_3}$};
\node[vertex] (notx3) at (10,0) {}; \node[above] at (notx3) {$v_{\overline \ell_3}$};
\draw[edge] (x3) to (notx3);

\node[vertex] (c1) at (4,-2.25) {}; \node[below] at (c1) {$v^{c_1}_{\ell_1}$};
\node[vertex] (c2) at (5,-1.25) {}; \node[above left] at (c2) {$v^{c_1}_{\ell_2}$};
\node[vertex] (c3) at (6,-2.25) {}; \node[below] at (c3) {$v^{c_1}_{\ell_3}$};
\node[] at (5,-1.92) {$c_1$};
\draw[edge] (c1) to (c2) to (c3) to (c1);
\draw[edge] (notx1) to (c1);
\draw[edge] (notx2) to (c2);
\draw[edge] (x3) to (c3);

\node at ($(x1)+(-1,0)$) {$W$};
\draw[dashed,rounded corners] ($(x1)+(-.5,+.7)$) rectangle ($(notx3) + (.5,-.4)$);

\end{tikzpicture}
}
\caption{Classic reduction of \textsc{3Sat} to \textsc{Vertex Cover} for $\varphi = (\overline \ell_1 \lor \overline \ell_2 \lor \ell_3)$.}
\label{fig:reduction:3sat-vertex-cover}
\end{figure}

\begin{claim}
    The reduction from above from \textsc{3Sat} to \textsc{Vertex Cover} is an SSP reduction.
\end{claim}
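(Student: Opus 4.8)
The plan is to exhibit the reduction data $(g, f_I)$ demanded by \Cref{def:ssp-reduction} and then verify the defining equation \eqref{eq:SSP}. I let $g$ be the map sending a \textsc{3Sat} instance $(L,C)$ to the \textsc{Vertex Cover} instance $(G,k)$ described above, with $k = |L|/2 + 2|C| = n + 2|C|$, and I define the embedding $f_I : \U(I) = L \to V = \U'(g(I))$ by $f_I(\ell) = v_\ell$. This map is injective and polynomial-time computable, and its image is exactly the set $W$ of literal vertices, so $f_I(\U(I)) = W$ and $S' \cap f_I(\U(I)) = S' \cap W$ for every vertex set $S'$. Since \eqref{eq:SSP} equates two sets of subsets, one being empty iff the other is, once \eqref{eq:SSP} is established the Yes-instance correspondence ($\sol(I) \neq \emptyset \Leftrightarrow \sol'(g(I)) \neq \emptyset$) follows automatically, so it suffices to prove \eqref{eq:SSP}.

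The crucial intermediate step is a structural lemma: every vertex cover $S'$ of size at most $k$ contains \emph{exactly} one vertex of each pair $\{v_\ell, v_{\overline\ell}\}$ and \emph{exactly} two vertices of each clause triangle. I would prove this by counting: the literal vertices partition into $n$ pairs (each a single edge forcing at least one chosen vertex) and the clause vertices partition into $|C|$ triangles (each a $K_3$ forcing at least two chosen vertices), and these vertex classes are disjoint, so $|S'| \geq n + 2|C| = k$. As $|S'| \leq k$, every bound is tight, giving the claimed counts. I expect this lemma, together with the reverse-inclusion argument it supports, to be the main obstacle, since it is where the exact choice of threshold $k$ does its work.

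For the inclusion $\{f_I(S) : S \in \sol(I)\} \subseteq \{S' \cap W : S' \in \sol'(g(I))\}$, I take a satisfying assignment $L' \in \sol(I)$ and build a cover $S'$ by including $v_\ell$ for each $\ell \in L'$ and, for each clause $c$, fixing one satisfied literal $\ell_j \in c \cap L'$ and including the other two triangle vertices of $c$. A short verification shows $S'$ covers all pair edges, all triangle edges, and all connecting edges $\{v_\ell, v^c_\ell\}$ (the single omitted triangle vertex $v^c_{\ell_j}$ has its connecting edge covered by $v_{\ell_j} \in S'$), that $|S'| = n + 2|C| = k$, and that $S' \cap W = f_I(L')$.

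For the reverse inclusion, I take any cover $S' \in \sol'(g(I))$ and set $L' := f_I^{-1}(S' \cap W)$. By the structural lemma, $L'$ selects exactly one literal per variable and so is a valid assignment; the delicate point is satisfiability. For each clause $c$ the triangle omits exactly one vertex $v^c_{\ell_j}$, so the connecting edge $\{v_{\ell_j}, v^c_{\ell_j}\}$ forces $v_{\ell_j} \in S'$, hence $\ell_j \in L'$ with $\ell_j \in c$, giving $|L' \cap c| \geq 1$. Thus $L' \in \sol(I)$ and $f_I(L') = S' \cap W$. Combining both inclusions yields \eqref{eq:SSP}, which completes the verification that the reduction is an SSP reduction.
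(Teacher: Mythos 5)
Your proposal is correct and follows essentially the same route as the paper: both hinge on the one-to-one correspondence $f(\ell)=v_\ell$ and the observation that a literal lies in a satisfying assignment iff its vertex lies in a size-$k$ cover, which is exactly the content of equation \eqref{eq:SSP}. The only difference is that the paper asserts this correspondence and defers its justification to the Garey--Johnson correctness proof, whereas you supply the details yourself via the tight counting lemma (exactly one vertex per literal pair, exactly two per clause triangle) and the two explicit inclusions.
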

\begin{claimproof}
    The reduction is an SSP reduction because each literal $\ell$ corresponds to exactly one vertex $v_\ell$ and the literal $\ell$ is in the satisfiability solution if and only if the corresponding vertex $v_\ell$ is in the vertex cover solution. 
    Thus, we are able to prove the following equality, which confirm that the reduction is SSP.
    
    \begin{align*}
        \{f(S) : S \subseteq L \ \text{s.t.} \ S \in \sol_{\textsc{3Sat}}\} &=
        \{\{f(\ell) : \ell \in S\} : S \subseteq L \ \text{s.t.} \ S \in \sol_{\textsc{3Sat}}\} \\
        &= \{\{v_\ell : \ell \in S\} : S \subseteq L \ \text{s.t.} \ S \in \sol_{\textsc{3Sat}}\} \\
        &= \{\{v_\ell : v_\ell \in S' \cap f(L)\} : S' \cap f(L) \subseteq W \ \text{s.t.} \ S' \in \sol_{\textsc{VC}}\} \\
        &= \{S' \cap f(L) : S' \in \sol_{\textsc{VC}}\}.
    \end{align*}

    The correctness proof of the reduction is presented in \cite{DBLP:books/fm/GareyJ79}.
\end{claimproof}

We have shown that the original reduction by Garey and Johnson is an SSP reduction.
For a modification to a blow-up SSP reduction, we introduce a blow-up gadget that we attach to each pair of blow-up literals $\ell, \overline \ell \in L_b$.
Introducing a blow-up gadget that is attached to the blow-up literals is a standard technique to construct blow-up SSP reductions and we use it for all of the following blow-up SSP reductions.
The idea is to build up equivalence classes $Q_{\ell}$ and $Q_{\overline \ell}$ of universe elements for both of the blow-up literals $\ell$ and $\overline \ell$ to which the gadget is connected to.
Such a class $Q_{\ell}$ has the property that $\ell$ is in the solution if and only if all elements of $Q_{\ell}$ are in the solution.
In other words, if one wants to switch from $\ell$ to $\overline \ell$, then additionally all elements from the equivalence class $Q_{\ell}$ have to be switched to $Q_{\overline \ell}$.
Furthermore, a blow-up gadget is variable in size without influencing the rest of the construction, i.e. for each instance we are able to find a large enough $\beta_I$ that locally guarantees that $\ell$ is not switchable to $\overline \ell$ while retaining a small distance between both solutions.

Given the distance measure, $\beta_I$ for every possible \textsc{3Sat} instance $I$ and blow-up literals $L_b$, we now describe the blow-up gadget.
The blow-up gadget for $\ell, \overline \ell \in L_b$ is a duplication of the literal vertices $v_\ell$ and $v_{\overline \ell}$ forming a complete bipartite graph $K_{\beta_I+1, \beta_I+1}$ between the vertex sets $V^{\beta_I}_{\ell} = \{v_\ell, v_{\ell^1}, \ldots, v_{\ell^{\beta_I}}\}$ and $V^{\beta_I}_{\overline \ell} = \{v_{\overline \ell}, v_{{\overline \ell}^1}, \ldots, v_{{\overline \ell}^{\beta_I}}\}$ as depicted in \Cref{fig:blow-up:3sat-vertex-cover}.
Then, we have the equivalence classes $Q_{\ell} = \{v_{\ell^1}, \ldots, v_{\ell^{\beta_I}}\}$ and $Q_{\overline \ell} = \{v_{\overline \ell^1}, \ldots, v_{\overline \ell^{\beta_I}}\}$.
Now, consider the graph $G$ as described by Garey and Johnson and modify it the following way:
For each pair $\ell, \overline \ell \in L_b$, we identify the two vertices $v_\ell, v_{\overline \ell}$ in $G$ and in the blow-up gadget for $\ell, \overline \ell \in L_b$ and merge them together.
This results in the graph
$$
    G'(L_b, \beta_I) =  ( V(G) \cup \bigcup_{\ell \in L_b} V^{\beta_I}_{\ell}, \ E(G) \cup \bigcup_{\ell \in L_b} \{\{a, b\} \mid a \in V^{\beta_I}_\ell, b \in V^{\beta_I}_{\overline \ell}\})
$$

The threshold for the vertex cover is then increased such that $\beta_I+1$ vertices (instead of only one) for every blow-up literal pair can be taken into the solution, i.e.
$$
    k'(\beta_I) = (\beta_I+1) \cdot |L_b|/2 + |L \setminus L_b|/2 + 2|C|.
$$

% !TEX root = ./main.tex
\tikzstyle{vertex}=[draw,circle,fill=black, minimum size=4pt,inner sep=0pt]
\tikzstyle{edge} = [draw,-]
\begin{figure}
\centering
\resizebox{0.75\textwidth}{!}{
\begin{tikzpicture}[scale=1,auto]

\node[vertex] (x11) at (0,0) {}; \node[above] at (x11) {$v_{\ell_1}$};
\node[vertex] (notx11) at (2,0) {}; \node[above] at (notx11) {$v_{\overline \ell_1}$};
\draw[edge] (x11) to (notx11);

\node[vertex] (x21) at (4,0) {}; \node[above] at (x21) {$v_{\ell_2}$};
\node[vertex] (notx21) at (6,0) {}; \node[above] at (notx21) {$v_{\overline \ell_2}$};
\draw[edge] (x21) to (notx21);

\node[vertex] (x31) at (8,0) {}; \node[above] at (x31) {$v_{\ell_3}$};
\node[vertex] (notx31) at (10,0) {}; \node[above] at (notx31) {$v_{\overline \ell_3}$};
\draw[edge] (x31) to (notx31);

\node[vertex] (c1) at (4,-2.25) {}; \node[below] at (c1) {$v^{c_1}_{\ell_1}$};
\node[vertex] (c2) at (5,-1.25) {}; \node[above left] at (c2) {$v^{c_1}_{\ell_2}$};
\node[vertex] (c3) at (6,-2.25) {}; \node[below] at (c3) {$v^{c_1}_{\ell_3}$};
\node[] at (5,-1.92) {$c_1$};
\draw[edge] (c1) to (c2) to (c3) to (c1);
\draw[edge] (notx11) to (c1);
\draw[edge] (notx21) to (c2);
\draw[edge] (x31) to (c3);

\node[vertex] (x32) at ($(x31) + (0,-1)$) {}; \node[above] at (x32) {$v_{\ell^1_3}$};
\node[vertex] (notx32) at ($(x31) + (2,-1)$) {}; \node[above] at (notx32) {$v_{\overline \ell^1_3}$};
\node[vertex] (x33) at ($(x31) + (0,-2)$) {}; \node[above] at (x33) {$v_{\ell^2_3}$};
\node[vertex] (notx33) at ($(x31) + (2,-2)$) {}; \node[above] at (notx33) {$v_{\overline \ell^2_3}$};
\draw[edge] (x31) to (notx31);
\draw[edge] (x31) to (notx32);
\draw[edge] (x31) to (notx33);
\draw[edge] (x32) to (notx31);
\draw[edge] (x32) to (notx32);
\draw[edge] (x32) to (notx33);
\draw[edge] (x33) to (notx31);
\draw[edge] (x33) to (notx32);
\draw[edge] (x33) to (notx33);
\node[] (x11c1) at ($(x11) + (-0.5,-0.7)$) {};
\node[] (x11c2) at ($(x11) + (0,-0.7)$) {};
\node[] (x11c3) at ($(x11) + (0.5,-0.7)$) {};
\node[] (notx11c1) at ($(notx11) + (-0.5,-0.7)$) {};
\node[] (notx11c2) at ($(notx11) + (0,-0.7)$) {};
\node[] (notx11c3) at ($(notx11) + (0.5,-0.7)$) {};

\node at ($(notx32)+(1.5,0)$) {$V^2_{\ell_3} \cup V^2_{\overline \ell_3}$};
\draw[dashed,rounded corners] ($(x33)+(-.5,-.3)$) rectangle ($(notx31) + (.5,+.6)$);

\end{tikzpicture}
}
\caption{The graph $G'(L_b , \beta_I)$ with $L_b = \{\ell_3, \overline \ell_3\}$ and $\beta_I = 2$ of the instance $\varphi = (\overline \ell_1 \lor \overline \ell_2 \lor \ell_3)$. The blow-up gadget of the literals $\ell_3$ and $\overline \ell_3$ consiting of vetex sets $V^2_{\ell_3} \cup V^2_{\overline \ell_3}$ is outlined with dashed lines.}
\label{fig:blow-up:3sat-vertex-cover}
\end{figure}

\begin{claim}
    For the three choices of $\dist$ from \cref{subsec:distance-mesures} and $\beta_I := |V(G)|$, $G'(L_b, \beta_I)$ and $k'(\beta_I))$ describe a blow-up SSP-reduction from \textsc{3Sat} to \textsc{Vertex Cover}.
\end{claim}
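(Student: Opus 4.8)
\begin{claimproof}[Proof plan]
Since the numerical correctness of the underlying reduction is inherited from Garey--Johnson, the plan is to establish only the two properties that the definition of a blow-up SSP reduction actually demands: the SSP equation~\eqref{eq:SSP}, and the distance biconditional relating $f(L_b)\cap S_1 = f(L_b)\cap S_2$ to $\dist(S_1,S_2)\le\beta_I$. Both rest on a single structural observation, namely that the threshold $k'(\beta_I)$ is chosen to be \emph{exactly} the sum of the independent lower bounds coming from the pairwise vertex-disjoint subgraphs of $G'(L_b,\beta_I)$, so that every vertex cover of size at most $k'$ must be tight on each of them.

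First I would prove this structural lemma. The vertex set of $G'$ decomposes into the blow-up gadgets $V^{\beta_I}_\ell \cup V^{\beta_I}_{\overline\ell}$ (for $\ell\in L_b$), the ordinary literal edges $\{v_\ell,v_{\overline\ell}\}$ (for $\ell\notin L_b$), and the clause triangles, and these subgraphs are vertex-disjoint. A complete bipartite graph $K_{\beta_I+1,\beta_I+1}$ has vertex-cover number $\beta_I+1$, and, crucially, its only covers of that size are the two full sides (the complement of a cover cannot meet both sides); a triangle needs at least two of its three vertices; an edge needs at least one endpoint. Summing these bounds yields exactly $k'(\beta_I)$, so any cover $S'$ with $|S'|\le k'$ attains every bound with equality. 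Consequently $S'$ contains, in each gadget, exactly one of the full sides $V^{\beta_I}_\ell$ or $V^{\beta_I}_{\overline\ell}$; exactly one of $v_\ell,v_{\overline\ell}$ per ordinary pair; and exactly two vertices per clause triangle. The SSP equation then follows as in the previous claim: setting $S=\{\ell : v_\ell\in S'\}$ gives a well-defined assignment (each pair contributes exactly one polarity), the single uncovered clause vertex forces its literal into $S$ so that $S$ satisfies $\varphi$, and $S'\cap f(L)=f(S)$; conversely, any satisfying assignment lifts to such a tight cover of size exactly $k'$.

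Next I would prove the biconditional for tight covers $S_1,S_2$, uniformly for all three measures. For the direction ($\Rightarrow$): if $S_1,S_2$ agree on $f(L_b)$, then in each gadget they select the same full side (agreement on $v_\ell$ determines the side), hence they agree on every gadget vertex, including the blow-up classes $Q_\ell,Q_{\overline\ell}$. Their symmetric difference is therefore contained in $V(G)$, so $|S_1\vartriangle S_2|\le |V(G)|=\beta_I$, and since $|S_1\setminus S_2|,|S_2\setminus S_1|\le|S_1\vartriangle S_2|$, each of the three distances is at most $\beta_I$. For the contrapositive: if $S_1,S_2$ disagree on some $v_\ell\in f(L_b)$, then one cover takes all of $V^{\beta_I}_\ell$ while the other takes all of $V^{\beta_I}_{\overline\ell}$ and avoids $V^{\beta_I}_\ell$; this single gadget already forces $V^{\beta_I}_\ell\subseteq S_1\setminus S_2$ and $V^{\beta_I}_{\overline\ell}\subseteq S_2\setminus S_1$, each of size $\beta_I+1$, so each of the deletion, addition, and Hamming distances is at least $\beta_I+1>\beta_I$. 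This is precisely where $\beta_I=|V(G)|$ is used: it is large enough that a full gadget side ($\beta_I+1$ vertices) cannot be hidden inside the budget $\beta_I$, yet all of $V(G)$ still fits within $\beta_I$.

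The main obstacle is the structural lemma, specifically verifying that the threshold is genuinely tight: one must check that the three families of subgraphs are pairwise vertex-disjoint (so their cover-number lower bounds add) and that a size-$(\beta_I+1)$ cover of $K_{\beta_I+1,\beta_I+1}$ is forced to be one of the two full sides rather than a mixed set. This all-or-nothing behaviour of the classes $Q_\ell$ is exactly what the distance dichotomy hinges on. Once tightness is in hand, the distance computations are routine case checks that hold simultaneously for the $\kappa$-addition, $\kappa$-deletion, and Hamming measures, because each is squeezed between the one-sided differences and the symmetric difference.
\end{claimproof}
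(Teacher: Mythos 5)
Your proposal is correct and follows essentially the same route as the paper: the key observation that a size-$(\beta_I+1)$ cover of $K_{\beta_I+1,\beta_I+1}$ must be one of the two full sides, the resulting all-or-nothing solution structure, and the distance dichotomy (symmetric difference confined to $V(G)$ when the solutions agree on $f(L_b)$, versus a forced swap of a full gadget side of size $\beta_I+1$ when they disagree). Your explicit tightness argument via vertex-disjoint subgraphs and the cruder bound $|S_1\vartriangle S_2|\leq|V(G)|$ are slightly cleaner presentations of what the paper asserts and counts more finely, but they are not a different approach.
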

\begin{claimproof}
    We show that $(g, (f_I)_{I \in \I}, (\beta_I)_{I \in \I})$ with 
    $$
        g(I) = (G'(L_b, {\beta_I}), k'(\beta_I)), \ f(\ell) = v_\ell, \text{ and } \beta_I = |V(G)|
    $$
    is a blow-up SSP reduction from \textsc{3Sat} to $\Pi$.
    We start by showing that the reduction is still SSP.

    For this, we first make the observation that for a bipartite graph $K_{\beta_I+1, \beta_I+1}$ between the vertex sets $V^{\beta_I}_\ell$ and $V^{\beta_I}_{\overline \ell}$, there are exactly two vertex covers of size $\beta_I+1$:
    either one takes all $v_{\ell^i}$ together with $v_\ell$ or all $v_{\overline \ell^i}$ together with $v_{\overline \ell}$ into the vertex cover.
    Otherwise, there is at least one edge $\{v,w\}$ with $v \in V^{\beta_I}_\ell$ and $w \in V^{\beta_I}_{\overline \ell}$ not covered.

    With this observation, we can follow that every vertex cover solution still consists of
    \begin{enumerate}
        \item exactly one of the original vertices $v_\ell$ and $v_{\overline \ell}$ for $\ell, \overline \ell \in L$, and
        \item the $\beta_I$ additional vertices $\{v_{\ell^i} : 1 \leq i \leq \beta_I\}$ of the blow-up gadget if and only if $v_\ell \in L_b$ is in the vertex cover solution, and
        \item exactly two of the three vertices $v^{c_j}_{\ell_1}, v^{c_j}_{\ell_2}, v^{c_j}_{\ell_3}$ for $c_j \in C$.
    \end{enumerate}
    Thus, the original injective SSP mapping $f(\ell) = v_\ell$ is still a valid SSP mapping for this reduction.

    To show that the blow-up property holds, we have to show that for $\beta_I := |V(G)|$ and for all solutions of $g(I)$, $S_1, S_2 \in \sol'(g(I))$,
    \begin{equation}\label{eq:claim_blow-up}
        f(L_b) \cap S_1 = f(L_b) \cap S_2 \Leftrightarrow \dist(S_1,S_2) \leq \beta_I.
    \end{equation}

    We begin with analyzing the maximum distances of two solutions $S_1, S_2 \in \sol'(g(I))$ for the three distance measures if $S_1$ and $S_2$ agree on each blow-up gadget.
    To reach the maximum distance between two solutions $S_1$ and $S_2$ for the three distance measures,
    $S_1$ includes all the $v_\ell$ for all $\ell \in L \setminus L_b$, while $S_2$ includes all the opposite $v_{\overline \ell}$.
    Additionally, $S_1$ includes a different pair $v^{c_j}_{\ell_1}, v^{c_j}_{\ell_2}$ for all $c_j \in C$ in comparison to $S_2$.
    Thus, all $v_{\overline \ell}$ of $\overline \ell \in L \setminus L_b$ and $v^{c_j}_{\ell_3}$ for all $c_j \in C$ have to be added (while deleting $v_\ell$ and one of $v^{c_j}_{\ell_1}, v^{c_j}_{\ell_2}$).
    In conclusion, $\beta_I = |V(G)|$ is an upper bound for the maximal possible \emph{$\kappa$-addition} and \emph{$\kappa$-deletion} distance ($|C| + |L \setminus L_b|/2$) as well as for the maximal possible \emph{Hamming distance} (here one has to count for all deletions \emph{and} additions such that $\beta_I = |V(G)| \geq 2|C| + |L \setminus L_b| \geq |S_1 \vartriangle S_2|$ is also sufficient).
    It follows that $f(L_b) \cap S_1 = f(L_b) \cap S_2 \Rightarrow \dist(S_1, S_2) \leq \beta_I$.

    We conclude with the situation that $S_1$ and $S_2$ do not agree on each blow-up gadget.
    Then corresponding to the analysis above for \emph{$\kappa$-addition} and \emph{$\kappa$-deletion}, we reach at least a distance of $|V(G)|+1 > \beta_I$ and for the Hamming distance we reach at least $2|V(G)|+2 > \beta_I$.
    It follows that $f(L_b) \cap S_1 \neq f(L_b) \cap S_2 \Rightarrow \dist(S_1, S_2) > \beta_I$.

    Clearly $\beta_I$ and thus $G'(L_b, \beta_I)$ and $k'(\beta_I)$ are polynomial time computable.
    This completes the correctness proof of the reduction.
\end{claimproof}

In conclusion, we have shown that the existing reduction from satisfiability to vertex cover enhanced with the presented blow-up gadget is a blow-up SSP reduction.
It follows that the combinatorial as well as the linear optimization version of recoverable robust vertex cover is  $\Sigma^p_3$-complete.

\section{The Issue of Transitivity: Preserving the Blow-up Gadget}\label{sec:blow-up-preserving-ssp-reduction}

Since blow-up SSP reductions require more structure, we unfortunately lose transitivity in comparison to normal SSP reductions, which are transitive.
However, we would like to reuse an existing blow-up SSP reduction such that we do not need to start every reduction at \textsc{3Satisfiability} and additionally construct blow-up gadgets.
For this, we introduce blow-up preserving SSP reductions, which are transitive and preserve the structure of the blow-up gadgets.
The idea is that we only need to show that there is a blow-up SSP reduction for the first reduction in the reduction chain and then the reduction chain can be prolonged by adding further problems, between which there are blow-up preserving SSP reductions.

The idea of a blow-up preserving reduction between problem $\Pi$ and problem $\Pi'$ is that we partition the universe into three sets, the set of elements which originate in the problem $\Pi$, a set of elements $U_{\textit{off}}$, which are never of a solution of the instance of $\Pi'$, and a set of elements $U_{\textit{on}}$, which are part of every solution of the instance of $\Pi'$.
Therefore, every solution $S$ in $\Pi$ has a correspondent solution $S' = f(S) \cup U_{\textit{on}}$ by applying the injective function $f$.
Since the distance measures that we consider are invariant on injective functions and union, the distance between the solutions in problem $\Pi$ is the same as in problem $\Pi'$.
Correspondingly, the blow-up SSP reduction from \textsc{3Satisfiability} to $\Pi$ can be extended to $\Pi'$.

From a different point of view, a blow-up preserving SSP reduction can be understood as an SSP reduction, which \enquote{adds} only two kind of new elements to the universe:
Those that are trivially contained in every (optimal) solution, and those that are never contained in an (optimal) solution.
It is intuitively easy to understand that compared to the old instance, the newly \enquote{added} elements can not influence the term $\dist(S_1,S_2)$.
Hence $\Sigma^p_3$-hardness of recoverable problems is maintained.

\begin{definition}[Blow-Up Preserving SSP Reduction]
\label{def:blow-up-preserving}
    Let $\Pi = (\I,\U,\sol)$ and $\Pi' = (\I',\U',\sol')$ be two SSP problems.
    Then, there is a blow-up preserving SSP reduction from $\Pi$ to $\Pi'$ if there exists an SSP reduction $(g, (f_I)_{I \in \I})$ such that for all instances $I \in \I$ the following holds:\\
    For all elements $u' \in \U'(g(I))$, either
    \begin{itemize}
        \item $u' \in f(\U(I))$ or
        \item for all $S' \in \sol': u' \notin S'$, i.e. $u' \in U_{\textit{off}}$, or
        \item for all $S' \in \sol': u' \in S'$, i.e. $u' \in U_{\textit{on}}$.
    \end{itemize}
\end{definition}

\textit{Remark.}
This definition is more restrictive than it needs to be because we do not need to force that the distance stays exactly the same.
It is also possible to increase the distance while applying the reduction in a controlled manner.
More precisely, we can allow $\gamma$ of the elements $u'$ to be in $U' \setminus (f(\U) \cup U_{\textit{off}} \cup U_{\textit{on}})$ and we relax the backward direction by $f(L_b) \cap S_1 = f(L_b) \cap S_2 \Rightarrow |S_1 \vartriangle S_2| > \beta + \gamma.$
This is a more general and more complicated definition, nevertheless, it is not necessary for the problems that are presented in this paper.
For simplicity, we use the stricter variant for the rest of the paper.

\begin{theorem}
    Let $\Pi$ be an SSP-NP-complete problem for which a blow-up SSP reduction from \textsc{3Sat} exists.
    Then every problem $\Pi'$, which is blow-up preserving SSP reducible from $\Pi$, is blow-up SSP reducible from \textsc{3Sat}.
\end{theorem}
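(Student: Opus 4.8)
The plan is to simply \emph{compose} the two given reductions and then verify that the composite carries a blow-up factor. Since SSP reductions are transitive (as recalled in the excerpt), the composite map is automatically an SSP reduction from \textsc{3Sat} to $\Pi'$, so the only genuine work is to exhibit a blow-up factor for $\Pi'$ and check the blow-up equivalence of \Cref{def:blow-up}. The structural fact I would exploit throughout is that, by \Cref{def:blow-up-preserving}, a blow-up preserving reduction enlarges the universe only by "always-in" elements $U_{\textit{on}}$ and "never-in" elements $U_{\textit{off}}$; consequently every $\Pi'$-solution is, up to the fixed set $U_{\textit{on}}$, merely the image of a $\Pi$-solution, and this is exactly what lets the distance measure be transported unchanged.

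Concretely, fix an admissible blow-up set $L_b \subseteq L$ for \textsc{3Sat}. First I would invoke the assumed blow-up SSP reduction from \textsc{3Sat} to $\Pi$ to get $(g, f_I, \beta_I)$, and the blow-up preserving SSP reduction from $\Pi$ to $\Pi'$ to get $(h, (e_J)_J)$; the composite reduction is $g' := h \circ g$ together with $f'_I := e_{g(I)} \circ f_I$, which is injective as a composition of injections. Writing $J = g(I)$, the next step is the key decomposition: for any $S' \in \sol'(h(J))$, the trichotomy of \Cref{def:blow-up-preserving} forces every element of $S'$ to lie either in $e_J(\U(J))$ or in $U_{\textit{on}}$, while every element of $U_{\textit{on}}$ lies in $S'$; hence $S' = e_J(S) \cup U_{\textit{on}}$, where $S := e_J^{-1}(S' \cap e_J(\U(J))) \in \sol(J)$ is the associated $\Pi$-solution guaranteed by the SSP property.

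With this in hand I would transport both quantities appearing in the blow-up condition. For two solutions written as $S'_1 = e_J(S_1) \cup U_{\textit{on}}$ and $S'_2 = e_J(S_2) \cup U_{\textit{on}}$, I strip off the common set $U_{\textit{on}}$ using union-invariance of $\dist$ (applied one element at a time, which is legitimate because $U_{\textit{on}}$ is disjoint from $e_J(\U(J))$ and hence from both $e_J(S_1)$ and $e_J(S_2)$), and then remove $e_J$ by injectivity-invariance, concluding $\dist(S'_1, S'_2) = \dist(S_1, S_2)$. For the intersections, since $f'_I(L_b) \subseteq e_J(\U(J))$ and $S'_i \cap e_J(\U(J)) = e_J(S_i)$, injectivity of $e_J$ gives $f'_I(L_b) \cap S'_i = e_J(f_I(L_b) \cap S_i)$, so $f'_I(L_b) \cap S'_1 = f'_I(L_b) \cap S'_2$ holds if and only if $f_I(L_b) \cap S_1 = f_I(L_b) \cap S_2$. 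Setting $\beta'_I := \beta_I$ and chaining these equivalences with the assumed blow-up property of $\Pi$ yields precisely $f'_I(L_b) \cap S'_1 = f'_I(L_b) \cap S'_2 \Leftrightarrow \dist(S'_1, S'_2) \leq \beta'_I$, which is the desired blow-up property; polynomial-time computability of $g', f'_I, \beta'_I$ is inherited from the factors.

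I expect the main obstacle to be the decomposition step together with the clean application of the distance axioms: one must argue carefully that the three element types genuinely partition $\U'(h(J))$ (so that $U_{\textit{on}} \cap e_J(\U(J)) = \emptyset$), that no $U_{\textit{off}}$ element can sneak into a solution, and that union-invariance may be applied element-by-element to peel off all of $U_{\textit{on}}$. Once these invariances are pinned down, the remainder is routine bookkeeping with injective images and their preimages.
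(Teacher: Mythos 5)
Your proposal is correct and follows essentially the same route as the paper: compose the two reductions, use transitivity of SSP reductions, observe that every $\Pi'$-solution has the form $f'(S)\cup U_{\textit{on}}$ for a $\Pi$-solution $S$, peel off $U_{\textit{on}}$ by union-invariance and $f'$ by injectivity-invariance to get $\dist(S_1',S_2')=\dist(S_1,S_2)$, and reuse the same $\beta_I$. Your explicit verification that the agreement condition on $f'_I(L_b)$ transfers under the decomposition is a detail the paper leaves implicit, but it is the same argument.
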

\begin{proof}
    We present a blow-up reduction from \textsc{3Sat} to \textsc{$\Pi'$}.
    
    For this let \textsc{3Sat} $= (\I_\textsc{3Sat}, \U_\textsc{3Sat}, \sol_\textsc{3Sat})$, $\Pi = (\I_\Pi,\U_\Pi,\sol_\Pi)$ and $\Pi' = (\I_{\Pi'},\U_{\Pi'},\sol_{\Pi'})$.
    Moreover, we are given a \emph{blow-up SSP reduction} $(g, (f_{I})_{I \in \I_\textsc{3Sat}}, (\beta_I)_{I \in \I_\textsc{3Sat}})$ from \textsc{3Sat} to \textsc{$\Pi$} and a \emph{blow-up preserving SSP reduction} $(g', (f'_I)_{I \in \I_\Pi})$ from \textsc{$\Pi$} to \textsc{$\Pi'$}.

    We concatenate the blow-up SSP reduction $(g, (f_I)_{I \in \I_\textsc{3Sat}}, (\beta_I)_{I \in \I_\textsc{3Sat}})$ and the blow-up preserving SSP reduction $(g', (f'_I)_{I \in \I})$ to obtain a new blow-up SSP reduction \linebreak $(g' \circ g, (f'_{g(I)} \circ f_I)_{I \in \I_\textsc{3Sat}}, (\beta_I)_{I \in \I_\textsc{3Sat}})$ from \textsc{3Sat} to $\Pi'$.
    The resulting relation between the three problem universes is depicted in \Cref{fig:blow-up-preserving-reduction}.
    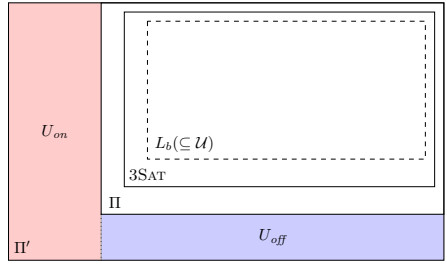
\begin{figure}[!ht]
        \centering
        \resizebox{0.42\textwidth}{!}{
            \begin{tikzpicture}
                \draw[dashed] ($(0,0)$) rectangle ($(6,-3)$);
                \draw[] ($(-0.5,0.2)$) rectangle ($(6.2,-3.6)$);
                \draw[] ($(-1,0.4)$) rectangle ($(6.4,-4.2)$);
                \draw[] ($(-3,0.4)$) rectangle ($(6.4,-5.2)$);
                \draw[fill=blue, opacity=0.2] (-1,-4.2) rectangle (6.4,-5.2);
                \draw[fill=red, opacity=0.2] (-3,0.4) rectangle (-1,-5.2);
                \draw[dotted] (-1,-4.2) edge (-1,-5.2);
                
        		\node[] () at (0.8,-2.67) {$L_b (\subseteq \mathcal{U})$};
        		\node[] () at (0,-3.33) {\textsc{3Sat}};
        		\node[] () at (-0.7,-3.95) {$\Pi$};
        		\node[] () at (-2.7,-4.9) {$\Pi'$};
        		\node[] () at (2.7,-4.7) {$U_{\textit{off}}$};
        		\node[] () at (-2,-2.4) {$U_{\textit{on}}$};
            \end{tikzpicture}
        }
        \caption{The relation between \textsc{3Sat}, $\Pi$ and $\Pi'$ in a blow-up SSP reduction from \textsc{3Sat} to $\Pi$ and a blow-up preserving SSP reduction from $\Pi$ and $\Pi'$.
        The blow-up preserving SSP reduction maps the universe of $\Pi$ into the universe elements of $\Pi'$.
        Because the original \textsc{3Sat} universe is part of the universe of $\Pi$, the blow-up literals $L_b$ are also mapped via the functions $f$ and $f'$ into the universe of $\Pi'$.
        Furthermore, all additional elements in $\Pi'$ are either part of $U_{\textit{on}}$, which are always part of a solution in $\Pi'$, or $U_{\textit{off}}$, which are never part of the solution in $\Pi'$.}
        \label{fig:blow-up-preserving-reduction}
    \end{figure}

    Blow-up SSP reductions and blow-up preserving SSP reductions are both SSP reductions.
    Furthermore, SSP reductions are transitive [\cite{DBLP:journals/corr/abs-2311-10540}, Lemma 5] and thus the concatenation of both reductions is still an SSP reduction.

    It remains to show that the blow-up property holds.
    We show this by using the same polynomial computable $\beta_I$ of the blow-up SSP reduction $(g, (f_I)_{I \in \I_\textsc{3Sat}}, (\beta_I)_{I \in \I_\textsc{3Sat}})$ from \textsc{3Sat} to $\Pi$.
    In particular, we have to show that the statement
    \begin{equation}\label{eq:blow-up-preserving:pi}
        \forall S_1, S_2 \in \sol_\Pi: \ f(L_b) \cap S_1 = f(L_b) \cap S_2 \Leftrightarrow \dist(S_1,S_2) \leq \beta_I.
    \end{equation}
    holds if and only if the following statement holds:
    \begin{equation}\label{eq:blow-up-preserving:piprime}
        \forall S'_1, S'_2 \in \sol_{\Pi'}: \ f' \circ f(L_b) \cap S'_1 = f' \circ f(L_b) \cap S'_2 \Leftrightarrow \dist(S'_1, S'_2) \leq \beta_I.
    \end{equation}
    We first remark that $\{S' : S' \in \sol_{\Pi'}\} = \{f'(S) \cup U_{\textit{on}} : S \in \sol_{\Pi}\}$.
    Thus, there is a bijection between solutions sets $S \in \sol_{\Pi}$ of problem $\Pi$ and the solution sets $S' \in \sol_{\Pi'}$ of problem $\Pi'$ defined by $h: \sol_{\Pi} \rightarrow \sol_{\Pi'}, S \mapsto f'(S) \cup U_{\textit{on}}$.
    With the help this bijection, we can show that the distance between the solution pair $(S_1, S_2)$ of problem $\Pi$ and the solution pair $(S'_1, S'_2)$ of problem $\Pi'$ stays the same.
    \begin{align}
        \text{dist}(S'_1, S'_2) &= \text{dist}(h(S_1), h(S_2)) \label{eq:blow-up-preserving_1}\\
        &= \text{dist}(f'(S_1) \cup U_{\textit{on}}, \ f'(S_2) \cup U_{\textit{on}}) \label{eq:blow-up-preserving_2}\\
        &= \text{dist}(f'(S_1), f'(S_2)) \label{eq:blow-up-preserving_3}\\
        &= \text{dist}(S_1, S_2) \label{eq:blow-up-preserving_4}
    \end{align}
    For \Cref{eq:blow-up-preserving_3}, we make use of the invariance of union of the distance measure.
    Furthermore \Cref{eq:blow-up-preserving_4} holds, because the function $f'$ is injective since the blow-up preserving SSP reduction is also an SSP reduction.
    Consequently, using the same $\beta_I$ guarantees that the equivalence of \Cref{eq:blow-up-preserving:pi} and \Cref{eq:blow-up-preserving:piprime} holds.

    The transformation is polynomial time computable because the mappings $g$ and $g'$ are polynomial time computable as well as the $\beta_I$ as given the in the blow-up SSP reduction $(g, (f_I)_{I \in \I_\textsc{3Sat}}, (\beta_I)_{I \in \I_\textsc{3Sat}})$.
    This concludes the proof of the theorem.
\end{proof}

\begin{corollary}
    Let $\Pi$ be an SSP-NP-hard problem for which a blow-up SSP reduction from \textsc{3Satisfiability} exists.
    Then for all SSP-NP-hard problems $\Pi'$ that are blow-up preserving SSP reducible from $\Pi$, the recoverable robust variant \textsc{Comb. RR-$\Pi'$} is $\Sigma^p_3$-hard.
\end{corollary}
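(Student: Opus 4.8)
The plan is to prove this corollary by chaining the two immediately preceding results, so that no new gadget construction is required. The hypotheses supply a blow-up SSP reduction from \textsc{3Sat} to $\Pi$ and a blow-up preserving SSP reduction from $\Pi$ to $\Pi'$. By the preceding theorem, the concatenation of these two reductions is again a blow-up SSP reduction; concretely, the composed maps $(g' \circ g, (f'_{g(I)} \circ f_I)_I, (\beta_I)_I)$ witness that $\Pi'$ is blow-up SSP reducible from \textsc{3Sat}, reusing the very same blow-up factor $\beta_I$.

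Once this is established, I would apply \Cref{thm:rr-hardness} directly to $\Pi'$. Indeed, $\Pi'$ is SSP-NP-hard by assumption, and it is blow-up SSP reducible from \textsc{3Sat} by the previous step; these are exactly the two hypotheses of \Cref{thm:rr-hardness}. Hence \textsc{Comb. RR-$\Pi'$} is $\Sigma^p_3$-hard, which is precisely the claim. Note that only hardness (and not containment) is asserted here, so no further appeal to the containment result is needed.

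The single point that needs care is the mismatch between the hypothesis \emph{SSP-NP-complete} in the preceding theorem and the weaker hypothesis \emph{SSP-NP-hard} assumed here for $\Pi$. I expect this to be the main (and very mild) obstacle, and I would resolve it by observing that the proof of the preceding theorem never uses completeness of $\Pi$: it manipulates only the two given reductions, invoking transitivity of SSP reductions to retain the SSP property and the invariance of the distance measure under injection and union to retain the same $\beta_I$. Consequently the conclusion ``$\Pi'$ is blow-up SSP reducible from \textsc{3Sat}'' holds verbatim under the weaker assumption that $\Pi$ is merely SSP-NP-hard, and the corollary follows.
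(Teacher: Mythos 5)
Your proposal is correct and is exactly the argument the paper intends: the corollary is an immediate consequence of the preceding theorem (which shows $\Pi'$ is blow-up SSP reducible from \textsc{3Sat}) combined with \Cref{thm:rr-hardness}, and your observation that the preceding theorem's proof never uses completeness of $\Pi$ (only the two reductions themselves) correctly disposes of the complete/hard mismatch in the hypotheses.
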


We have shown that blow-up preserving SSP reductions are a possible tool to build up a reduction chain that begins at a blow-up SSP reduction.
Next, we prove that those chains are extendable by adding further blow-up preserving SSP reductions by showing the transitivity of these reductions.

\begin{lemma}
    Blow-up preserving SSP reductions are transitive.
\end{lemma}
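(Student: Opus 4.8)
The plan is to show that the composition of two blow-up preserving SSP reductions is again a blow-up preserving SSP reduction. Let $\Pi \to \Pi'$ and $\Pi' \to \Pi''$ be blow-up preserving SSP reductions given by $(g, (f_I))$ and $(g', (f'_{I'}))$ respectively, together with their associated sets $U_{\textit{on}}, U_{\textit{off}}$ (for the first) and $U'_{\textit{on}}, U'_{\textit{off}}$ (for the second). I would prove that the composition $(g' \circ g, (f'_{g(I)} \circ f_I))$ witnesses a blow-up preserving SSP reduction from $\Pi$ to $\Pi''$, where the new ``on'' and ``off'' sets are built from the old ones.

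\textbf{Step 1: The composition is an SSP reduction.} This is immediate from the fact that plain SSP reductions are transitive, as already established in the excerpt (citing Lemma 5 of \cite{DBLP:journals/corr/abs-2311-10540}). Since every blow-up preserving SSP reduction is in particular an SSP reduction, the composed map $g' \circ g$ is a valid polynomial-time reduction and $f'_{g(I)} \circ f_I$ is the required injective universe embedding satisfying the SSP property.

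\textbf{Step 2: Classify the universe of $\Pi''$ into the three categories.} This is the crux. I need to show every element $u'' \in \U''(g'(g(I)))$ falls into exactly one of: (i) the image $f'(f(\U(I)))$ of the original universe, (ii) a set that is never in any solution, or (iii) a set that is always in every solution. I would argue by cases on where $u''$ sits relative to the second reduction. By the blow-up preserving property of $\Pi' \to \Pi''$, either $u'' \in f'(\U'(g(I)))$, or $u'' \in U'_{\textit{off}}$ (never in a solution of $\Pi''$), or $u'' \in U'_{\textit{on}}$ (always in a solution). In the first case, $u'' = f'(v)$ for a unique $v \in \U'(g(I))$, and now I apply the blow-up preserving property of the \emph{first} reduction to $v$: either $v \in f(\U(I))$ (so $u'' \in f'(f(\U(I)))$, category (i)), or $v \in U_{\textit{off}}$, or $v \in U_{\textit{on}}$. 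The key subtlety is translating ``$v \in U_{\textit{on}}$ for $\Pi'$'' into a statement about solutions of $\Pi''$. Here I would use the structural fact from the previous theorem's proof, namely $\sol_{\Pi''} = \{\,f'(S') \cup U'_{\textit{on}} : S' \in \sol_{\Pi'}\,\}$, so that $v \in U_{\textit{on}}$ (always in $\Pi'$-solutions) implies $f'(v)$ is always in $\Pi''$-solutions, placing it in the new ``on'' set; symmetrically for $U_{\textit{off}}$. I would then define $U''_{\textit{on}} := U'_{\textit{on}} \cup f'(U_{\textit{on}})$ and $U''_{\textit{off}} := U'_{\textit{off}} \cup f'(U_{\textit{off}})$ and verify these satisfy the required ``always in / never in every solution'' conditions.

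\textbf{The main obstacle} I anticipate is the careful bookkeeping in Step 2: one must verify that the sets $f'(U_{\textit{on}})$ and $f'(U_{\textit{off}})$ interact correctly with $U'_{\textit{on}}$ and $U'_{\textit{off}}$, and in particular that the three categories for the composed reduction remain mutually exclusive and exhaustive. This requires repeatedly invoking the identity $\sol_{\Pi''} = \{f'(S') \cup U'_{\textit{on}} : S' \in \sol_{\Pi'}\}$ to convert membership statements about $\Pi'$-solutions into membership statements about $\Pi''$-solutions. Once the classification holds, polynomial-time computability of the composed reduction follows directly from the polynomial-time computability of the two constituent reductions, concluding the proof.
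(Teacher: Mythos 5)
Your proposal is correct and follows essentially the same route as the paper: compose the two reductions, invoke transitivity of plain SSP reductions, and set the new ``on''/``off'' sets to $U'_{\textit{on}} \cup f'(U_{\textit{on}})$ and $U'_{\textit{off}} \cup f'(U_{\textit{off}})$, which is exactly the paper's reassignment $U_{\textit{on}} = f'(U^2_{\textit{on}}) \cup U^3_{\textit{on}}$, $U_{\textit{off}} = f'(U^2_{\textit{off}}) \cup U^3_{\textit{off}}$. Your Step~2, which explicitly justifies via the bijection $\sol_{\Pi''} = \{f'(S') \cup U'_{\textit{on}} : S' \in \sol_{\Pi'}\}$ that elements of $f'(U_{\textit{on}})$ and $f'(U_{\textit{off}})$ retain their always-in/never-in status in $\Pi''$, is actually slightly more careful than the paper's terse argument.
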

\begin{proof}
    Let $\Pi^1$, $\Pi^2$ and $\Pi^3$ be three SSP problems such that $\Pi^1$ is blow-up preserving SSP reducible to $\Pi^2$ by $(g, f)$ and $\Pi^2$ is blow-up preserving SSP reducible to $\Pi^3$ by $(g', f')$.
    We aim to show that $\Pi^1$ is blow-up preserving SSP reducible to $\Pi^3$.

    Consider an instance $(I^1, U^1)$ of $\Pi^1$, which is mapped by the blow-up preserving SSP reduction $(I^2, U^2)$.
    Then,
    \[
        I^2 = g(I^1) \ \text{and} \ U^2 = f(U^1) \cup U^2_{\textit{on}} \cup U^2_{\textit{off}}.
    \]
    We can then use the second reduction $(g', f')$ to derive
    \[
        I^3 = g'(g(I^1)) \ \text{and} \ U^3 = f'(f(U^1)) \cup f'(U^2_{\textit{on}}) \cup f'(U^2_{\textit{off}}) \cup U^3_{\textit{on}} \cup U^3_{\textit{off}}.
    \]
    Then, we can reassign $U_{\textit{on}} = f'(U^2_{\textit{on}}) \cup U^3_{\textit{on}}$ and $U_{\textit{off}} = f'(U^2_{\textit{off}}) \cup U^3_{\textit{off}}$ as follows to derive
    \begin{equation}\label{eq:blow-up-preserving-transitive}
         U^3 = f'(f(U^1)) \cup U_{\textit{on}} \cup U_{\textit{off}}.
    \end{equation}
    Because $g$ and $g'$ are transitive as well as $f' \circ f$ is injective (since $f$ and $f'$ are injective), we have an SSP reduction, which is also blow-up preserving because of \Cref{eq:blow-up-preserving-transitive}.
\end{proof}

\subsection{Blow-up Gadgets for various Problems}\label{sec:blow-up-preserving-ssp-reduction:reductions}

In this subsection, we apply the developed framework of blow-up-preserving SSP reductions.
In \Cref{sec:blow-up-ssp-reduction:reductions}, we have seen that vertex cover, 3Sat, independent set, subset sum, directed Hamiltonian path, two directed disjoint path and Steiner tree are blow-up SSP reducible from satisfiability.
Next, we show that one can find blow-up preserving reductions starting from some of these problems to show $\Sigma^p_3$-hardness of further recoverable robust problems.
Again, we defer most of the reductions to the appendix.
However to convey the intuition, how a blow-up preserving SSP reduction works, we present a reduction from vertex cover to dominating set.
The complete tree of reductions can be found in \Cref{fig:blow-up-preserving-ssp-reduction-tree}.
Correspondingly, we can state that the recoverable robust problem based on the nominal problems in the reduction tree are $\Sigma^p_3$-complete.

\begin{theorem}
    The recoverable robust versions of the following nominal problems are $\Sigma^p_3$-complete:
    \textsc{Satisfiability},
    \textsc{3-Satisfiability},
    \textsc{Vertex Cover}, \textsc{Dominating Set}, \textsc{Set Cover}, \textsc{Hitting Set}, \textsc{Feedback Vertex Set}, \textsc{Feedback Arc Set}, \textsc{Uncapacitated Facility Location}, \textsc{p-Center}, \textsc{p-Median},
    \textsc{Independent Set}, \textsc{Clique},
    \textsc{Subset Sum}, \textsc{Knapsack}, \textsc{Partition}, \textsc{Scheduling},
    \textsc{Directed Hamiltonian Path}, \textsc{Directed Hamiltonian Cycle}, \textsc{Undirected Hamiltonian Cycle}, \textsc{Traveling Salesman Problem},
    \textsc{Two Directed Vertex Disjoint Path}, \textsc{$k$-Vertex Directed Disjoint Path},
    \textsc{Steiner Tree}
\end{theorem}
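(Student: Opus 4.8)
The plan is to reduce the entire statement to the two completeness results already established: \Cref{thm:hddr-containment} for membership, and the completeness theorem together with \Cref{thm:rr-hardness} for hardness. Containment in $\Sigma^p_3$ is immediate, since each listed problem is a classical member of \NP{} and hence an LOP problem in \NP, so \Cref{thm:hddr-containment} applies verbatim. The substance lies in hardness, and by the completeness theorem it suffices to exhibit, for every listed problem, a blow-up SSP reduction from \textsc{3Sat}. I would organize these reductions exactly as the trees in \Cref{fig:blow-up-ssp-reduction-tree,fig:blow-up-preserving-ssp-reduction-tree}: a small collection of \emph{base} problems receives a direct blow-up SSP reduction from \textsc{3Sat}, while every remaining problem is reached from a base problem by a chain of blow-up \emph{preserving} SSP reductions.

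First I would treat the base problems \textsc{Vertex Cover}, \textsc{Independent Set}, \textsc{Subset Sum}, \textsc{Directed Hamiltonian Path}, \textsc{Two Directed Disjoint Path}, and \textsc{Steiner Tree}. Each is already SSP-NP-complete in the sense of \cite{DBLP:journals/corr/abs-2311-10540}, so a plain SSP reduction from \textsc{3Sat} is at hand; the task is to upgrade it to a blow-up reduction. Following the \textsc{Vertex Cover} template presented above, I would take the textbook reduction, confirm the SSP property via the literal-to-element map $f$, and attach a scalable blow-up gadget to each blow-up literal pair $\ell,\overline\ell\in L_b$. The gadget forces an equivalence class of size $\beta_I$ into the solution precisely when $f(\ell)$ is, so that flipping the value of a blow-up literal costs at least $\beta_I+1$ in distance, while two solutions agreeing on $f(L_b)$ differ by at most $\beta_I$. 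Choosing $\beta_I$ as a crude upper bound on the size of the unblown-up instance makes the equivalence $f(L_b)\cap S_1 = f(L_b)\cap S_2 \Leftrightarrow \dist(S_1,S_2)\leq\beta_I$ hold simultaneously for all three distance measures.

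Next I would reach the remaining problems by blow-up preserving SSP reductions. Here the strategy is to take a known SSP reduction and verify the structural condition of \Cref{def:blow-up-preserving}: every newly introduced universe element is either always in every solution (the set $U_{\textit{on}}$) or never in any solution (the set $U_{\textit{off}}$). The reduction \textsc{Vertex Cover}$\to$\textsc{Dominating Set} presented above is the prototype, and analogous chains reach \textsc{Set Cover}, \textsc{Hitting Set}, \textsc{Feedback Vertex Set}, \textsc{Feedback Arc Set}, \textsc{Uncapacitated Facility Location}, \textsc{p-Center}, \textsc{p-Median}, \textsc{Clique}, \textsc{Knapsack}, \textsc{Partition}, \textsc{Scheduling}, the directed/undirected cycle and \textsc{TSP} variants, and the $k$-disjoint path family. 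Since blow-up preserving reductions are transitive and, by the extension theorem, compose with the base blow-up reduction to yield a blow-up SSP reduction from \textsc{3Sat}, each such problem inherits the required reduction; for \textsc{Sat} and \textsc{3Sat} themselves the reduction is trivial.

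The main obstacle is not the architecture but the case-by-case verification hidden in the two middle steps. For each base problem the distance analysis must be checked for all three measures at once, which is delicate because the $\kappa$-addition, $\kappa$-deletion, and Hamming distances count additions, deletions, and their sum respectively, so the gadget and the bound $\beta_I$ must be chosen to dominate every case simultaneously. For each derived problem one must confirm that the classical reduction genuinely yields a clean $U_{\textit{on}}/U_{\textit{off}}$ partition with no intermediate elements whose solution membership can vary; a reduction creating such elements would require the relaxed (by $\gamma$) variant noted in the remark after \Cref{def:blow-up-preserving}, or a different reduction altogether. Once all these reductions are assembled, applying the completeness theorem to each listed problem yields $\Sigma^p_3$-completeness, completing the proof.
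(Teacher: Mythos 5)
Your proposal is correct and follows essentially the same architecture as the paper: containment via \Cref{thm:hddr-containment}, hardness via \Cref{thm:rr-hardness} and its LOP lifting, direct blow-up SSP reductions from \textsc{3Sat} to the same six base problems (plus \textsc{Sat}/\textsc{3Sat}), and blow-up preserving SSP reductions along the same reduction tree for the remaining problems, with transitivity and the composition theorem tying it together. You also correctly identify the two places where the real work hides — the simultaneous distance analysis for all three measures and the verification of the clean $U_{\textit{on}}/U_{\textit{off}}$ partition — which is exactly what the paper's appendix carries out problem by problem.
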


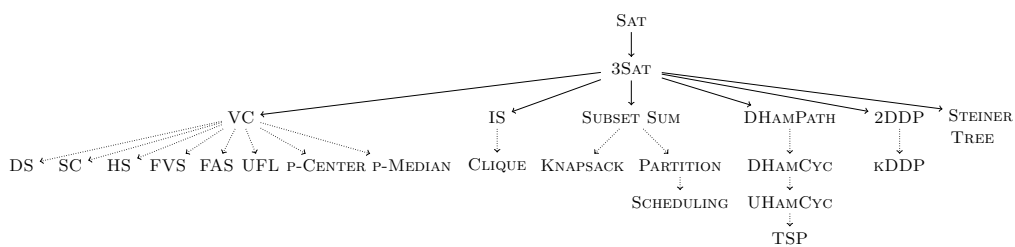
\begin{figure}[!ht]
	\centering
	\resizebox{0.95\textwidth}{!}{
	\begin{tikzpicture}[scale=1]
		\node[text width=1cm,align=center](s) at (0, 1) {\textsc{Sat}};
		\node[text width=1cm,align=center](0) at (0, 0) {\textsc{3Sat}};
		\node[](1) at (-8, -1) {\textsc{VC}};
		\node[](2) at (-2.75, -1) {\textsc{IS}};
		\node[](3) at (0, -1) {\textsc{Subset Sum}};
		\node[](4) at (6.5, -0.85) {}; \node[text width=1cm,align=center](4-1) at (7, -1.19) {\textsc{Steiner Tree}};
		\node[](5) at (3.25, -1) {\textsc{DHamPath}};
		\node[](6) at (5.5, -1) {\textsc{2DDP}};
		\node[](11) at (-12.5, -2) {\textsc{DS}};
		\node[](12) at (-11.5, -2) {\textsc{SC}};
		\node[](13) at (-10.5, -2) {\textsc{HS}};
		\node[](14) at (-9.5, -2) {\textsc{FVS}};
		\node[](15) at (-8.5, -2) {\textsc{FAS}};
		\node[](10) at (-7.6, -2) {\textsc{UFL}};
		\node[](16) at (-6.25, -2) {\textsc{p-Center}};
		\node[](17) at (-4.5, -2) {\textsc{p-Median}};
		\node[](21) at (-2.75, -2) {\textsc{Clique}};
		\node[](31) at (-1, -2) {\textsc{Knapsack}};
		\node[](32) at (1, -2) {\textsc{Partition}};
		\node[](51) at (3.25, -2) {\textsc{DHamCyc}};
		\node[](61) at (5.5, -2) {\textsc{kDDP}};
		\node[](321) at (1, -2.75) {\textsc{Scheduling}};
		\node[](511) at (3.25, -2.75) {\textsc{UHamCyc}};
        \node[](5111) at (3.25, -3.5) {\textsc{TSP}};
		\path[->] (s) edge (0);
		\path[->] (0) edge (1);
		\path[->] (0) edge (2);
		\path[->] (0) edge (3);
		\path[->] (0) edge (4);
		\path[->] (0) edge (5);
		\path[->] (0) edge (6);
		\path[->, densely dotted] (1) edge (10);
		\path[->, densely dotted] (1) edge (11);
		\path[->, densely dotted] (1) edge (12);
		\path[->, densely dotted] (1) edge (13);
		\path[->, densely dotted] (1) edge (14);
		\path[->, densely dotted] (1) edge (15);
		\path[->, densely dotted] (1) edge (16);
		\path[->, densely dotted] (1) edge (17);
		\path[->, densely dotted] (2) edge (21);
		\path[->, densely dotted] (3) edge (31);
		\path[->, densely dotted] (3) edge (32);
		\path[->, densely dotted] (32) edge (321);
		\path[->, densely dotted] (5) edge (51);
		\path[->, densely dotted] (51) edge (511);
		\path[->, densely dotted] (511) edge (5111);
		\path[->, densely dotted] (6) edge (61);
	\end{tikzpicture}
	}
	\caption{The tree of SSP reductions for all considered problems. While solid edges indicate that there is a blow-up SSP reduction between the problems, dotted edges represent a blow-up preserving SSP reduction between the problems.}
	\label{fig:blow-up-preserving-ssp-reduction-tree}
\end{figure}

We define dominating set according to the SSP framework as follows.

\begin{samepage}
	\begin{description}
        \item[]\textsc{Dominating Set}\hfill\\
        \textbf{Instances:} Graph $G = (V, E)$, number $k \in \N$.\\
        \textbf{Universe:} Vertex set $V =: \U$.\\
        \textbf{Solution set:} The set of all dominating sets of size at most $k$.
	\end{description}
\end{samepage}

For a reduction from \textsc{Vertex Cover} to \textsc{Dominating Set}, we use a modification of a folklore reduction.
The vertex cover instance, consisting of a graph $G = (V,E)$ and an integer $k$, is transformed to a dominating instance, consisting of a graph $G' = (V', E')$ and an integer $k'$.\footnote{For the sake of conciseness, we assume w.l.o.g. that the vertex cover instance is a connected graph. For a full proof for arbitrary instances we refer to \cite{DBLP:journals/corr/abs-2311-10540}.}
Every vertex $v \in V$ is mapped to a vertex $v' \in V'$, we denote the set of all these vertices $v'$ by $W$.
Every edge $\{v, w\} \in E$ is mapped to an edge $\{v', w'\} \in E'$.
Furthermore for every edge $\{v, w\} \in E$, there are $|V|+1$ additional vertices\footnote{The original reduction uses only one additional vertex for each edge. However, this reduction is not an SSP reduction.} $vw_i, 1 \leq i \leq |V|+1$.
These vertices are connected to both $v'$ and $w'$, i.e. $\{\{v', vw_i\}, \{w', vw_i\} : 1 \leq i \leq |V|+1\} \subseteq E'$.
At last, we set $k' = k$.
The transformation of one edge is depicted in \Cref{fig:reduction:vertex-cover-dominating-set}.

% !TEX root = ./main.tex
\tikzstyle{vertex}=[draw,circle,fill=black, minimum size=4pt,inner sep=0pt]
\tikzstyle{edge} = [draw,-]
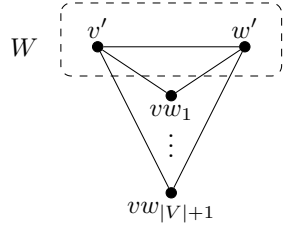
\begin{figure}[thpb]
\centering
\resizebox{0.28\textwidth}{!}{
\begin{tikzpicture}[scale=1,auto]

\node[vertex] (v) at (0,0) {}; \node[above] at (v) {$v'$};
\node[vertex] (w) at (2,0) {}; \node[above] at (w) {$w'$};
\draw[edge] (v) to (w);

\node[vertex] (vw1) at (1,-0.67) {}; \node[below] at (vw1) {$vw_1$};
\draw[edge] (v) to (vw1);
\draw[edge] (vw1) to (w);
\node[vertex] (vwV+1) at (1,-2) {}; \node[below] at (vwV+1) {$vw_{|V|+1}$};
\draw[edge] (v) to (vwV+1);
\draw[edge] (vwV+1) to (w);
\node[] () at (1,-1.25) {$\vdots$};

\node at ($(v)+(-1,0)$) {$W$};
\draw[dashed,rounded corners] ($(v)+(-.5,+.6)$) rectangle ($(w) + (.5,-.4)$);

\end{tikzpicture}
}
\caption{Modified reduction of \textsc{Vertex Cover} to \textsc{Dominating Set}.
This is the transformation for two vertices $v, w \in V$ connected by one edge $\{v,w\} \in E$.}
\label{fig:reduction:vertex-cover-dominating-set}
\end{figure}

\begin{claim}
    The reduction from above from \textsc{Vertex Cover} to \textsc{Dominating Set} is an SSP reduction.
\end{claim}
\begin{claimproof}
    This reduction is an SSP reduction, because every vertex $v \in V$ is mapped to a corresponding vertex $v' \in V'$, which is in the dominating solution if and only if vertex $v$ is in the vertex cover solution.
    Formally,
    \begin{align*}
        \{f(S) : S \subseteq L \ \text{s.t.} \ S \in \sol_{\textsc{VC}}\} &=
        \{\{f(v) : v \in S\} : S \subseteq V \ \text{s.t.} \ S \in \sol_{\textsc{VC}}\} \\
        &= \{\{v' : v \in S\} : S \subseteq V \ \text{s.t.} \ S \in \sol_{\textsc{VC}}\} \\
        &= \{\{v' : v' \in S' \cap f(V)\} : S' \cap f(V) \subseteq W \ \text{s.t.} \ S' \in \sol_{\textsc{DS}}\} \\
        &= \{S' \cap f(V) : S' \in \sol_{\textsc{DS}}\}.
    \end{align*}
    The correctness proof of the reduction itself is presented in \cite{DBLP:journals/corr/abs-2311-10540}.
\end{claimproof}

Finally, we have to show that the reduction is blow-up preserving.
For this, we take a look at all vertices of $V'$ and associate them with the sets $f(\U)$, $U_{\textit{on}}$ and $U_{\textit{off}}$. 

\begin{claim}
    The reduction from above from \textsc{Vertex Cover} to \textsc{Dominating Set} is an SSP reduction.
\end{claim}
\begin{claimproof}
    All vertices are mapped to their correspondence in $V'$.
    Moreover, all vertices of $\{vw_i : \{v, w\} \in E, 1 \leq i \leq |V|+1\}$ are never part of a solution of size $k'$.
    Overall, it holds that $V' = W \cup \{vw_i : \{v, w\} \in E, 1 \leq i \leq |V|+1\}$.
    Thus, we have
    \begin{align*}
        f(\U) &= f(V) = W, \\
        U_{\textit{on}} &= \emptyset, \\
        U_{\textit{off}} &= \{vw_i : \{v,w\} \in E, 1 \leq i \leq |V| + 1\}, \\
        V' &= f(V) \ \dot\cup \ U_{\textit{on}} \ \dot\cup \ U_{\textit{off}}.
    \end{align*}
\end{claimproof}

All in all, the reduction from above is a blow-up preserving SSP reduction and thus recoverable robust dominating set is $\Sigma^p_3$-complete.

\section{Conclusion}
We have shown that for a large class of NP-complete problems their corresponding recoverable robust problem is $\Sigma^p_3$-complete for several different distance measures.
For this, we introduced two new types of reductions, the blow-up SSP reduction and the blow-up preserving SSP reductions.
Blow-up SSP reductions are the basis to show $\Sigma^p_3$-completeness of a recoverable robust problem by a reduction from \textsc{3Sat} and further problems can be appended to the reduction chain by using the transitive blow-up preserving SSP reductions.
In particular, we are able to show that 24 recoverable robust problems are $\Sigma^p_3$-completeness with the ability to apply the framework to further problems.

These results and the framework might be extendible in several different dimensions.
First the natural question arises, whether these results can also be applied to most problems that are just in NP but not NP-complete, because we are only able to use this result for NP-complete problems.
The results by Jackiewicz, Kasperski and Zielińsk \cite{DBLP:journals/corr/abs-2403-20000} indicate that this might be indeed the case.
Secondly, we have only shown that these results hold for subset search problems.
Partition problems or permutation problems are not directly covered by this result.
Thus, it is of interest to the complexity of these problem classes as well.
Furthermore, it is of interest to apply the framework to more subset search problems, i.e. to show that more problems are blow-up (preserving) SSP reducible to \textsc{3Sat}.

\newpage

\bibliography{bib_general,bib_recoverable_robust,bib_reductions}

\newpage

\appendix
% !TEX root = ./main.tex
\section{Blow-Up SSP Reductions for Various Problems}\label{appendix:blow-up-ssp-reduction:reductions}
\begin{mdframed}[nobreak=true]
	\begin{description}
        \item[]\textsc{Satisfiability}\hfill\\
        \textbf{Instances:} Literal Set $L = \fromto{\ell_1}{\ell_n} \cup \fromto{\overline \ell_1}{\overline \ell_n}$, Clauses $C \subseteq 2^L$.\\
        \textbf{Universe:} $L =: \U$.\\
        \textbf{Solution set:} The set of all sets $L' \subseteq \U$ such that for all $i \in \fromto{1}{n}$ we have $|L' \cap \set{\ell_i, \overline \ell_i}| = 1$, and such that $|L' \cap c| \geq 1$ for all $c \in C$.
	\end{description}
\end{mdframed}

We begin with Karp's reduction from \textsc{Satisfiability} to \textsc{3-Satisfiability} \cite{DBLP:conf/coco/Karp72}.
Note that this reduction shows that RR-\textsc{Satisfiability} is also $\Sigma^p_3$-hard.

\begin{mdframed}[nobreak=true]
	\begin{description}
        \item[]\textsc{3-Satisfiability}\hfill\\
        \textbf{Instances:} Literal Set $L = \fromto{\ell_1}{\ell_n} \cup \fromto{\overline \ell_1}{\overline \ell_n}$, Clauses $C \subseteq L^{3}$.\\
        \textbf{Universe:} $L =: \U$.\\
        \textbf{Solution set:} The set of all sets $L' \subseteq \U$ such that for all $i \in \fromto{1}{n}$ we have $|L' \cap \set{\ell_i, \overline \ell_i}| = 1$, and such that $|L' \cap c| \geq 1$ for all $c \in C$.
	\end{description}
\end{mdframed}
The reduction of Karp \cite{DBLP:conf/coco/Karp72} from \textsc{Satisfiability} to \textsc{3-Satisfiability} is an SSP reduction.
Each clause of more than three literals is mapped to a set of clauses of length three by introducing helper variables $h_1, h_2, \ldots$ splitting the clause into smaller clauses.
A clause $\{a, b, c, d, \ldots\}$ with more than three literals is recursively split until there are no more clauses with more than three literals as follows:
$$
    \{a, b, c, d, \ldots\} \mapsto \{a, b, h_i\}, \{\overline h_i, c, d, \ldots\}.
$$
Thus each literal $\ell \in L$ remains in the \textsc{3Sat} instance and are one-to-one correspondent in both instances, i.e. $f(\ell) = \ell$.
Thus this reduction is indeed an SSP reduction:
\[
    \{f(S) : S \subseteq L \ \text{s.t.} \ S \in \sol_\textsc{Sat} \} = \{S' \cap f(L) : S' \subseteq L \ \text{s.t.} \ S' \in \sol_\textsc{3Sat} \}
\]

We are able to extend the reduction with the following blow-up gadget:
For a blow-up factor of $\beta_I$, we add $\beta_I$ new variables $\ell^1, \ldots, \ell^{\beta_I}$ for each existing variable $\ell \in B_{up} \subseteq L$.
By introducing additional clauses $(\ell \lor \overline \ell^i), (\overline \ell \lor \ell^i)$ for all $i \in \fromto{1}{\beta_I}$, we ensure that $\ell$ and all its corresponding blow-up literals $\ell^1, \ldots, \ell^{\beta_I}$ are logically equivalent.
Consequently if the literal $\ell$ is exchanged with $\overline \ell$, then all literals $\ell, \ell^1, \ldots, \ell^{\beta_I}$ have to be exchanged with $\overline \ell, \overline \ell^1, \ldots, \overline \ell^{\beta_I}$.

In order to calculate the $\beta_I$ for the three distance measures, we observe that a \textsc{Sat} solution always includes exactly one of the literals $\ell$ or $\overline{\ell}$.
Thus, Alice may choose all literals of $L \setminus L_b$ wrongly.
Consequently, we get the following $\beta_I$:

\begin{itemize}
    \item $\kappa$-addition: $\beta_I = |L \setminus L_b|/2$
    \item $\kappa$-deletion: $\beta_I = |L \setminus L_b|/2$
    \item Hamming distance: $\beta_I = |L \setminus L_b|$.
\end{itemize}

\begin{mdframed}[nobreak=true]
	\begin{description} 
        \item[]\textsc{Independent Set}\hfill\\
        \textbf{Instances:} Graph $G = (V,E)$, number $k \in \N$.\\
        \textbf{Universe:} Vertex set $V =: \U$.\\
        \textbf{Solution set:} The set of all independent sets of size at least $k$.
	\end{description}
\end{mdframed}
For a reduction from \textsc{3Sat} to \textsc{Independent Set}, we adapt the reduction from \textsc{3Sat} to \textsc{Vertex Cover} by Garey and Johnson \cite{DBLP:books/fm/GareyJ79}, which we already used in \Cref{sec:blow-up-ssp-reduction:reductions}.
Let $(L,C)$ be the \textsc{3Sat} instance and $G = (V,E)$ be the corresponding \textsc{Independent Set} instance.
Each literal $\ell \in L$ is mapped to a corresponding vertex $v_\ell \in V$.
Then, the vertex $v_\ell$ of $\ell$ and the vertex $v_{\overline \ell}$ of its negation $v_{\overline \ell}$ are connected by the edge $\{v_\ell,v_{\overline \ell}\} \in E$.
Next, all clauses $c \in C$ are mapped to a 3-clique.
To be more precise, for every literal $\ell \in c$, there is a vertex $v^c_\ell$, which is connected to all the other vertices $v^c_{\ell'}$ with $\ell \neq \ell' \in c$.
In the end, the vertex $v_{\overline \ell}$ is connected to $v^c_\ell$ by an edge for all clauses $c \in C$.
The threshold of the independent set is set to $k = |L|/2 + |C|$.
A picture of an example reduction can be found in \Cref{fig:reduction:3sat-independent-set}.

% !TEX root = ./main.tex
\tikzstyle{vertex}=[draw,circle,fill=black, minimum size=4pt,inner sep=0pt]
\tikzstyle{edge} = [draw,-]
\begin{figure}[thpb]
\centering
\resizebox{0.67\textwidth}{!}{
\begin{tikzpicture}[scale=1,auto]

\node[vertex] (x1) at (0,0) {}; \node[above] at (x1) {$v_{\ell_1}$};
\node[vertex] (notx1) at (2,0) {}; \node[above] at (notx1) {$v_{\overline \ell_1}$};
\draw[edge] (x1) to (notx1);

\node[vertex] (x2) at (4,0) {}; \node[above] at (x2) {$v_{\ell_2}$};
\node[vertex] (notx2) at (6,0) {}; \node[above] at (notx2) {$v_{\overline \ell_2}$};
\draw[edge] (x2) to (notx2);

\node[vertex] (x3) at (8,0) {}; \node[above] at (x3) {$v_{\ell_3}$};
\node[vertex] (notx3) at (10,0) {}; \node[above] at (notx3) {$v_{\overline \ell_3}$};
\draw[edge] (x3) to (notx3);

\node[vertex] (c1) at (4,-2.25) {}; \node[below] at (c1) {$v^{c_1}_{\overline \ell_1}$};
\node[vertex] (c2) at (5,-1.25) {}; \node[above right] at (c2) {$v^{c_1}_{\overline \ell_2}$};
\node[vertex] (c3) at (6,-2.25) {}; \node[below] at (c3) {$v^{c_1}_{\ell_3}$};
\node[] at (5,-1.92) {$c_1$};
\draw[edge] (c1) to (c2) to (c3) to (c1);
\draw[edge] (x1) to (c1);
\draw[edge] (x2) to (c2);
\draw[edge] (notx3) to (c3);

\node at ($(x1)+(-1,0)$) {$W$};
\draw[dashed,rounded corners] ($(x1)+(-.5,+.7)$) rectangle ($(notx3) + (.5,-.4)$);

\end{tikzpicture}
}
\caption{Classical reduction of \textsc{3Sat} to \textsc{Independent Set} for $\varphi = (\overline \ell_1 \lor \overline \ell_2 \lor \ell_3)$.}
\label{fig:reduction:3sat-independent-set}
\end{figure}

To show that this reduction is an SSP reduction, we observe that each literal $\ell$ is mapped to exactly one vertex $v_\ell$, i.e. $f(\ell) = v_\ell$.
Furthermore, it holds that literal $\ell$ is in the \textsc{3Sat} solution if and only if vertex $v_\ell$ is in the \textsc{Independent Set} solution. Thus,
\begin{align*}
    \{f(S) : S \subseteq L \ \text{s.t.} \ S \in \sol_{\textsc{3Sat}}\} &= \{f(S) : f(S) \subseteq W \ \text{s.t.} \ S \in \sol_{\textsc{3Sat}}\}\\
    &= \{S' \cap f(L) : S' \subseteq W \ \text{s.t.} \ S' \in \sol_{IS}\}\\
    &= \{S' \cap f(L) : S' \in \sol_{IS}\}.
\end{align*}
Thus, the SSP reduction is indeed correct.

Moreover, we show that a blow-up gadget exists.
This is also a complete bipartite graph $K_{\beta_I+1, \beta_I+1}$ as in the \textsc{Vertex Cover} reduction and is depicted in \Cref{fig:blow-up:3sat-independent-set}.

% !TEX root = ./main.tex
\tikzstyle{vertex}=[draw,circle,fill=black, minimum size=4pt,inner sep=0pt]
\tikzstyle{edge} = [draw,-]
\begin{figure}[thpb]
\centering
\begin{tikzpicture}[scale=1,auto]

\node[vertex] (x11) at (0,0) {}; \node[above] at (x11) {$v_{\ell}$};
\node[vertex] (notx11) at ($(x11) + (2,0)$) {}; \node[above] at (notx11) {$v_{\overline \ell}$};
\node[vertex] (x12) at ($(x11) + (0,1)$) {}; \node[above] at (x12) {$v_{\ell^1}$};
\node[vertex] (notx12) at ($(x11) + (2,1)$) {}; \node[above] at (notx12) {$v_{\overline \ell^1}$};
\node[vertex] (x13) at ($(x11) + (0,2)$) {}; \node[above] at (x13) {$v_{\ell^2}$};
\node[vertex] (notx13) at ($(x11) + (2,2)$) {}; \node[above] at (notx13) {$v_{\overline \ell^2}$};
\draw[edge] (x11) to (notx11);
\draw[edge] (x11) to (notx12);
\draw[edge] (x11) to (notx13);
\draw[edge] (x12) to (notx11);
\draw[edge] (x12) to (notx12);
\draw[edge] (x12) to (notx13);
\draw[edge] (x13) to (notx11);
\draw[edge] (x13) to (notx12);
\draw[edge] (x13) to (notx13);
\node[] (x11c1) at ($(x11) + (-0.5,-0.7)$) {};
\node[] (x11c2) at ($(x11) + (0,-0.7)$) {};
\node[] (x11c3) at ($(x11) + (0.5,-0.7)$) {};
\node[] (notx11c1) at ($(notx11) + (-0.5,-0.7)$) {};
\node[] (notx11c2) at ($(notx11) + (0,-0.7)$) {};
\node[] (notx11c3) at ($(notx11) + (0.5,-0.7)$) {};

\draw[-, densely dotted] (x11) to (x11c1);
\draw[-, densely dotted] (x11) to (x11c2);
\draw[-, densely dotted] (x11) to (x11c3);
\draw[-, densely dotted] (notx11) to (notx11c1);
\draw[-, densely dotted] (notx11) to (notx11c2);
\draw[-, densely dotted] (notx11) to (notx11c3);

\node at ($(x13)+(-1,0)$) {$x$};
\draw[dashed,rounded corners] ($(x11)+(-.5,-.3)$) rectangle ($(notx13) + (.5,+.6)$);

\end{tikzpicture}
\caption{Blow-up gadget for the reduction of \textsc{3Sat} to \textsc{Independent Set} with blow-up factor of $\beta = 2$.}
\label{fig:blow-up:3sat-independent-set}
\end{figure}
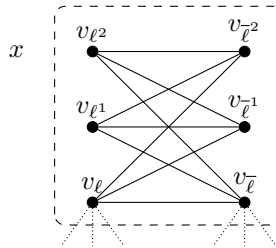

The argument is analogous to the one from the \textsc{Vertex Cover} reduction in \Cref{sec:blow-up-ssp-reduction:reductions}.
That is, either all literal vertices $v_{\ell^1}, \ldots, v_{\ell^\beta}$ or all literal vertices $v_{\overline \ell^1}, \ldots, v_{\overline \ell^\beta}$ are in the solution since the edge between $v_{\ell^i}$ and $v_{\overline \ell^j}$ for some $i \neq j$ makes the independent set invalid.
In order to calculate the $\beta_I$, we again analyse the solution structure.
Exactly one of the vertices $v_\ell$ and $v_{\overline \ell}$ for $\ell, \overline{\ell} \in L \setminus L_b$ need to be included as well as exactly one of $v^{c_j}_{\ell_1}, v^{c_j}_{\ell_2}, v^{c_j}_{\ell_3}$.
If we assume all of them are wrongly chosen, we receive the following $\beta_I$ for the three distance measures:
\begin{itemize}
    \item $\kappa$-addition: $\beta_I = |C| + |L \setminus L_b|/2$
    \item $\kappa$-deletion: $\beta_I = |C| + |L \setminus L_b|/2$
    \item Hamming distance: $\beta_I = 2|C| + |L \setminus L_b|$.
\end{itemize}

\begin{mdframed}[nobreak=true]
	\begin{description}   
        \item[]\textsc{Subset Sum}\hfill\\
        \textbf{Instances:} Numbers $\fromto{a_1}{a_n} \subseteq \N$, and target value $M \in \N$.\\
        \textbf{Universe:} $\fromto{a_1}{a_n} =: \U$.\\
        \textbf{Solution set:} The set of all sets $S \subseteq \U$ with $\sum_{a_i \in S}a_i = M$.
	\end{description}
\end{mdframed}
The reduction by Sipser \cite{DBLP:books/daglib/0086373} from \textsc{3Sat} to \textsc{Subset Sum} is an SSP reduction and be extended to a blow-up SSP reduction.
Let $(L, C)$ be the \textsc{3Sat} instance and $(\fromto{a_1}{a_n}, M)$ be the \textsc{Subset Sum} instance.
We create a table as in \Cref{fig:reduction:3sat-subset-sum} with a column for all variables and for all clauses.
We guarantee that the summation of numbers does not create carry-overs to neighboring columns by choosing a high enough (but constant) padding. 
The idea is that the target sum guarantees that exactly one literal of $\ell_i$ and $\overline \ell_i$ for the corresponding variable $x_i$ is taken into the solution.
Furthermore, each clause $c_j$ is satisfied by at least one literal such that target sum of that column of $c_j$ is reached by adding additional elements that have a one or two in that column.
Thus every number has a length of $|L|/2 + |C|$ bits.

% !TEX root = ./main.tex
\tikzstyle{vertex}=[draw,circle,fill=black, minimum size=4pt,inner sep=0pt]
\tikzstyle{edge} = [draw,-]
\begin{figure}[thpb]
\centering
\begin{tabular}{c|c|c|c|c|}
 & $x_1$ & $x_2$ & $x_3$ & $c_1 = \overline \ell_1 \lor \overline \ell_2 \lor \ell_3$ \\
\hline
$s_1$ & 1 & 0 & 0 & 0 \\
\hline
$s_2$ & 1 & 0 & 0 & 1 \\
\hline
$s_3$ & 0 & 1 & 0 & 0 \\
\hline
$s_4$ & 0 & 1 & 0 & 1 \\
\hline
$s_5$ & 0 & 0 & 1 & 1 \\
\hline
$s_6$ & 0 & 0 & 1 & 0 \\
\hline
$s_7$ & 0 & 0 & 0 & 1\\
\hline
$s_8$ & 0 & 0 & 0 & 2\\
\hline
$\Sigma$ & 1 & 1 & 1 & 4\\
\end{tabular}
\caption{Classical reduction of \textsc{3Sat} to \textsc{Subset Sum} for $\varphi = (\overline \ell_1 \wedge \overline \ell_2 \wedge \ell_3)$.}
\label{fig:reduction:3sat-subset-sum}
\end{figure}

Each literal $\ell_i$ is mapped to a number $a_{\ell_i}$ that contains a 1 in the column of the corresponding variable $x_i$.
Furthermore, it contains a 1 in the column of every clause $c_j$ that $\ell_i$ is contained in, i.e. $\ell_i \in c_j$.
In the end, we add two numbers for each clause $c_j$, one which has a 1 ($a^1_{c_j}$) and one which has a 2 ($a^2_{c_j}$) in the column of clause $c_j$.
We then, we define the target sum $M$ to be 1 in every variable column and 4 in every clause column.

Each literal $\ell_i$ corresponds to the number $a_{\ell_i}$, i.e. $f(\ell_i) = a_{\ell_i}$.
Because M has a 1 in every variable column, exactly one of $\ell_i$ and $\overline \ell_i$ can be part of the solution.
Furthermore, a satisfying assignment of the \textsc{Sat} instance guarantees that for each $c_j$, the column of $c_j$ can be complemented to a sum of 4 because at least one literal adds a one to that column.
On the other hand, if the target sum $M$ is reached there has to be some $a_{\ell_i}$ that induces a 1 to column $c_j$ for all $c_j \in C$.
Therefore, this reduction is an SSP reduction.

Moreover, the reduction can be extended to a blow-up SSP reduction by defining a blow-up gadget.
For this, we expand the table as depicted in \Cref{fig:blow-up:3sat-subset-sum}.

% !TEX root = ./main.tex
\tikzstyle{vertex}=[draw,circle,fill=black, minimum size=4pt,inner sep=0pt]
\tikzstyle{edge} = [draw,-]
\begin{figure}[thpb]
\centering
\begin{tabular}{c|c|c|c|c|c|c|c|c|c|}
 & $x_1$ & $\ell^1_1$ & $\ell^2_1$ & $\ell^3_1$  & $\overline \ell^1_1$ & $\overline \ell^2_1$ & $\overline \ell^3_1$ & \ldots & $c_1 = \overline \ell_1 \lor \overline \ell_2 \lor \ell_3$ \\
\hline
$s_1$ & 1 & 0 & 0 & 0 & 1 & 1 & 1 & \ldots & 0 \\
\hline
$s_2$ & 1 & 1 & 1 & 1 & 0 & 0 & 0 & \ldots & 1 \\
\hline
$s^1_1$ & 0 & 1 & 0 & 0 & 0 & 0 & 0 & \ldots & 0 \\
\hline
$s^2_1$ & 0 & 0 & 1 & 0 & 0 & 0 & 0 & \ldots & 0 \\
\hline
$s^3_1$ & 0 & 0 & 0 & 1 & 0 & 0 & 0 & \ldots & 0 \\
\hline
$s^1_2$ & 0 & 0 & 0 & 0 & 1 & 0 & 0 & \ldots & 0 \\
\hline
$s^2_2$ & 0 & 0 & 0 & 0 & 0 & 1 & 0 & \ldots & 0 \\
\hline
$s^3_2$ & 0 & 0 & 0 & 0 & 0 & 0 & 1 & \ldots & 0 \\
\hline
$s_3$ & 0 & 0 & 0 & 0 & 0 & 0 & 0 & \ldots & 1 \\
\hline
$s_4$ & 0 & 0 & 0 & 0 & 0 & 0 & 0 & \ldots & 2 \\
\hline
$\Sigma$ & 1 & 1 & 1 & 1 & 1 & 1 & 1 & \ldots & 4\\
\end{tabular}
\caption{Blow-up gadget for the reduction of \textsc{3Sat} to \textsc{Subset Sum} with blow up $\beta_I = 3$.}
\label{fig:blow-up:3sat-subset-sum}
\end{figure}

Let $\beta_I$ be the blow-up factor.
For every literal $\ell$, we add $\beta_I$ additional columns representing a copy of $\ell$.
We then take the number $a_{\ell_i}$ and add 1s to the $\beta_I$ newly introduced columns.
That also means that the number $a_{\ell_i}$ now has a 0 in the $\beta_I$ columns of $\overline \ell_i$.
The target sum is extended with 1s for each of the $\beta_I|L|$ newly introduced columns.
In order to fill up the 0s for every number, we add $\beta_I \cdot |L|$ new numbers $b^\ell_i$, $1 \leq i \leq \beta_I$, having exactly one 1 in each of the newly introduced columns for every literal $\ell$.
The logic of the original reduction is still valid, that is $f(\ell) = a_{\ell_i}$.
However, if $a_{\ell}$ is taken into the solution, $\beta_I$ many numbers $b^\ell_i$, $1 \leq i \leq \beta_I$, have also taken into the solution to fill up the newly introduced 0 columns.
Thus if the variable assignment shall be changed, $\beta_I + 1$ numbers ($a_{\ell_i}$ and the $b^\ell_i$, $1 \leq i \leq \beta_I$,) have to be exchanged inducing a large distance between the solutions.

For computing the $\beta_I$ for the three distance measures, we note that the literal numbers $a_{\ell_i}$ might be wrongly chosen for all $L \setminus L_b$ as well as the numbers $a^1_{c_j}$ and $a^2_{c_j}$ for $c_j \in C$, whereby one of $a^1_{c_j}$ and $a^2_{c_j}$ is always in the solution.
Thus, we have:
\begin{itemize}
    \item $\kappa$-addition: $\beta_I = |C| + |L \setminus L_b|/2$
    \item $\kappa$-deletion: $\beta_I = |C| + |L \setminus L_b|/2$
    \item Hamming distance: $\beta_I = 2|C| + |L \setminus L_b|$.
\end{itemize}

\begin{mdframed}[nobreak=true]
	\begin{description}
        \item[]\textsc{Directed Hamiltonian Path}\hfill\\
        \textbf{Instances:} Directed Graph $G = (V, A)$, Vertices $s, t \in V$.\\
        \textbf{Universe:} Arc set $A =: \U$.\\
        \textbf{Solution set:} The set of all sets $C \subseteq A$ forming a Hamiltonian path going from $s$ to $t$.
	\end{description}
\end{mdframed}

The reduction from \textsc{3Sat} to \textsc{Directed Hamiltonian Path} from Arora and Barak \cite{DBLP:books/daglib/0023084} is an SSP reduction and can be extended to a blow-up SSP reduction.
We transform the \textsc{3Sat} instance $(L, C)$ to the \textsc{Directed Hamiltonian Path} instance $G = (V, A)$ as depicted in \Cref{fig:reduction:3sat-directed-hamiltonian-path}.
We transform each pair of literals $(\ell_i, \overline \ell_i)$ (or variable $x_i$) to a path of vertices $v^1_i, \ldots, v^{4|C|}_i$.
The arcs between vertices $v^1_i, \ldots, v^{4|C|}_i$ go in either direction.
If the a solution goes from $v^1_i$ to $v^{4|C|}_i$ it represents taking literal $\ell_i$ into the solution, and if a solution goes from $v^{4|C|}_i$ to $v^1_i$, it represents taking literal $\overline \ell_i$ into the solution.
For each clause $c_j \in C$ a vertex $v_{c_j}$ is introduced and connect it to the paths induced by the literals as follows.
For a non-negated literal $\ell_i$ in clause $c_j$, we add arcs $(v^{4j-1}_i, c_j)$ and $(c_j, v^{4j-2}_i)$.
If the literal is negated, however, we add for $\overline \ell_i \in c_j$ the arcs $(v^{4j-2}, c_j)$ and $(c_j, v^{4j-1})$.
At last, we introduce the start vertex $s$ and target vertex $t$, which we connect to the graph with arcs $(s, v^1_1)$ and $(s, v^{4|C|}_1)$, $(v^1_{|L|/2}, t)$ and $(v^{4|C|}_{|L|/2}, t)$ as well as $(v^1_i, v^{4|C|}_{i+1})$ and $(v^{4|C|}_{i}, v^1_{i+1})$, for all $i \in \fromto{1}{{|L|/2}-1}$.

A Hamiltonian path from $s$ to $t$ has to go over all literal paths in exactly one direction, while the clause vertices $v_{c_j}$ are taken, whenever it is possible.
Thus, the arc $(v^1_i, v^2_i)$ represents literal $\ell_i$ and the reversed arc $(v^2_i, v^1_i)$ represents the literal $\overline \ell_i$, i.e. $f(\ell_i) = (v^1_i, v^2_i)$ and $f(\overline \ell_i) = (v^2_i, v^1_i)$.

% !TEX root = ./main.tex
\tikzstyle{vertex}=[draw,circle,fill=black, minimum size=4pt,inner sep=0pt]
\tikzstyle{edge} = [draw,-]
\tikzstyle{arc} = [draw,->]
\tikzstyle{doublearc} = [draw,<->]
\begin{figure}[thpb]
\centering
\begin{tikzpicture}[scale=1,auto]

\node[vertex] (s) at (0,5) {}; \node[above] at (s) {$s$};
\node[vertex] (t) at (0,1) {}; \node[below] at (t) {$t$};

\node[] (x1s) at (-2.5,4) {}; \node[right] at (x1s) {$x_1$};
\node[vertex] (x11) at ($(x1s) + (1,0)$) {};
\node[vertex] (x12) at ($(x1s) + (2,0)$) {};
\node[vertex] (x13) at ($(x1s) + (3,0)$) {};
\node[vertex] (x14) at ($(x1s) + (4,0)$) {};
\draw[doublearc] (x11) to (x12);
\draw[doublearc] (x12) to (x13);
\draw[doublearc] (x13) to (x14);

\node[] (x2s) at (-2.5,3) {}; \node[right] at (x2s) {$x_2$};
\node[vertex] (x21) at ($(x2s) + (1,0)$) {};
\node[vertex] (x22) at ($(x2s) + (2,0)$) {};
\node[vertex] (x23) at ($(x2s) + (3,0)$) {};
\node[vertex] (x24) at ($(x2s) + (4,0)$) {};
\draw[doublearc] (x21) to (x22);
\draw[doublearc] (x22) to (x23);
\draw[doublearc] (x23) to (x24);

\node[] (x3s) at (-2.5,2) {}; \node[right] at (x3s) {$x_3$};
\node[vertex] (x31) at ($(x3s) + (1,0)$) {};
\node[vertex] (x32) at ($(x3s) + (2,0)$) {};
\node[vertex] (x33) at ($(x3s) + (3,0)$) {};
\node[vertex] (x34) at ($(x3s) + (4,0)$) {};
\draw[doublearc] (x31) to (x32);
\draw[doublearc] (x32) to (x33);
\draw[doublearc] (x33) to (x34);

\draw[arc] (s) to (x11);
\draw[arc] (s) to (x14);
\draw[arc] (x11) to (x21);
\draw[arc] (x11) to (x24);
\draw[arc] (x14) to (x21);
\draw[arc] (x14) to (x24);
\draw[arc] (x21) to (x31);
\draw[arc] (x21) to (x34);
\draw[arc] (x24) to (x31);
\draw[arc] (x24) to (x34);
\draw[arc] (x31) to (t);
\draw[arc] (x34) to (t);

\node[vertex] (c1) at (4,3) {}; \node[right] at (c1) {$c_1 = \overline \ell_1 \lor \overline \ell_2 \lor \ell_3$};
\draw[arc, bend left] (x13) to (c1);
\draw[arc, bend right] (c1) to (x12);
\draw[arc, bend left] (x23) to (c1);
\draw[arc, bend right] (c1) to (x22);
\draw[arc, bend right] (x32) to (c1);
\draw[arc, bend left] (c1) to (x33);

%\node at ($(x1)+(-1,0)$) {$V'$};
%\draw[dashed,rounded corners] ($(x1)+(-.5,+.7)$) rectangle ($(notx3) + (.7,-.4)$);

\end{tikzpicture}
\caption{Classical reduction of \textsc{3Sat} to \textsc{Directed Hamiltonian Path} for $\varphi = (\overline \ell_1 \lor \overline \ell_2 \lor \ell_3)$.}
\label{fig:reduction:3sat-directed-hamiltonian-path}
\end{figure}

We can also find a blow-up gadget for this reduction.
The idea is to lengthen the path by $\beta_I$ additional vertices.
Thus, we also receive $\beta_I$ additional arcs.
In \Cref{fig:blow-up:3sat-directed-hamiltonian-path}, an example can be found, where $b_1^{1}, b_1^{2}$ and $b_1^{3}$ are the blow-up vertices, which also introduce the additional arcs.

% !TEX root = ./main.tex
\tikzstyle{vertex}=[draw,circle,fill=black, minimum size=4pt,inner sep=0pt]
\tikzstyle{edge} = [draw,-]
\tikzstyle{arc} = [draw,->, bend left]
\tikzstyle{backarc} = [draw,<-, bend right]
\tikzstyle{doublearc} = [draw,<->]
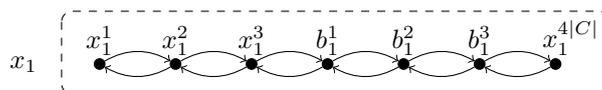
\begin{figure}[thpb]
\centering
\begin{tikzpicture}[scale=1,auto]

\node[vertex] (x11) at (-2.5,4) {}; \node[above] at (x11) {$x^{1}_1$};
\node[vertex] (x12) at ($(x11) + (1,0)$) {}; \node[above] at (x12) {$x^{2}_1$};
\node[vertex] (x13) at ($(x11) + (2,0)$) {}; \node[above] at (x13) {$x^{3}_1$};
\node[vertex] (x131) at ($(x11) + (3,0)$) {}; \node[above] at (x131) {$b^{1}_1$};
\node[vertex] (x132) at ($(x11) + (4,0)$) {}; \node[above] at (x132) {$b^{2}_1$};
\node[vertex] (x133) at ($(x11) + (5,0)$) {}; \node[above] at (x133) {$b^{3}_1$};
\node[vertex] (x14) at ($(x11) + (6,0)$) {}; \node[above] at ($(x14)+(0.2,0)$) {$x^{4|C|}_1$};
\draw[arc] (x11) to (x12);
\draw[arc] (x12) to (x13);
\draw[arc] (x13) to (x131);
\draw[arc] (x131) to (x132);
\draw[arc] (x132) to (x133);
\draw[arc] (x133) to (x14);
\draw[backarc] (x11) to (x12);
\draw[backarc] (x12) to (x13);
\draw[backarc] (x13) to (x131);
\draw[backarc] (x131) to (x132);
\draw[backarc] (x132) to (x133);
\draw[backarc] (x133) to (x14);

\node at ($(x11)+(-1,0)$) {$x_1$};
\draw[dashed,rounded corners] ($(x11)+(-.5,+.7)$) rectangle ($(x14) + (.7,-.4)$);

\end{tikzpicture}
\caption{Blow-up gadget for the reduction of \textsc{3Sat} to \textsc{Directed Hamiltonian Path} with blow up $\beta_I = 3$.}
\label{fig:blow-up:3sat-directed-hamiltonian-path}
\end{figure}

For computing the $\beta_I$ for the three distance measures, the solution can be wrong on all literals in $L \setminus L_b$ and in this case the clause vertices $c_j$ need to be included in the Hamiltonian path via a different literal path.
Furthermore, the arcs connecting the literals as well as vertices $s$ and $t$ need to be changed for the literals from $L \setminus L_b$.
Thus, we get the following $\beta_I$:
\begin{itemize}
    \item $\kappa$-addition: $\beta_I = 2|C| + (4|C|+2) \cdot |L \setminus L_b|/2$
    \item $\kappa$-deletion: $\beta_I = 2|C| + (4|C|+2) \cdot |L \setminus L_b|/2$
    \item Hamming distance: $\beta_I = 4|C| + (8|C|+4) \cdot |L \setminus L_b|/2$.
\end{itemize}

\begin{mdframed}[nobreak=true]
	\begin{description}
        \item[]\textsc{Directed Two Disjoint Path}\hfill\\
        \textbf{Instances:} Graph $G = (V, A)$, $s_i, t_i \in V$ for $i \in \{1, 2\}$.\\
        \textbf{Universe:} Arc set $A =: \U$.\\
        \textbf{Solution set:} The set of all disjoint paths $(P_1, P_2)$ such that $P_i$ is the path from $s_i$ to $t_i$ for $i \in \{1, 2\}$.
	\end{description}
\end{mdframed}

The reduction by Fortune, Hopcroft and Wyllie \cite{DBLP:journals/tcs/FortuneHW80} is an SSP reduction, which is also extendable to a blow-up SSP reduction.
The reduction makes extensive use of a switch gadget, which is depicted in \Cref{fig:reduction:3sat-directed-two-disjoint-path:switch}.
The gadget has four input arcs, $B,C,W$ and $Y$, and four output arcs, $A,D,X$ and $Z$.
The idea of this switch gadget is that if you have two disjoint paths going through the gadget entering at $B$ and $C$, then the path entering at $B$ must leave at $D$ and the one entering at $C$ must leave at $A$, and additionally either a path from $W$ to $X$ exists or a path from $Y$ to $Z$ exists.
We can then use the first of the two paths to run first through the switches and then to the rest of the construction and the second path to run through the switches as in visualized in \Cref{fig:reduction:3sat-directed-two-disjoint-path}.
In doing so, the second path running only through the switches controls that the first path running through the construction is only able to satisfy the clauses according to the literals in the solution.

% !TEX root = ./main.tex
\tikzstyle{vertex}=[draw,circle,fill=black, minimum size=4pt,inner sep=0pt]
\tikzstyle{edge} = [draw,-]
\tikzstyle{arc} = [draw,->]
\tikzstyle{doublearc} = [draw,<->]
\begin{figure}[thpb]
\centering
\begin{tikzpicture}[scale=0.5,auto]

\node[vertex] (c) at (0,0) {};
\node[vertex] (a) at (0,-6) {};
\draw[arc] ($(c) + (0,1)$) to node[left] {$C$} (c);
\draw[arc] (a) to node[left] {$A$} ($(a) - (0,1)$);

\node[vertex] (01) at ($(c) + (-1,-1)$) {};
\node[vertex] (02) at ($(c) + (-1,-2)$) {};
\node[vertex] (03) at ($(c) + (-1,-3)$) {};
\node[vertex] (04) at ($(c) + (-1,-4)$) {};
\node[vertex] (05) at ($(c) + (-1,-5)$) {};
\draw[arc] (c) to (01);
\draw[arc] (01) to (02);
\draw[arc] (02) to (03);
\draw[arc] (03) to (04);
\draw[arc] (04) to (05);
\draw[arc] (05) to (a);

\node[vertex] (11) at ($(c) + (1,-1)$) {};
\node[vertex] (12) at ($(c) + (1,-2)$) {};
\node[vertex] (13) at ($(c) + (1,-3)$) {};
\node[vertex] (14) at ($(c) + (1,-4)$) {};
\node[vertex] (15) at ($(c) + (1,-5)$) {};
\draw[arc] (c) to (11);
\draw[arc] (11) to (12);
\draw[arc] (12) to (13);
\draw[arc] (13) to (14);
\draw[arc] (14) to (15);
\draw[arc] (15) to (a);

\node[vertex] (08) at (-4,-1) {};
\node[vertex] (09) at (-3,-1) {};
\node[vertex] (00) at (-2,-1) {};
\draw[arc] ($(08) - (1,0)$) to node[above] {$W$} (08);
\draw[arc] (08) to (09);
\draw[arc] (09) to (00);
\draw[arc] (00) to (01);
\draw[arc, out=135, in=270, looseness=0.5] (15) to (09);

\node[vertex] (18) at (4,-1) {};
\node[vertex] (19) at (3,-1) {};
\node[vertex] (10) at (2,-1) {};
\draw[arc] ($(18) + (1,0)$) to node[above] {$Y$} (18);
\draw[arc] (18) to (19);
\draw[arc] (19) to (10);
\draw[arc] (10) to (11);
\draw[arc, out=45, in=270, looseness=0.5] (05) to (19);

\node[vertex] (b) at (2,-6) {};
\draw[arc] (b) to (04);
\draw[arc] (b) to (14);
\draw[arc] ($(b) - (0,1)$) to node[right] {$B$} (b);

\node[vertex] (d) at (2,0) {};
\draw[arc] (00) to (d);
\draw[arc] (10) to (d);
\draw[arc] (d) to node[right] {$D$} ($(d) + (0,1)$);

\node[vertex] (011) at (-4,-2) {};
\draw[arc] (02) to (011);
\draw[arc] (011) to node[below] {$X$} ($(011) - (1,0)$);

\node[vertex] (111) at (4,-2) {};
\draw[arc] (12) to (111);
\draw[arc] (111) to node[below] {$Z$} ($(111) + (1,0)$);

\end{tikzpicture}
\caption{The switch gadget.}
\label{fig:reduction:3sat-directed-two-disjoint-path:switch}
\end{figure}

Let $(L, C)$ be the \textsc{3Sat} instance and $G = (V, A, s_1, t_1, s_2, t_2)$ be the resulting \textsc{Directed Two Disjoint Path} instance.
First, we introduce four vertices $s_1, t_1, s_2, t_2$ representing the start and ends of the two disjoint paths.
For every literal $\ell$, we create a path $\ell^1, \ldots, \ell^{4|C|}$ of $4|C|$ vertices.
For every literal pair $\ell_i, \overline \ell_i$ (or variable $x_i$), we connect the paths by introducing two additional vertices $x^s_i$ and $x^t_i$ with arcs $(x^s_i, \ell^1_i), (x^s_i, \overline \ell^1_i)$ and $(\ell^{4|C|}_i, x^t_i), (\overline \ell^{4|C|}_i, x^t_i)$.
We connect the literal gadgets by adding the arcs $(x^t_i, x^t_{i+1})$ for all $i \in \fromto{1}{|L|/2-1}$.
For each clause $c_j \in C$, we add two vertices $c^1_j$ and $c^2_j$ and connect them by three arcs.
We connect these clause vertices with the arcs $(c^2_{|C|}, t_1)$ as well as $(c^2_j, c^1_{j+1})$ for $j \in \fromto{1}{|C|-1}$.
At last, we connect the literal paths with the clause path with an arc $(x^t_{|L|/2}, c^1_1)$.

Now, we have everything to introduce the switches into the construction.
We stack the switches one after another by merging the $C$ and $D$ input arc and the $A$ and $B$ input arcs, respectively.
Then, we connect $s_2$ to the input arc $C$ of the last switch in the stack and $t_2$ to the output arc $A$ of the first switch of the complete switch stack.
We do this analogously for $s_1$, which we connect to the input arc $B$ of the first switch of the stack and the rest of the construction with the output arc $D$ of the last switch of the stack.
Thus both path run through the switch stack as described above.
At last, we use the switches to check whether the \textsc{Sat} assignment is correct.
For this, we use the schematic description of a switch as depicted in \Cref{fig:reduction:3sat-directed-two-disjoint-path:switch:schema}.

% !TEX root = ./main.tex
\tikzstyle{vertex}=[draw,circle,fill=black, minimum size=4pt,inner sep=0pt]
\tikzstyle{edge} = [draw,-]
\tikzstyle{arc} = [draw,->]
\tikzstyle{doublearc} = [draw,<->]
\begin{figure}[thpb]
\centering
\begin{tikzpicture}[scale=0.5,auto]

\node[vertex, label=above:{}] (w) at (0,0) {};
\node[vertex, label=below:{}] (x) at (0,-2) {};
\node[vertex, label=above:{}] (y) at (2,0) {};
\node[vertex, label=below:{}] (z) at (2,-2) {};

\draw[arc] (w) to (x);
\draw[arc] (y) to (z);
\draw[edge] (0,-1) to (2,-1);

\end{tikzpicture}
\caption{The schematic switch gadget.}
\label{fig:reduction:3sat-directed-two-disjoint-path:switch:schema}
\end{figure}
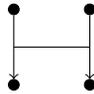

The arc $(\overline \ell^{4j-2}_i, \overline \ell^{4j-1}_i)$ is connected to $(c^1_j, c^2_j)$ by a switch if and only if the corresponding literal $\ell_i \in L$ is in clause $c_j \in C$.
More precisely, the arc $(\overline \ell^{4j-2}_i, \overline \ell^{4j-1}_i)$ is substituted by using the input arc $W$ from $\overline \ell^{4j-2}_i$ and output arc $X$ to $\overline \ell^{4j-1}_i$ and for the clause vertices $c^1_j$ is incident to input arc $Y$ and output arc $Z$ is incident to $c^2_j$.
Because only one path either from $W$ to $X$ or from $Y$ to $Z$ is usable, the path has to go from $s_1$ through the switch stack, then over the literal paths of the literals that are in the solution and at last over the clause vertices to $t_1$.
If for a clause there is no literal satisfying it, the path in the switch is blocked.
The full construction can be found in \Cref{fig:reduction:3sat-directed-two-disjoint-path}.

% !TEX root = ./main.tex
\tikzstyle{vertex}=[draw,circle,fill=black, minimum size=4pt,inner sep=0pt]
\tikzstyle{edge} = [draw,-]
\tikzstyle{arc} = [draw,->]
\tikzstyle{doublearc} = [draw,<->]
\begin{figure}[thpb]
\centering
\begin{tikzpicture}[scale=0.5,auto]

\node[vertex] (x1s) at (1,0) {};
\node[vertex] (x101) at ($(x1s) + (1,1)$) {};
\node[vertex] (x102) at ($(x1s) + (2,1)$) {};
\node[vertex] (x103) at ($(x1s) + (3,1)$) {};
\node[vertex] (x104) at ($(x1s) + (4,1)$) {}; \node[above left] at (x101) {$\overline \ell_1$};
\node[vertex] (x111) at ($(x1s) + (1,-1)$) {};
\node[vertex] (x112) at ($(x1s) + (2,-1)$) {};
\node[vertex] (x113) at ($(x1s) + (3,-1)$) {};
\node[vertex] (x114) at ($(x1s) + (4,-1)$) {}; \node[below left] at (x111) {$\ell_1$};
\node[vertex] (x1t) at ($(x1s) + (5,0)$) {};
\draw[arc] (x1s) to (x101);
\draw[arc] (x101) to (x102);
\draw[arc] (x102) to (x103);
\draw[arc] (x103) to (x104);
\draw[arc] (x104) to (x1t);
\draw[arc] (x1s) to (x111);
\draw[arc] (x111) to (x112);
\draw[arc] (x112) to (x113);
\draw[arc] (x113) to (x114);
\draw[arc] (x114) to (x1t);

\node[vertex] (x2s) at (7,0) {};
\node[vertex] (x201) at ($(x2s) + (1,1)$) {};
\node[vertex] (x202) at ($(x2s) + (2,1)$) {};
\node[vertex] (x203) at ($(x2s) + (3,1)$) {};
\node[vertex] (x204) at ($(x2s) + (4,1)$) {}; \node[above left] at (x201) {$\overline \ell_2$};
\node[vertex] (x211) at ($(x2s) + (1,-1)$) {};
\node[vertex] (x212) at ($(x2s) + (2,-1)$) {};
\node[vertex] (x213) at ($(x2s) + (3,-1)$) {};
\node[vertex] (x214) at ($(x2s) + (4,-1)$) {}; \node[below left] at (x211) {$\ell_2$};
\node[vertex] (x2t) at ($(x2s) + (5,0)$) {};
\draw[arc] (x2s) to (x201);
\draw[arc] (x201) to (x202);
\draw[arc] (x202) to (x203);
\draw[arc] (x203) to (x204);
\draw[arc] (x204) to (x2t);
\draw[arc] (x2s) to (x211);
\draw[arc] (x211) to (x212);
\draw[arc] (x212) to (x213);
\draw[arc] (x213) to (x214);
\draw[arc] (x214) to (x2t);

\node[vertex] (x3s) at (13,0) {};
\node[vertex] (x301) at ($(x3s) + (1,1)$) {};
\node[vertex] (x302) at ($(x3s) + (2,1)$) {};
\node[vertex] (x303) at ($(x3s) + (3,1)$) {};
\node[vertex] (x304) at ($(x3s) + (4,1)$) {}; \node[above left] at (x301) {$\overline \ell_3$};
\node[vertex] (x311) at ($(x3s) + (1,-1)$) {};
\node[vertex] (x312) at ($(x3s) + (2,-1)$) {};
\node[vertex] (x313) at ($(x3s) + (3,-1)$) {};
\node[vertex] (x314) at ($(x3s) + (4,-1)$) {}; \node[below left] at (x311) {$\ell_3$};
\node[vertex] (x3t) at ($(x3s) + (5,0)$) {};
\draw[arc] (x3s) to (x301);
\draw[arc] (x301) to (x302);
\draw[arc] (x302) to (x303);
\draw[arc] (x303) to (x304);
\draw[arc] (x304) to (x3t);
\draw[arc] (x3s) to (x311);
\draw[arc] (x311) to (x312);
\draw[arc] (x312) to (x313);
\draw[arc] (x313) to (x314);
\draw[arc] (x314) to (x3t);

\node[vertex] (s1) at (-3,0) {}; \node[above left] at (s1) {$s_1$};
\node[vertex] (b0) at (-2,0) {}; \node[above] at (b0) {$B_1$};
\node[vertex] (d1) at (0,0) {}; \node[below] at (d1) {$D_{\ell\textit{ast}}$};
\node[vertex] (t11) at (19,0) {}; \node[above right] at (t11) {};
\node[vertex] (t1) at (0,-4) {}; \node[above left] at (t1) {$t_1$};
\node[vertex] (s11) at (19,-4) {}; \node[above right] at (s11) {};
\draw[arc] (s1) to (b0);
\draw[arc] (d1) to (x1s);
\draw[arc] (x1t) to (x2s);
\draw[arc] (x2t) to (x3s);
\draw[arc] (x3t) to (t11);
\draw[arc, out=0, in=0, looseness=1] (t11) to (s11);

\node[vertex] (s2) at (6.5,4) {}; \node[above] at (s2) {$s_2$};
\node[vertex] (t2) at (12.5,4) {}; \node[above] at (t2) {$t_2$};
\node[vertex] (c1) at (8,4) {}; \node[below] at (c1) {$C_{\ell\textit{ast}}$};
\node[vertex] (a0) at (11,4) {}; \node[below] at (a0) {$A_1$};
\draw[arc] (s2) to (c1);
\draw[arc] (a0) to (t2);

\node[vertex] (c11) at (8,-4) {}; \node[below] at (c11) {$c^1_1$};
\node[below] (c1) at (9.5,-5) {$c_1$};
\node[vertex] (c12) at (11,-4) {}; \node[below] at (c12) {$c^2_1$};
\draw[arc] (s11) to (c12);
\draw[arc] (c11) to (t1);

\draw[arc] (c12) to (c11);
\draw[arc, in=45, out=135, looseness=0.7] (c12) to (c11);
\draw[arc, in=315, out=225, looseness=0.7] (c12) to (c11);

\draw[edge] (9.5,-4) to ($(x112)+(0.4,0)$);
\draw[edge] (9.5,-3.5) to ($(x212)+(0.4,0)$);
\draw[edge] (9.5,-4.5) to ($(x302)+(0.4,0)$);

\end{tikzpicture}
\caption{Classical reduction of \textsc{3Sat} to \textsc{Directed Two Disjoint Path} for $\varphi = (\overline \ell_1 \lor \overline \ell_2 \lor \ell_3)$.}
\label{fig:reduction:3sat-directed-two-disjoint-path}
\end{figure}

There is a one-to-one correspondence between the literals and the arcs of the path from $s_1$ to $t_1$.
We can define $f(\ell_i) = (x^s_i, \ell^1_i)$ because the path over $(x^s_i, \ell^1_i)$ is taken if and only if $\ell_i$ is in the \textsc{Sat} solution.

The blow-up gadget for this reduction works analogously as the one in the \textsc{Directed Hamiltonian Cycle}.
We introduce $\beta_I$ additional vertices (and therefore arcs) to each path corresponding to a literal $\ell \in L$.
Thus, the functionality of the gadgets is not impaired and at least $\beta_I + 1$ many arcs have to be exchanged to achieve a reassignment of literal $\ell$ to $\overline \ell$.
An example can be found in \Cref{fig:blow-up:3sat-directed-two-disjoint-path}.
For an analysis of the $\beta_I$, we need to closely consider both paths from $s_1$ to $t_1$ and $s_2$ to $t_2$ as well as the switch gadgets.
Again all literals in $L \setminus L_b$ might be chosen incorrectly such that Alice needs to be able to recover from this.
It is easy to see that the path corresponding to that literal need to be changed, these are $4|C|$ arcs.
Additionally, the switch gadgets need to be run through differently, which are up to $5|C|$ arcs for each literal. 
Furthermore, the clauses may need to be passed over different arcs, which are actually 5 arcs in the switch gadgets.
These are again up to $5|C|$ arcs.
Now, we need to consider the switch gadgets in the path from $s_2$ to $t_2$ and in the first half of the path from $s_1$ to $t_1$.
If a literal needs to be changed or a clause needs to be passed on different way (the path from $W$ to $X$ is exchanged with $Y$ to $Z$), the switch gadgets need be run through in a different way.
Thus, the path from $B$ to $D$ and the path from $A$ to $C$ needs to be mirrored and thus exchanged completely.
For every literal from $L \setminus L_b$, this might happen $|C|$ times and $12$ arcs need to be exchanged.
On the other hand for every clause $c_j \in C$, this might happen two times that $12$ arcs need to be exchanged.
Overall, we get the following $\beta_I$:
\begin{itemize}
    \item $\kappa$-addition: $\beta_I = (12+5)|C| + (12|C|+5|C|+4|C|) \cdot |L \setminus L_b|/2$
    \item $\kappa$-deletion: $\beta_I = 17|C| + 21|C| \cdot |L \setminus L_b|/2$
    \item Hamming distance: $\beta_I = 34|C| + 42|C| \cdot |L \setminus L_b|/2$.
\end{itemize}

% !TEX root = ./main.tex
\tikzstyle{vertex}=[draw,circle,fill=black, minimum size=4pt,inner sep=0pt]
\tikzstyle{edge} = [draw,-]
\tikzstyle{arc} = [draw,->]
\tikzstyle{backarc} = [draw,<-, bend right]
\tikzstyle{doublearc} = [draw,<->]
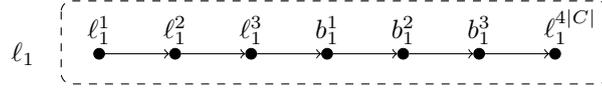
\begin{figure}[thpb]
\centering
\begin{tikzpicture}[scale=1,auto]

\node[vertex] (x11) at (-2.5,4) {}; \node[above] at (x11) {$\ell^{1}_1$};
\node[vertex] (x12) at ($(x11) + (1,0)$) {}; \node[above] at (x12) {$\ell^{2}_1$};
\node[vertex] (x13) at ($(x11) + (2,0)$) {}; \node[above] at (x13) {$\ell^{3}_1$};
\node[vertex] (x131) at ($(x11) + (3,0)$) {}; \node[above] at (x131) {$b^{1}_1$};
\node[vertex] (x132) at ($(x11) + (4,0)$) {}; \node[above] at (x132) {$b^{2}_1$};
\node[vertex] (x133) at ($(x11) + (5,0)$) {}; \node[above] at (x133) {$b^{3}_1$};
\node[vertex] (x14) at ($(x11) + (6,0)$) {}; \node[above] at ($(x14)+(0.2,0)$) {$\ell^{4|C|}_1$};
\draw[arc] (x11) to (x12);
\draw[arc] (x12) to (x13);
\draw[arc] (x13) to (x131);
\draw[arc] (x131) to (x132);
\draw[arc] (x132) to (x133);
\draw[arc] (x133) to (x14);

\node at ($(x11)+(-1,0)$) {$\ell_1$};
\draw[dashed,rounded corners] ($(x11)+(-.5,+.7)$) rectangle ($(x14) + (.7,-.4)$);

\end{tikzpicture}
\caption{Blow-up gadget for the reduction of \textsc{3Sat} to \textsc{Directed Two Disjoint Path} with blow up $\beta_I = 3$.}
\label{fig:blow-up:3sat-directed-two-disjoint-path}
\end{figure}

\begin{mdframed}[nobreak=true]
    \begin{description}
        \item[]\textsc{Steiner Tree}\hfill\\
        \textbf{Instances:} Undirected graph $G = (S \cup T, E)$, set of Steiner vertices $S$, set of terminal vertices $T$, edge weights $c: E \rightarrow \N$, number $k \in \N$.\\
        \textbf{Universe:} Edge set $E =: \U$.\\
        \textbf{Solution set:} The set of all sets $E' \subseteq E$ such that $E'$ is a tree connecting all terminal vertices from $T$ with $\sum_{e' \in E'} c(e') \leq k$.
    \end{description}
\end{mdframed}
There is a folklore reduction from \textsc{3Sat} to \textsc{Steiner Tree}, which is an SSP reduction, and which is depicted in \Cref{fig:reduction:3sat-steiner-tree}.
First, there are designated terminal vertices $s$ and $t$.
For every literal $\ell \in L$, there is a Steiner vertex $v_\ell$.
Additionally for every literal pair $(\ell_i, \overline \ell_i)$, $1 \leq i \leq |L|/2-1$, we add a Steiner vertex $v_i$. We define $v_0 := s$.
Then all of the above vertices are connected into a \enquote{diamond chain}, where we begin with $s$ connected to both $v_{\ell_1}$ and $v_{\overline \ell_1}$.
Both vertices $v_{\ell_1}$ and $v_{\overline \ell_1}$ are connected to $v_1$.
This vertex $v_1$ is then connected to vertices $v_{\ell_2}$ and $v_{\overline \ell_2}$ and so on.
At last, $v_{\ell_{|L|/2}}$ and $v_{\overline \ell_{|L|/2}}$ are connected to $t$.

Furthermore, for every clause $c_j \in C$, we add a corresponding terminal vertex $v_{c_j}$.
The vertex $v_{c_j}$ is then connected its corresponding literals vertices $v_\ell$ for $\ell \in c_j$ via a path of Steiner vertices of length $|L| + 1$.
The costs of every edge is set to $1$ and the threshold is set to $k = |L| + |C| \cdot (|L| + 1)$.

For the correctness, observe that every solution of Steiner tree includes a path from $s$ to $t$ over the literal vertices because all paths over a clause vertex are longer than $|L|$.
This path encodes the set of literals included in a corresponding \textsc{3Sat} solution, where a positive literal $\ell_i$ for $1 \leq i \leq |L|/2$ is included in the \textsc{3Sat} solution if and only if the edge $\{v_{i-1}, v_{\ell_i}\}$ is in the Steiner tree solution.
The analogous statement holds for negative literals.
We therefore define the embedding function $f$ in the above fashion, i.e. for all $\ell \in L$ we have $f(\ell) = \set{v_{i-1}, \ell}$.
Next, for every clause $c_j$, the path from literal $v_\ell$ to terminal vertex $c_j$ for one $\ell \in c_j$ is included in the solution as well.
Thus, $|C|$ paths of length $|L|+1$ are included.
If a clause is not satisfied, then a path of length of at least $|L|+2$ is needed to include a terminal vertex representing one of the clauses, which violates the threshold.
Thus, the reduction is correct.

% !TEX root = ./main.tex
\tikzstyle{vertex}=[draw,circle,fill=black, minimum size=4pt,inner sep=0pt]
\tikzstyle{terminal}=[draw,rectangle,fill=black, minimum size=4pt,inner sep=0pt]
\tikzstyle{edge} = [draw,-]
\begin{figure}[thpb]
\centering
\begin{tikzpicture}[scale=1,auto]

\node[terminal] (s) at (0,0) {}; \node[left] at (s) {$s$};
\node[vertex] (x12) at ($(s) + (2,0)$) {};
\node[vertex] (x23) at ($(s) + (4,0)$) {};
\node[terminal] (t) at ($(s) + (6,0)$) {}; \node[right] at (t) {$t$};

\node[vertex] (x1) at ($(s) + (1,0.5)$) {}; \node[above] at (x1) {$\ell_1$};
\node[vertex] (notx1) at ($(s) + (1,-0.5)$) {}; \node[above] at (notx1) {$\overline \ell_1$};

\node[vertex] (x2) at ($(x12) + (1,0.5)$) {}; \node[above] at (x2) {$\ell_2$};
\node[vertex] (notx2) at ($(x12) + (1,-0.5)$) {}; \node[above] at (notx2) {$\overline \ell_2$};

\node[vertex] (x3) at ($(x23) + (1,0.5)$) {}; \node[above] at (x3) {$\ell_3$};
\node[vertex] (notx3) at ($(x23) + (1,-0.5)$) {}; \node[above] at (notx3) {$\overline \ell_3$};

\draw[edge] (s) to (x1);
\draw[edge] (s) to (notx1);
\draw[edge] (x1) to (x12);
\draw[edge] (notx1) to (x12);
\draw[edge] (x12) to (x2);
\draw[edge] (x12) to (notx2);
\draw[edge] (x2) to (x23);
\draw[edge] (notx2) to (x23);
\draw[edge] (x23) to (x3);
\draw[edge] (x23) to (notx3);
\draw[edge] (x3) to (t);
\draw[edge] (notx3) to (t);

\node[terminal] (c1) at (3,-2.5) {}; \node[below] at (c1) {$c_1 = \overline \ell_1 \lor \overline \ell_2 \lor \ell_3$};

\node[vertex] (11) at ($(notx1) + (0.67,-0.5)$) {};
\node (12) at ($(11) + (0,-0.4)$) {$\vdots$};
\node[vertex] (13) at ($(11) + (0,-1)$) {};

\node[vertex] (21) at ($(notx2) + (0,-0.5)$) {};
\node (22) at ($(21) + (0,-0.4)$) {$\vdots$};
\node[vertex] (23) at ($(21) + (0,-1)$) {};

\node[vertex] (31) at ($(notx3) + (-0.67,-0.5)$) {};
\node (32) at ($(31) + (0,-0.4)$) {$\vdots$};
\node[vertex] (33) at ($(31) + (0,-1)$) {};

\draw[edge] (notx1) to (11);
\draw[edge] (11) to ($(12) + (0,0.18)$);
\draw[edge] (12) to (13);
\draw[edge] (13) to (c1);

\draw[edge] (notx2) to (21);
\draw[edge] (21) to ($(22) + (0,0.18)$);
\draw[edge] (22) to (23);
\draw[edge] (23) to (c1);

\draw[edge] (x3) to (31);
\draw[edge] (31) to ($(32) + (0,0.18)$);
\draw[edge] (32) to (33);
\draw[edge] (33) to (c1);

\node at ($(s)+(-1,0)$) {$V'$};
\draw[dashed,rounded corners] ($(s)+(-0.5,+1)$) rectangle ($(t) + (0.5,-0.67)$);

\end{tikzpicture}
\caption{Classical reduction of \textsc{3Sat} to \textsc{Steiner Tree}.}
\label{fig:reduction:3sat-steiner-tree}
\end{figure}
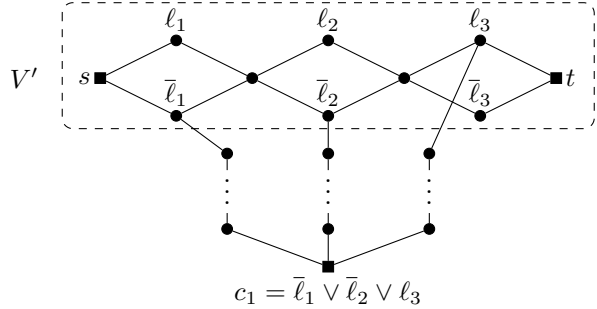

For the blow-up gadget, we add $\beta_I$ new vertices $v_{\ell^i}$, $1 \leq i \leq \beta_I$, for each literal vertex $v_\ell$, $\ell \in L_b$ to the graph.
Moreover, we connect all of these vertices $v_{\ell^i}$ to the existing $v_\ell$ with the edges $\{\{v_\ell, v_{\ell^i}\} : 1 \leq i \leq \beta_I\}$.
At last, we add an edge between each pair of $v_{\ell^i}$ and $v_{\overline \ell^i}$, for $1 \leq i \leq \beta_I$, which is subdivided by one terminal vertex $t^{\ell_i}$.
The construction is depicted in \Cref{fig:blow-up:3sat-steiner-tree}.
The threshold for the Steiner tree is increased by $2\beta_I$ for each pair $\ell, \overline \ell \in L_b$ of blow-up literals.
This blow-up gadget is correct because if vertex $v_\ell$ is connected to the Steiner tree, we are able to include the edges $\{\{v_\ell, v_{\ell^i}\}, \{v_{\ell^i}, t^{\ell_i}\} : 1 \leq i \leq \beta_I\}$, which needs $2\beta_I$ additional edges and thus lies within the threshold.
On the other hand, if an edge from $\{\{v_{\overline \ell}, v_{\overline \ell^i}\}, \{v_{\overline \ell^i}, t^{\ell_i}\} : 1 \leq i \leq \beta_I\}$ is used this edge has to be connected to the Steiner tree.
This is only doable by connecting $v_{\overline \ell}$ to the Steiner tree, which introduces additional cost of one or by connecting it over one of the terminal vertices $t^{\ell_i}$.
However in the last case, this implies that $t^{\ell_i}$ is already connected to the Steiner tree.
Thus, it is an unnecessary edge that does not connect any terminal vertex of the Steiner tree and additional costs of one are introduced.
Because each solution is an optimal Steiner tree, this is a contradiction.
Because the blow-up gadget does not interfere with the functionality of the original reduction, this blow-up gadget is correct.

% !TEX root = ./main.tex
\tikzstyle{vertex}=[draw,circle,fill=black, minimum size=4pt,inner sep=0pt]
\tikzstyle{steiner}=[draw,rectangle,fill=black, minimum size=4pt,inner sep=0pt]
\tikzstyle{edge} = [draw,-]
\begin{figure}[thpb]
\centering
\begin{tikzpicture}[scale=1,auto]

\node[vertex] (x11) at (0,0) {}; \node[above] at (x11) {$v_{\ell}$};
\node[vertex] (notx11) at ($(x11) + (2,0)$) {}; \node[above] at (notx11) {$v_{\overline \ell}$};
\node[vertex] (x12) at ($(x11) + (0,1)$) {}; \node[above] at (x12) {$v_{\ell^1}$};
\node[vertex] (notx12) at ($(x11) + (2,1)$) {}; \node[above] at (notx12) {$v_{\overline \ell^1}$};
\node[vertex] (x13) at ($(x11) + (0,2)$) {}; \node[above] at (x13) {$v_{\ell^2}$};
\node[vertex] (notx13) at ($(x11) + (2,2)$) {}; \node[above] at (notx13) {$v_{\overline \ell^2}$};
\draw[edge,out=180,in=180] (x11) to (x12);
\draw[edge,out=180,in=180] (x11) to (x13);
\draw[edge,out=0,in=0] (notx11) to (notx12);
\draw[edge,out=0,in=0] (notx11) to (notx13);

%\draw[edge] (x11) to (notx11);
%\draw[edge] (x11) to (notx12);
%\draw[edge] (x11) to (notx13);
%\draw[edge] (x12) to (notx11);
\draw[edge] (x12) to (notx12);
%\draw[edge] (x12) to (notx13);
%\draw[edge] (x13) to (notx11);
%\draw[edge] (x13) to (notx12);
\draw[edge] (x13) to (notx13);

%\node[steiner] (s1) at (1,0) {};
\node[steiner] (s1) at (1,1) {}; \node[above] at (s1) {$t^{\ell_1}$};
\node[steiner] (s2) at (1,2) {}; \node[above] at (s2) {$t^{\ell_2}$};

\node[] (x11c1) at ($(x11) + (-0.5,-0.7)$) {};
\node[] (x11c2) at ($(x11) + (0,-0.7)$) {};
\node[] (x11c3) at ($(x11) + (0.5,-0.7)$) {};
\node[] (notx11c1) at ($(notx11) + (-0.5,-0.7)$) {};
\node[] (notx11c2) at ($(notx11) + (0,-0.7)$) {};
\node[] (notx11c3) at ($(notx11) + (0.5,-0.7)$) {};

\draw[-, densely dotted] (x11) to (x11c1);
%\draw[-, densely dotted] (x11) to (x11c2);
\draw[-, densely dotted] (x11) to (x11c3);
\draw[-, densely dotted] (notx11) to (notx11c1);
%\draw[-, densely dotted] (notx11) to (notx11c2);
\draw[-, densely dotted] (notx11) to (notx11c3);

\node at ($(x13)+(-1,0)$) {$x$};
\draw[dashed,rounded corners] ($(x11)+(-.75,-.3)$) rectangle ($(notx13) + (.75,+.6)$);

\end{tikzpicture}
\caption{Blow-up gadget for the reduction of \textsc{3Sat} to \textsc{Steiner Tree} with blow up $\beta = 2$.}
\label{fig:blow-up:3sat-steiner-tree}
\end{figure}
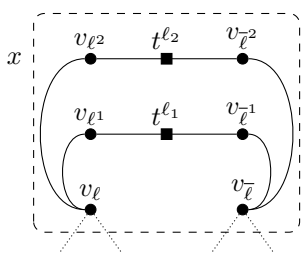

For computing the $\beta_I$ for the three distance measures, we again consider the wrongly chosen literals $L \setminus L_b$ from which Alice has to recover.
A wrongly chosen literal induces that the two edges $\{v_{i-1}, v_\ell\}$ and $\{v_\ell, v_i\}$ have to be exchanged by the edges $\{v_{i-1}, v_{\overline \ell}\}$ and $\{v_{\overline \ell}, v_i\}$.
This literal also may induce $|C| \cdot (|L|+1)$ additional edges because clauses might be connected to it.
Thus, we obtain the following $\beta_I$:
\begin{itemize}
    \item $\kappa$-addition: $\beta_I = |C| \cdot (|L|+1) + 2|L \setminus L_b|$
    \item $\kappa$-deletion: $\beta_I = |C| \cdot (|L|+1) + 2|L \setminus L_b|$
    \item Hamming distance: $\beta_I = 2|C| \cdot (|L|+1) + 4|L \setminus L_b|$.
\end{itemize}

% !TEX root = ./main.tex
\section{Blow-Up Preserving SSP Reductions for Various Problems}\label{appendix:blow-up-preserving-ssp-reduction:reductions}

\begin{mdframed}[nobreak=true]
    \begin{description}
        \item[]\textsc{Set Cover}\hfill\\
        \textbf{Instances:} Sets $S_i \subseteq \fromto{1}{m}$ for $i \in \fromto{1}{n}$, number $k \in \N$.\\
        \textbf{Universe:} $\{S_i \mid i \in \fromto{1}{n}\} =: \U$.\\
        \textbf{Solution set:} The set of all $S \subseteq \bigcup_i S_i$ with $|S| \leq k$ such that $\bigcup_{s \in S} s = \fromto{1}{m}$.
    \end{description}
\end{mdframed}
The reduction from \textsc{Vertex Cover} to \textsc{Set Cover} by Karp \cite{DBLP:conf/coco/Karp72} is a blow-up preserving SSP reduction.
Let $G=(V,E)$ the graph and $k$ be the threshold of the \textsc{Vertex Cover} instance.
Each edge $e \in E$ is mapped to a unique number $n_e$ in $\fromto{1}{|E|}$.
Each vertex $v \in V$ is mapped into a set $S_v = \{n_e \mid e = \{v,w\} \in E\}$.
The threshold $k$ stays the same.
A vertex $v$ is in the \textsc{Vertex Cover} solution if and only if the corresponding set $S_v$ is in the \textsc{Set Cover} solution.
Then, we have $f(v) = S_v$ and $U_{\textit{on}} = U_{\textit{off}} = \emptyset$.

\begin{mdframed}[nobreak=true]
    \begin{description}   
        \item[]\textsc{Hitting Set}\hfill\\
        \textbf{Instances:} Sets $S_j \subseteq \fromto{1}{n}$ for $j \in \fromto{1}{m}$, number $k \in \N$.\\
        \textbf{Universe:} $\fromto{1}{n} =: \U$.\\
        \textbf{Solution set:} The set of all $H \subseteq \fromto{1}{n}$ with $|H| \leq k$ such that $H \cap S_j \neq \emptyset$ for all $j \in \fromto{1}{m}$.
    \end{description}
\end{mdframed}
Karp's reduction from \textsc{Vertex Cover} to \textsc{Hitting Set} \cite{DBLP:conf/coco/Karp72} is a blow-up preserving SSP reduction.
Let $G=(V,E)$ the graph and $k$ be the threshold of the \textsc{Vertex Cover} instance.
Each vertex $v \in V$ is mapped to a unique number $n_v \in \fromto{1}{|V|}$.
Each edge $e = \{v, w\} \in E$ is mapped to the set $S_e = \{n_v, n_w\}$.
The threshold $k$ stays the same.
A vertex $v$ is in the \textsc{Vertex Cover} solution if and only if the corresponding number $n_v$ is in the \textsc{Hitting Set} solution.
Then, we have $f(v) = n_v$ and $U_{\textit{on}} = U_{\textit{off}} = \emptyset$.

\begin{mdframed}[nobreak=true]
    \begin{description}
        \item[]\textsc{Feedback Vertex Set}\hfill\\
        \textbf{Instances:} Directed Graph $G = (V, A)$, number $k \in \N$.\\
        \textbf{Universe:} Vertex set $V =: \U$.\\
        \textbf{Solution set:} The set of all vertices $V'$ of size at most $k$ such that after deleting $V'$ from $G$, the resulting graph is cycle-free (i.e. a forest).
    \end{description}
\end{mdframed}
The reduction by Karp from \textsc{Vertex Cover} to \textsc{Feedback Vertex Set} \cite{DBLP:conf/coco/Karp72} is a blow-up preserving SSP reduction.
Let $G = (V, E)$ be the \textsc{Vertex Cover} instance and $G' = (V', A')$ the \textsc{Feedback Vertex Set} instance.
Every vertex $v \in V$ is mapped to a correspondent vertex $v' \in V'$.
Each edge $e = \{v,w\} \in E$ is mapped to two arcs $(v', w'), (w', v') \in A'$.
The threshold $k$ stays the same.
A vertex $v$ is in the \textsc{Vertex Cover} solution if and only if the corresponding vertex $v'$ is in the \textsc{Feedback Vertex Set} solution.
Then, we have $f(v) = v'$ and $U_{\textit{on}} = U_{\textit{off}} = \emptyset$.

\begin{mdframed}[nobreak=true]
    \begin{description}
        \item[]\textsc{Feedback Arc Set}\hfill\\
        \textbf{Instances:} Directed Graph $G = (V, A)$, number $k \in \N$.\\
        \textbf{Universe:} Arc set $A =: \U$.\\
        \textbf{Solution set:} The set of all arcs $A'$ of size at most $k$ such that after deleting $A'$ from $G$, the resulting graph is cycle-free (i.e. a forest).
    \end{description}
\end{mdframed}
A modification of a reduction by Karp from \textsc{Vertex Cover} to \textsc{Feedback Arc Set} \cite{DBLP:conf/coco/Karp72} is a blow-up preserving SSP reduction.
Let $G = (V, E)$ be the \textsc{Vertex Cover} instance and $G' = (V', A')$ the \textsc{Feedback Arc Set} instance.
Each vertex $v \in V$ is mapped to two vertices $v'_0, v'_1 \in V'$ as well as to an arc $(v'_0, v'_1) \in A'$.
Every edge $\{v, w\} \in E$ to $|V + 1|$ once subdivided arcs from $v_1$ to $w_0$ and to $|V + 1|$ once subdivided arcs from $w_1$ to  $v_0$.
The threshold $k$ stays the same.
A vertex $v$ is in the \textsc{Vertex Cover} solution if and only if the corresponding arc $(v'_0, v'_1)$ is in the \textsc{Feedback Arc Set} solution.
Then, we have $f(v) = (v_0, v_1)$, $U_{\textit{off}} = \{(v'_1, w'_0), (w'_1, v'_0) \mid e = \{v, w\} \in E\}$ and $U_{\textit{on}} = \emptyset$

\begin{mdframed}[nobreak=true]
    \begin{description} 
        \item[]\textsc{Uncapacitated Facility Location}\hfill\\
        \textbf{Instances:} Set of clients $C = \fromto{1}{m}$, set of potential facilities $F = \fromto{1}{n}$, fixed cost of opening facility function $f: F \rightarrow \Z$, service cost function $c: F \times C \rightarrow \Z$, cost threshold $k \in \Z$\\
        \textbf{Universe:} Facility set $F =: \U$.\\
        \textbf{Solution set:} The set of sets $F' \subseteq F$ s.t. $\sum_{i \in F'} f(i) + \sum_{j \in C} \min_{i \in F'} c(i, j) \leq k$.
    \end{description}
\end{mdframed}

\begin{samepage}
    \begin{mdframed}[nobreak=true]
    	\begin{description} 
        \item[]\textsc{p-Center}\hfill\\
        \textbf{Instances:} Set of clients $C = \fromto{1}{m}$, set of potential facilities $F = \fromto{1}{n}$, service cost function $c: F \times C \rightarrow \Z$, facility threshold $p \in \N$, cost threshold $k \in \Z$\\
        \textbf{Universe:} Facility set $F =: \U$.\\
        \textbf{Solution set:} The set of sets $F' \subseteq F$ s.t. $|F'| \leq p$ and $\max_{j \in C} \min_{i \in F'} c(i, j) \leq k$.
    	\end{description}
    \end{mdframed}
\end{samepage}

\begin{mdframed}[nobreak=true]
    \begin{description} 
        \item[]\textsc{p-Median}\hfill\\
        \textbf{Instances:} Set of clients $C = \fromto{1}{m}$, set of potential facilities $F = \fromto{1}{n}$, service cost function $c: F \times C \rightarrow \Z$, facility threshold $p \in \N$, cost threshold $k \in \Z$\\
        \textbf{Universe:} Facility set $F =: \U$.\\
        \textbf{Solution set:} The set of sets $F' \subseteq F$ s.t. $|F'| \leq p$ and $\sum_{j \in C} \min_{i \in F'} c(i, j) \leq k$.
    \end{description}
\end{mdframed}

Note that we define these problems explicitly as SSP and not as LOP because in the standard interpretation, the objective function is not linear.
The reduction by Cornuéjols, Nemhauser, and Wolsey \cite{cornuejols1983uncapicitated} from \textsc{Vertex Cover} to \textsc{Uncapacitated Facility Location} is a blow-up preserving SSP reduction.
The same reduction is also valid for \textsc{p-Center} and \textsc{p-Median}.
Let $G=(V,E)$ be the \textsc{Vertex Cover} instance and $(C, F, f, c)$ be the \textsc{Uncapacitated Facility Location} instance.
We map each vertex $v \in V$ to a facility $i_v \in F$.
Each edge $e \in E$ is mapped to a client $j_e \in C$.
We define $c(i_v, j_e) = 0$ if $v \in e$ and $c(i_v, j_e) = |V|+1$ otherwise.
At last, we set $f(i_v) = 1$ for all $i_v \in F$ and leave the parameter $k$ unchanged.
(For \textsc{p-Center} and \textsc{p-Median}, we use the equivalent constraint defined over the threshold $|F'| \leq p$ by setting $p = k$).
A vertex $v$ is in the \textsc{Vertex Cover} solution if and only if the corresponding facility $i_v$ is in the \textsc{Facility Location} solution.
Then, we have $f(v) = i_v$ and $U_{\textit{on}} = U_{\textit{off}} = \emptyset$.

\begin{mdframed}[nobreak=true]
    \begin{description}   
        \item[]\textsc{Clique}\hfill\\
        \textbf{Instances:} Graph $G = (V, E)$, number $k \in \N$.\\
        \textbf{Universe:} Vertex set $V =: \U$.\\
        \textbf{Solution set:} The set of all cliques of size at least $k$.
    \end{description}
\end{mdframed}
There is a reduction by Garey and Johnson from \textsc{Independent Set} to \textsc{Clique} \cite{DBLP:books/fm/GareyJ79}, which is a blow-up preserving SSP reduction.
Let $G = (V, E)$ be the \textsc{Independent Set} instance and $G' = (V', E')$ the \textsc{Clique} instance.
Every vertex $v \in V$ to a correspondent vertex $v' \in V'$.
Each edge $\{v, w\} \in E$ is mapped to a non-edge $\{v', w'\} \notin E'$ and every non-edge $\{v, w\} \notin E$ is mapped to an edge $\{v', w'\} \in E'$.
The threshold $k$ stays the same.
A vertex $v$ is in the \textsc{Independent Set} solution if and only if the corresponding vertex $v'$ is in the \textsc{Clique} solution.
Then, we have $f(v) = v'$ and $U_{\textit{on}} = U_{\textit{off}} = \emptyset$.

\begin{mdframed}[nobreak=true]
    \begin{description}   
        \item[]\textsc{Knapsack}\hfill\\
        \textbf{Instances:} Objects with prices and weights $\fromto{(p_1, w_1)}{(p_n, w_n)} \subseteq \N^2$, and $W, P \in \N$.\\
        \textbf{Universe:} $\fromto{(p_1, w_1)}{(p_n, w_n)} =: \U$.\\
        \textbf{Solution set:} The set of all $S \subseteq \U$ with $\sum_{(p_i, w_i) \in S} p_i \geq P$ and $\sum_{(p_i, w_i) \in S}w_i \leq W$.
    \end{description}
\end{mdframed}
The reduction from \textsc{Subset Sum} to \textsc{Knapsack} is a simple folklore result.
Each number $a$ is transformed to an object $(a,a)$ of the same price and weight.
The price and weight threshold $P$ and $W$ is set to the target value $M$ of the \textsc{Subset Sum} instance.
A number $a$ is in the \textsc{Subset Sum} solution if and only if the corresponding object $(a,a)$ is in the \textsc{Knapsack} solution.
Then, we have $f(a) = (a,a)$ and $U_{\textit{on}} = U_{\textit{off}} = \emptyset$.

\begin{mdframed}[nobreak=true]
    \begin{description}
        \item[]\textsc{Partition}\hfill\\
        \textbf{Instances:} Numbers $\fromto{a_1}{a_n} \subseteq \N$.\\
        \textbf{Universe:} $\fromto{a_1}{a_n} =: \U$.\\
        \textbf{Solution set:} The set of all sets $S \subseteq \U$ with $\sum_{a_i \in S}a_i = \sum_{a_j \notin S}a_j$.
    \end{description}
\end{mdframed}
For a reduction between \textsc{Subset Sum} and \textsc{Partition}, we use basically the same reduction as Karp's from \textsc{Knapsack} to \textsc{Partition} \cite{DBLP:conf/coco/Karp72}.
Each number $a$ is mapped to the same number $a$ in the \textsc{Partition} instance.
Furthermore, additional numbers $M+1$ and $\sum_{i} a_i + 1 - M$ are added.
A number $a$ is in the \textsc{Subset Sum} solution if and only if the corresponding number $a$ is in the \textsc{Knapsack} solution.
Then, we have $f(a) = a$, $U_{\textit{on}} = \{\sum_i a_i + 1 - M\}$, and $U_{\textit{off}} = \{ M + 1 \}$.

\begin{mdframed}[nobreak=true]
    \begin{description}
        \item[]\textsc{Two Machine Scheduling}\hfill\\
        \textbf{Instances:} Jobs with processing time $\fromto{(t_1)}{(t_n)} \subseteq \N$, threshold $T \in \N$.\\
        \textbf{Universe:} The set of jobs $\fromto{(t_1)}{(t_n)} =: \U$.\\
        \textbf{Solution set:} The set of all $J_1 \subseteq \U$ such that $\sum_{t_i \in J_1}t_i \leq T$ and $\sum_{t_j \in J_2}t_j \leq T$ with $J_2 = \U \setminus J_1$, i.e. both machines finish in time $T$.
    \end{description}
\end{mdframed}
The reduction from \textsc{Partition} to \textsc{Two-Machine-Scheduling} is a folklore reduction, which exploits the equivalence of the problems.
Each number $a$ in the \textsc{Partition} instance is mapped to a job with processing time $a$ in the \textsc{Two-Machine-Scheduling} instance.
The threshold is set to $T = \frac{1}{2} \sum_i a_i$.
A number $a$ is in the \textsc{Partition} solution if and only if the corresponding job with processing time $a$ is in the \textsc{Two-Machine-Scheduling} solution.
Then, we have $f(a) = a$, $U_{\textit{on}} = U_{\textit{off}} = \emptyset$.

\begin{mdframed}[nobreak=true]
    \begin{description}
        \item[]\textsc{Directed Hamiltonian Cycle}\hfill\\
        \textbf{Instances:} Directed Graph $G = (V, A)$.\\
        \textbf{Universe:} Arc set $A =: \U$.\\
        \textbf{Solution set:} The set of all sets $C \subseteq A$ forming a Hamiltonian cycle.
    \end{description}
\end{mdframed}
We extend the reduction from \textsc{3Sat} to \textsc{Directed Hamiltonian Path} from Arora and Barak by simply adding an arc from $t$ to $s$.
An arc $a \in A$ is in the solution \textsc{Directed Hamiltonian Path} if and only if the corresponding arc is in the \textsc{Directed Hamiltonian Cycle} solution.
Then, we have $f(a) = a$, $U_{\textit{on}} = \{(t,s)\}$, and $U_{\textit{off}} = \emptyset$.

\begin{mdframed}[nobreak=true]
    \begin{description}  
        \item[]\textsc{Undirected Hamiltonian Cycle}\hfill\\
        \textbf{Instances:} Graph $G = (V, E)$.\\
        \textbf{Universe:} Edge set $E =: \U$.\\
        \textbf{Solution set:} The set of all sets $C \subseteq E$ forming a Hamiltonian cycle.
    \end{description}
\end{mdframed}
Karp's reduction from \textsc{Directed Hamiltonian Cycle} to \textsc{Undirected Hamiltonian Cycle} \cite{DBLP:conf/coco/Karp72} is a blow-up preserving SSP reduction.
Let $G = (V, A)$ be the \textsc{Directed Hamiltonian Cycle} and $G' = (V', E')$ be the \textsc{Undirected Hamiltonian Cycle} instance.
Each vertex $v \in V$ is mapped to three vertices $v'_{in}, v', v'_{out}$ and edges $\{v'_{in}, v'\}, \{v', v'_{out}\}$.
Each arc $(v, w) \in A$ is mapped to an edge $\{v'_{out}, w'_{in}\} \in E$.
An arc $a \in A$ is in the solution \textsc{Directed Hamiltonian Path} if and only if the corresponding edge $\{v'_{out}, w'_{in}\} \in E$ is in the \textsc{Undirected Hamiltonian Cycle} solution.
Then, we have $f((v, w)) = \{v'_{out}, w'_{in}\}$, $U_{\textit{on}} = \{ \{v'_{in}, v'\}, \{v', v'_{out} \} \mid v \in V\}$, and $U_{\textit{off}} = \emptyset$.

\begin{mdframed}[nobreak=true]
    \begin{description}
        \item[]\textsc{Traveling Salesman Problem}\hfill\\ \label{apx:ssp-reduction:tsp}
        \textbf{Instances:} Complete Graph $G = (V, E)$, weight function $w: E \rightarrow \Z $, number $k \in \N$.\\
        \textbf{Universe:} Edge set $E =: \U$.\\
        \textbf{Feasible solution set:} The set of all TSP tours $T\subseteq E$.\\
        \textbf{Solution set:} The set of feasible $T$ with $w(T) \leq k$.
    \end{description}
\end{mdframed}
There is an folklore reduction from \textsc{Undirected Hamiltonian Cycle} to \textsc{Traveling Salesman Problem}, which is a blow-up preserving SSP reduction.
Let $G = (V, E)$ be the \textsc{Undirected Hamiltonian Cycle} instance and $G' = (V', E', w', k')$ the \textsc{Traveling Salesman Problem} instance.
Every vertex $v \in V$ is mapped to a corresponding vertex $v' \in V'$.
Then, all vertices are connected to form a complete graph.
The weight function $w': E' \rightarrow \Z$ is defined for all $\{v',w'\} \in E'$ as
$$
    w(\{v',w'\}) = \begin{cases}
        0, \quad \text{if} \ \{v,w\} \in E\\
        1, \quad \text{if} \ \{v,w\} \notin E
    \end{cases}
$$
At last, we set $k = 0$.
An edge $\{v,w\} \in E$ is in the solution \textsc{Undirected Hamiltonian Path} if and only if the corresponding edge $\{v',w'\} \in E'$ is in the \textsc{Traveling Salesman Problem} solution.
Then, we have $f(e) = e'$, $U_{\textit{on}} = \emptyset$ and $U_{\textit{off}} = \{\{v',w'\} \mid \{v,w\} \notin E\}$.

\begin{mdframed}[nobreak=true]
    \begin{description}
        \item[]\textsc{Directed} $k$-\textsc{Disjoint Directed Path}\hfill\\
        \textbf{Instances:} Graph $G = (V, A)$, $s_i, t_i \in V$ for $i \in \fromto{1}{k}$.\\
        \textbf{Universe:} Arc set $A =: \U$.\\
        \textbf{Solution set:} The set of all disjoint paths $(P_1, \ldots, P_k)$ such that $P_i$ is the path from $s_i$ to $t_i$ for $i \in \fromto{1}{k}$.
    \end{description}
\end{mdframed}
The following reduction from \textsc{Directed Two Disjoint Path} to \textsc{Directed} $k$-\textsc{Disjoint Directed Path} is a blow-up preserving SSP reduction.
Each vertex $v \in V$ respectively each arc $a \in A$ is mapped to a corresponding vertex $v' \in V'$ respectively arc $a' \in A'$.
Additionally, we add $k-2$ additional vertex pairs $s_i, t_i$ for $i \in \fromto{3}{k}$ connected by arcs $(s_i, t_i)$ for all $i \in \fromto{3}{k}$.
An arc $a \in A$ is in the solution \textsc{Directed Two Disjoint Path} if and only if the corresponding arc $a' \in A'$ is in the \textsc{Directed} $k$-\textsc{Disjoint Directed Path} solution.
Then, we have $f(a) = a'$ and $U_{\textit{on}} = \{(s_i, t_i) \mid i \in \fromto{3}{k}\}$, and $U_{\textit{off}} = \emptyset$.

\end{document}